\documentclass[aps,pra,reprint,superscriptaddress]{revtex4-2}
\usepackage{graphicx}
\usepackage{mathpazo}
\usepackage{dcolumn}
\usepackage{xcolor}
\usepackage{bm}
\usepackage{braket}
\usepackage{lipsum}
\usepackage{amsmath, amsthm, amsfonts, amssymb, bbm, dsfont, quantikz, ragged2e} 
\usepackage{enumitem}
\usepackage{mathtools}
\usepackage{nicematrix}
\usepackage{pifont}
\usepackage[most]{tcolorbox}
\usepackage[dvipsnames]{xcolor}
\usepackage{nicematrix} 
\usepackage[colorlinks,linkcolor=blue,urlcolor=blue, citecolor=blue]{hyperref}

\usepackage{tikz}
\usetikzlibrary{quantikz2}

\usetikzlibrary{decorations}

\usepackage{algorithm}
\usepackage{algpseudocode}

\usepackage{centernot}

\usepackage{tikz}
\usetikzlibrary{quantikz2}

\newtheorem{theorem}{Result}
\newtheorem{lemma}{Lemma}

\newtheorem{corollary}{Corollary}

\newcommand{\ketbra}[1]{\vert #1 \rangle \langle #1 \vert}
\newcommand{\ketbratwo}[2]{\vert #1 \rangle \langle #2 \vert}
\newcommand{\onenorm}[1]{ \| #1 \|_1}

\definecolor{teal}{RGB}{42, 157, 143}
\definecolor{yellow}{RGB}{233, 196, 106}
\definecolor{red}{RGB}{210, 66, 66}
\definecolor{lred}{RGB}{222,94,100}
\definecolor{lblue}{RGB}{179, 235, 242 }

\newcommand{\fref}[1]{\textcolor{blue}{\hyperref[#1]{Fig.$\,$\bfseries\ref{#1}}}}
\newcommand{\lemref}[1]{\textcolor{blue}{\hyperref[#1]{Lemma$\,$\bfseries\ref{#1}}}}
\newcommand{\thmref}[1]{\textcolor{blue}{\hyperref[#1]{Thm.$\,$\bfseries\ref{#1}}}}

\usepackage{silence}
\newcommand{\drawCustomCircletwo}[8]{%
    \tikz[baseline=(circle_node.base)] {
        \draw[draw=#4, fill=#5, fill opacity=#6, line width=#7]
            (0,0) circle (#3);
        \node[circle, inner sep=0pt, outer sep=0pt] (circle_node) at (0,0) {#8};
    }%
}

\tikzset{
    every node/.style={font=\small},
    arrow/.style={-{Stealth}, thick},
    implies/.style={->, double equal sign distance, thick}
}

\newtcolorbox[auto counter]{mybox}[2][]{
    breakable = false,
    enhanced,
    sharp corners,
    colback=violet!3!white,
    colframe=violet!40!white,
    fonttitle=\bfseries,
    title={\centering \strut #2}, 
    enlarge bottom at break by=5mm,
    enlarge top at break by=5mm,
    overlay first={%
        \draw[black, line width=0.5mm](frame.south west)--(frame.south east);
        \node[anchor=north east] at (frame.south east) {continued on next page};
    },
    overlay middle={%
        \draw[black, line width=0.5mm](frame.south west)--(frame.south east);
        \draw[black, line width=0.5mm](frame.north west)--(frame.north east);
        \node[anchor=north east] at (frame.south east) {continued on next page};
        \node[anchor=south west] at (frame.north west) {continued from next page};
    },
    overlay last={%
        \draw[black, line width=0.5mm](frame.north west)--(frame.north east);
        \node[anchor=south west] at (frame.north west) {continued from last page};},
    #1
}

\let\oldaddcontentsline\addcontentsline
\renewcommand{\addcontentsline}[3]{}

\begin{document}

\preprint{APS/123-QED}

\title{Quantum State Discrimination With Stabilizer Circuits}

\author{Benjamin Stratton}
\email{benstratton6@live.co.uk}
\affiliation{International Institute of Physics, Federal University of Rio Grande do Norte, 59078-970, Natal, Brazil}

\author{Santiago Zamora}
\affiliation{International Institute of Physics, Federal University of Rio Grande do Norte, 59078-970, Natal, Brazil}
\affiliation{Departamento de Física Teórica e Experimental, Federal University of Rio Grande do Norte, 59078-970 Natal, Brazil}

\author{Rafael Chaves}
\affiliation{International Institute of Physics, Federal University of Rio Grande do Norte, 59078-970, Natal, Brazil}

\date{\today}

\begin{abstract}
The task of quantum state discrimination provides an operational characterization of distinguishability and plays a central role in quantum information science. Since the achievable success probability in quantum state discrimination depends both on the states being discriminated and on the allowed measurements, it is natural to study discrimination under physically motivated measurement constraints. Here, we investigate minimum-error quantum state discrimination under measurements implementable by both fixed and adaptive stabilizer circuits. Given arbitrary stabilizer-state ancillas, we show that fixed stabilizer circuits provide no additional discrimination power, whereas adaptive circuits do. Then, focusing on single-qubit systems, we derive analytical expressions for the success probability across both circuit classes when supplied with arbitrary (non-stabilizer) ancillas, showing that the performance gap between fixed and adaptive circuits persists.
We further consider discrimination with non-stabilizerness incorporated directly into the measurement operators. Here, the optimization is formulated as a semidefinite program, and analytical bounds interpolating between the stabilizer and Helstrom limits for qubits are derived. Finally, we illustrate applications in quantum random access codes and bounds on unitary synthesis fidelity with a finite number of magic states.
\end{abstract}

\maketitle

\section{Introduction}
The amount of classical information that can be extracted from a quantum system is fundamentally limited by the distinguishability of quantum states~\cite{Helstrom1976}; unlike classical states, nonorthogonal quantum states cannot be perfectly distinguished. 
This is one of the defining features of quantum mechanics, underpinning a broad range of quantum information tasks, including quantum communication~\cite{Bae_2015}, quantum cryptography~\cite{Bennett1984, Pirandola2018}, quantum hypothesis testing~\cite{Audenaert2008}, quantum sensing~\cite{zhuang2017optimum}, and quantum channel discrimination~\cite{Zhuang2020}. 

It is often useful to formalize state distinguishability as a task, known as quantum state discrimination (QSD)~\cite{Bae_2015}. In the simplest scenario, a referee prepares one of two states, $\rho$ or $\sigma$, with equal probability and sends them to a player. The player must then identify which state they were given. The optimal success probability in the minimum-error setting is given by the Helstrom bound~\cite{Helstrom1976},
\begin{equation}
    P^*_{\rm suc}(\Delta) = \frac{1}{2} + \frac{1}{4}\|\Delta\|_1,
\end{equation}
where $\Delta=\rho-\sigma$. The optimal measurement is then a projection onto the positive eigenspace of $\Delta$. Importantly, both the optimal strategy and the achievable performance depend only on the operator $\Delta$. 

Distinguishability is fundamentally a property of the measurements available as much as it is a property of the states themselves. In realistic quantum technologies, measurements are often constrained by the underlying hardware or by fault-tolerance requirements. Understanding how such physical restrictions limit distinguishability is therefore of practical importance. Resource theories~\cite{RevModPhys.91.025001, Gour_2025} provide a natural framework for addressing such questions: they identify operations that are freely implementable (allowed) under some physical restrictions, and quantify the operational advantages enabled by objects beyond these constraints. Among these, the resource theory of non-stabilizerness (or “magic”)~\cite{PhysRevA.71.022316, Veitch_2014, PhysRevLett.118.090501} has been shown to play a central role in fault-tolerant quantum computation~\cite{Gottesman1998, AaronsonGottesman2004, gottesman1997stabilizer, Campbell2017}, among other applications \cite{PhysRevLett.132.210602, Cao_2025, zamora2025semi, hwfq-ytkb, loio2026quantumstatedesignsmagic}. Its allowed operations, known as stabilizer circuits, consist of Clifford unitaries, computational-basis measurements, and stabilizer-state preparation. While these operations admit efficient classical simulation~\cite{Gottesman1998, AaronsonGottesman2004}, they are insufficient for universal quantum computation, which requires the consumption of non-stabilizer resources. Although the computational role of non-stabilizerness~\cite{PhysRevA.71.022316, Beverland_2020, Campbell2017, PRXQuantum.2.010345} and its characterization~\cite{Veitch_2014, PhysRevLett.118.090501, PhysRevLett.128.050402, Sonya_Tarabunga_2025, 2s3j-t22p, varela2026predictingmagicmeasurements, PRXQuantum.3.020333} have been extensively studied, its operational role in restricting quantum measurements remains largely unexplored \cite{kwon2025nonstabilizernessmagicclassicallysimulatable}.

In this Letter, we investigate minimum-error quantum state discrimination when the player is restricted to only stabilizer circuits. We show that, for fixed stabilizer circuits, access to arbitrary stabilizer ancillas cannot increase the minimum-error success probability. In contrast, this is not true for adaptive stabilizer circuits. For qubit systems, we then derive analytical expressions for the minimum-error  success probability with arbitrary qubit ancillas for both classes of circuits, showing that ancillary non-stabilizerness can ---in some cases--- improve discrimination with adaptive circuits, while again never providing an advantage for fixed circuits.

We further analyze discrimination when non-stabilizerness is incorporated directly into the measurement operators, showing that the success probability admits a semidefinite programming formulation. Analytical bounds for qubits that continuously interpolate between stabilizer restrictions and Helstrom are then derived. Finally, we illustrate the operational implications of our findings in quantum random access codes and in bounds on the achievable fidelity of unitary synthesis with a finite number of magic states.

\section{Framework} 
Let $\mathcal{H}_{n}=(\mathbb{C}^{2})^{\otimes n}$ be the $n$-qubit Hilbert space of dimension $d=2^n$, and $\mathcal{D}(\mathcal{H}_n)$ the set of density operators on $\mathcal{H}_n$. Then, let $\mathcal{P}_{n}$ be the set of $n$-qubit Pauli strings --- these are all the $n$-fold tensor product of the qubit Pauli operators $\mathbb{I}, X, Y, Z$, where $\vert \mathcal{P}_n \vert = 4^{n}$. The set of unitary operators that map Pauli strings to Pauli strings are the so-called Clifford unitaries, $\mathcal{C}_n$. The $n$-qubit pure stabilizer states are then those states that can be generated from $ \ket{0}^{\otimes n}$ and $\mathcal{C}_n$. Equivalently, a state $\ket{\psi}$ is a stabilizer state if there exists a subgroup, \hbox{$S \subset \mathcal{P}_{n}$}, of $d$ mutually commuting Pauli-strings such that \hbox{$\bra{\psi} P \ket{\psi} = \pm 1 \forall~ P \in S$} and \hbox{$\bra{\psi} P \ket{\psi} = 0 ~\forall~P \in \mathcal{P}_n \setminus S$}. The free states of the resource theory of non-stabilizerness are the convex hull of the pure stabilizer states, $\hbox{${\rm STAB}(n) \coloneq {\rm conv}\big\{ C \ket{0}^{\otimes n} : C \in \mathcal{C}_n \big\}$}$, the allowed operations are stabilizer circuits. Where necessary, we will denote the stabilizer set over any $n$ as ${\rm STAB}$. All measurements are modelled by positive operator valued measure (POVMs) \cite{nielsen_chuang_2010}. 

On an $n$ qubit state, without appending a stabilizer state, POVMs of the form 
\begin{equation*}
    \big\{ \Pi^U_a = U^\dagger \ketbra{a} U \big\}_{a}~,
\end{equation*}
can be measured by a stabilizer circuit (by applying $U$ then measuring in the computational basis), where $U \in \mathcal{C}_n$ and $a \in \{0,1\}^n$ is an n-bit-string.
On an $n$-qubit state (system $A$) with a $k$ qubit state $\omega$ appended (system $B$), POVMs of the form 
\begin{equation*}
    \big\{ M^{U, \omega}_a = \textrm{tr}_B \big[ U^\dagger \ketbra{a} U (\mathbb{I}_A \otimes \omega_B) \big] \big\}_{a}~,
\end{equation*}
can be measured, where now $U \in \mathcal{C}_{n+k}$ and $a \in \{0,1\}^{n+k}$ is an ($n + k$) bit-string. Under stabilizer circuits alone, it must be that $\omega \in {\rm STAB}(k)$. However, the case that $\omega \notin {\rm STAB}$, such that non-stabilizer circuits can be simulated, will later be investigated.  
  
The measurements described above are \emph{fixed}, i.e., they are fully specified before the circuit begins. We also consider \emph{adaptive stabilizer circuits}, where intermediate measurement outcomes determine the subsequent operations. We here specifically consider a measurement on an ancillary system selecting which measurement is subsequently performed on the primary system. For an $n$-qubit state with a $k$-qubit ancilla $\omega$, this corresponds to POVMs of the form
\begin{equation*}
    \big\{ T^{U, \omega, \{V\}}_{a,b} = \textrm{tr}_B\big[ U^\dagger \big( V_b^\dagger \ketbra{a} V_b \otimes \ketbra{b} \big) U (\mathbb{I}_A \otimes \omega_B) \big] \big\}_{a,b}~,
\end{equation*}
where $a \in \{0,1\}^n, b \in \{0,1\}^k$ are $n$ and $k$ bit-strings output from computational basis measurements on the primary and ancillary system respectively, $U \in \mathcal{C}_{n+k}$, and $V_b \in \mathcal{C}_n~\forall~b \in \{0,1\}^k$. See SM.~\ref{SM: Measurements in the Resource Theory of Non-stabilizerness} for a more detailed overview of measurements from fixed and adaptive stabilizer circuits.

\begin{figure}[t!]
    \centering
    \includegraphics[width=0.9\linewidth]{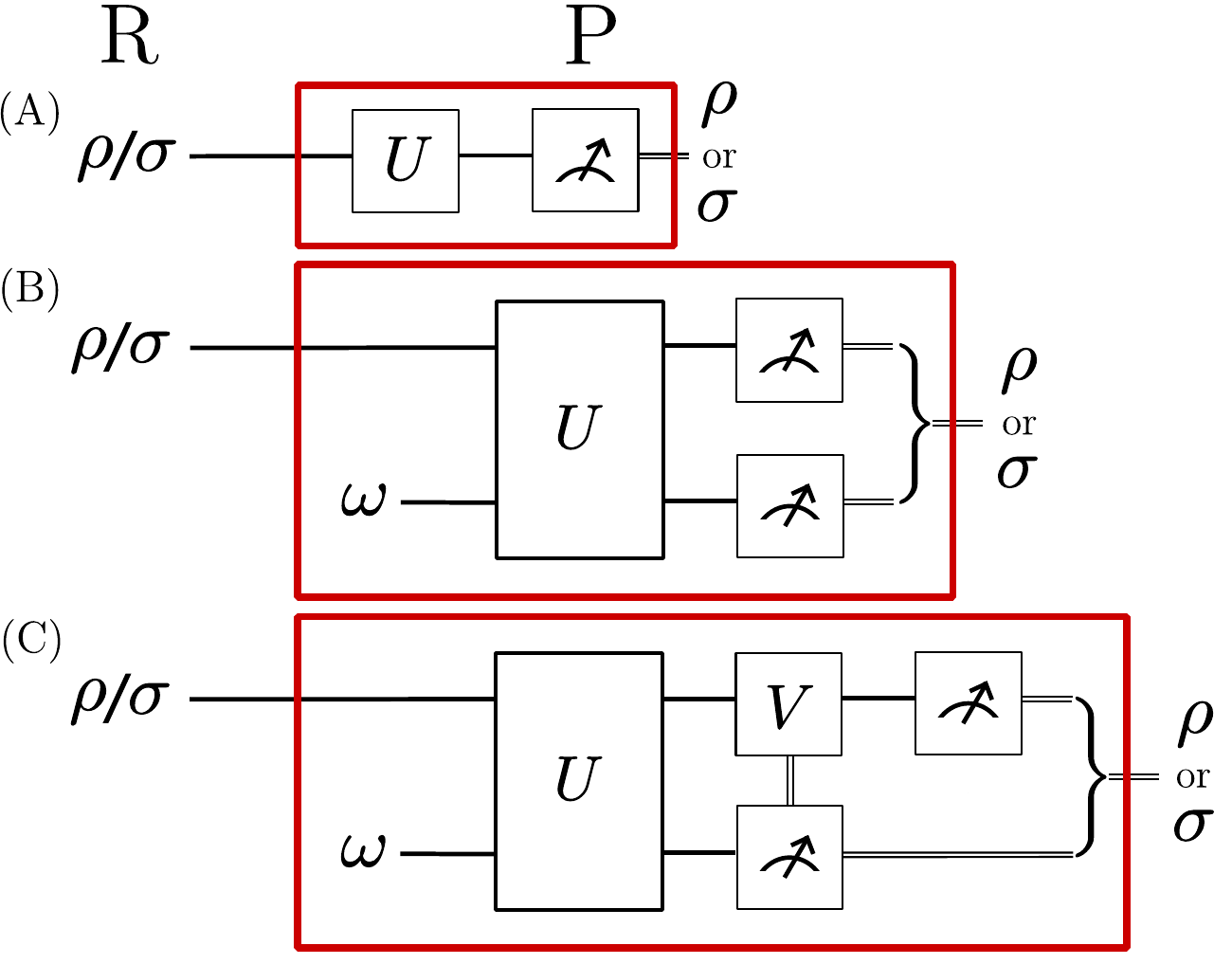}
    \caption{A graphical depiction of the task of QSD, where the referee (R) gives the player (P) either the state $\rho$ or the state $\sigma$, under (A) fixed stabilizer circuits with no ancilla ($P^{\rm stab}_{\rm suc} (\Delta)$), (B) fixed stabilizer circuits with an ancilla $(P^{\rm stab,f}_{\rm suc}(\Delta, \omega))$, and (C) adaptive stabilizer circuits $(P^{\rm stab,a}_{\rm suc}(\Delta, \omega))$,}\label{figure: schematic of the task}
\end{figure}

\section{QSD Under Stabilizer Circuits}
We now evaluate the minimum-error success probability for QSD when the player is restricted to fixed or adaptive stabilizer circuits, while allowing the referee to prepare arbitrary quantum states. As permitted by stabilizer circuits, the player is allowed to append an arbitrary stabilizer state to the unknown input. While in the unrestricted setting ancillary states cannot improve the minimum-error success probability, this need not hold under measurement restrictions. We therefore first investigate the success probability as a function of the appended ancillary state. 

Assume that $\Delta$ is the difference of two $n$-qubit density operators, with each state given to the player with equal probability, and let $\omega \in \mathcal{D}(\mathcal{H}_k)$ for some $k$. Then the following figures of merit are considered 
\begin{align*}
    P^{\rm stab}_{\rm suc} (\Delta) &= \frac{1}{2} + \max_{U \in \mathcal{C}_n} \frac{1}{4} \sum_{a \in \{0,1\}^{n}} \big\vert \textrm{tr} \big[ \Pi_a^U (\Delta) \big] \big\vert, \\
    P^{\rm stab, f}_{\rm suc} (\Delta, \omega) &= \frac{1}{2} + \max_{U \in \mathcal{C}_{n+k}} \frac{1}{4} \sum_{a \in \{0,1\}^{n+k}} \big\vert \textrm{tr} \big[ M^{U,\omega}_a (\Delta) \big] \big\vert, \\
    P^{\rm stab, a}_{\rm suc} (\Delta, \omega) &= \frac{1}{2} + \max_{ \substack{U \in \mathcal{C}_{n+k} \\ \{V\} \in \mathcal{C}_n}} \frac{1}{4} \sum_{\substack{b \in \{0,1\}^k \\ a \in \{0,1\}^n}} \big\vert \textrm{tr} \big[ T^{U,\omega, \{V\}}_{a,b} (\Delta) \big] \big\vert,
\end{align*}
which are the minimum error success probability under fixed stabilizer circuits without an ancillary system, fixed stabilizer circuits with $\omega$ in the ancillary system, and adaptive stabilizer circuits with $\omega$ in the ancillary system, respectively. 
We note that, due to each figure of merit being convex, each will be maximised when $\omega$ is a pure-state. See Fig.~\ref{figure: schematic of the task} for a schematic of the tasks, and see Appendix~\ref{appendix main text: QSD With Coarse Grained Measurement} for details of how these figures of merit include an optimisation over all possible assignments of POVM elements to states.

Now, as a first result, we show in SM.~\ref{appendix: lemma: STABs don't increase sucess prob} that for all $\Delta$ a stabilizer state in the ancilla can never provide an advantage in QSD under fixed stabilizer circuits. 
\begin{theorem}\label{theorem: STABs don't increase sucess prob}
    $P_{\rm suc}^{\rm stab}(\Delta) = P_{\rm suc}^{\rm stab, f}(\Delta, \omega) ~\forall~\omega \in {\rm STAB}$. 
\end{theorem}
In appendix~\ref{appendix: counter example} we give a counter example to prove that this does not in general extend to adaptive circuits. Hence, adaptivity alone can be seen to provide an advantage in QSD when restricted to perform only stabilizer circuits.

\subsection{Qubit Discrimination}
In this case, we consider \hbox{$\Vec{\Delta} \coloneq n_\rho - n_\sigma$}, which is the difference of the Bloch vectors of qubits $\rho$ and $\sigma$. 
Under a restriction to only stabilizer circuits without an ancilla, the achievable minimum error success probability is shown in SM.~\ref{appendix: proof of qubit fixed stab measurement no ancilla} to be given by the following expression. 
\begin{lemma}\label{lemma: qubit fixed stab measurement no ancilla}
        $P_{\rm suc}^{\rm stab}(\Vec{\Delta}) = \frac{1}{2} + \frac{1}{4} \vert \vert \Vec{\Delta} \vert \vert_\infty,$
    where $\vert \vert \Vec{\Delta} \vert \vert_\infty$ is the vector infinity norm i.e., the largest component of the absolute value of $\Vec{\Delta}$. 
\end{lemma}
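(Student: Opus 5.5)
The plan is to reduce the optimisation over single-qubit Cliffords $U \in \mathcal{C}_1$ to an optimisation over the six stabilizer basis measurements. For $n=1$ the POVM $\{\Pi^U_a = U^\dagger \ketbra{a} U\}_{a\in\{0,1\}}$ is a projective measurement onto the eigenbasis of the Pauli operator $U^\dagger Z U$. Since Cliffords map Paulis to Paulis, this operator is $\pm P$ for some $P \in \{X,Y,Z\}$. Up to relabelling of outcomes, which does not change $\sum_a |\textrm{tr}[\Pi^U_a \Delta]|$, every such measurement is therefore $\{(\mathbb{I} \pm P)/2\}$. Conversely, each of $X, Y, Z$ is reached by some Clifford (identity, Hadamard, and $HS^\dagger$ or similar), so the maximisation runs exactly over $P \in \{X,Y,Z\}$.

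Next I will compute the objective for a fixed $P$. Write $\Delta = \rho - \sigma = \tfrac{1}{2}\vec{\Delta}\cdot\vec{\sigma}$, which is traceless. Then
\begin{equation*}
\textrm{tr}\Big[\tfrac{\mathbb{I}\pm P}{2}\,\Delta\Big] = \pm\tfrac{1}{2}\textrm{tr}[P\Delta] = \pm \tfrac{1}{2}\Delta_P ,
\end{equation*}
where $\Delta_P$ is the component of $\vec{\Delta}$ along the axis $P$. This uses $\textrm{tr}[\Delta]=0$ and $\textrm{tr}[P\,\vec{\Delta}\cdot\vec{\sigma}] = 2\Delta_P$. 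Summing the absolute values over the two outcomes gives $|\Delta_P|$.

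Maximising over $P\in\{X,Y,Z\}$ then yields $\max_P |\Delta_P| = \|\vec{\Delta}\|_\infty$. Substituting into the definition of $P^{\rm stab}_{\rm suc}$ gives $\frac12 + \frac14\|\vec{\Delta}\|_\infty$.

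The only step needing care is the first. I must make sure that the absolute-value form of the figure of merit already accounts for all assignments of outcomes to guesses, including the trivial assignment where both outcomes map to the same guess, which gives $1/2$. The appendix on coarse-grained measurements justifies this. I must also confirm that the phases and signs introduced by Cliffords do not matter, which holds because only the unordered pair of projectors $\{(\mathbb{I}\pm P)/2\}$ enters the sum.
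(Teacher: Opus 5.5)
Your proposal is correct and follows essentially the same route as the paper: both reduce the optimisation over $\mathcal{C}_1$ to measuring in the eigenbasis of $X$, $Y$ or $Z$, and then use $\textrm{tr}\big[\tfrac{1}{2}(\mathbb{I}\pm P)\Delta\big] = \pm\tfrac{1}{2}\textrm{tr}[P\Delta]$, where $\textrm{tr}[P\Delta]$ is the corresponding component of $\vec{\Delta}$, to pick out its largest component. The only difference is cosmetic: the paper maximises over the six single-qubit stabilizer states $\ket{\psi}$ rather than over Pauli axes.
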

Lemma~\ref{lemma: qubit fixed stab measurement no ancilla} makes intuitive sense: the player chooses the direction of $\Vec{\Delta}$ that is largest and measures in that direction (qubit stabilizer measurements are Pauli $X$, $Y$ or $Z$ measurements). Moreover, it can be seen that $P_{\rm suc}^{\rm stab}(\Vec{\Delta})=1$ if and only if $\Vec{\Delta}$ is the difference of two pure, orthogonal stabilizer states. 

We now introduce an arbitrary ancillary qubit, which could include a finite amount of non-stabilizerness, and quantify its effect on the minimum-error success probability. Interestingly, as shown in SM.~\ref{appendix: result: for qubits, not about of resource is useful}, no amount of non-stabilizerness improves the success probability under fixed measurements for any $\Vec{\Delta}$.
\begin{theorem}\label{result: for qubits, not about of resource is useful}
$P_{\rm suc}^{\rm stab}(\Vec{\Delta}) = P_{\rm suc}^{\rm stab, f}(\Vec{\Delta}, \omega)~\forall~\omega \in \mathcal{D}(\mathcal{H}_1)$.
\end{theorem}
Hence, for a single ancillary qubit, one must consider adaptive measurements to gain any advantage. The following result, proved in SM.~\ref{appendix: appending any qubit state with feedfoward}, captures the maximum success probability achievable given access to an arbitrary qubit ancilla and adaptive measurements. 
\newpage
\begin{theorem}\label{result: appending any qubit state with feedfoward}
    Let $n_\omega$ be the Bloch vector of $\omega \in \mathcal{D}(\mathcal{H}_1)$ and $\vert \vert \Vec{v} \vert \vert_{2-{\rm ky}}$ the $2$nd vector KY-Fan norm i.e., the sum of largest two components of the absolute value of $\Vec{v}$, then 
    \begin{equation*}
        P_{\rm suc}^{\rm stab,~ a}(\Vec{\Delta}, \omega) =  \frac{1}{2} + \frac{1}{4} \max \big\{\vert \vert \Vec{\Delta} \vert \vert_\infty, \vert \vert \Vec{\Delta} \vert \vert_{2-\rm{ky}} \vert \vert \Vec{n}_\omega \vert \vert_{2-\rm{ky}}{/2} \big\}
    \end{equation*}
\end{theorem}

\begin{figure}[t!]
    \centering
    \includegraphics[width=0.85\linewidth]{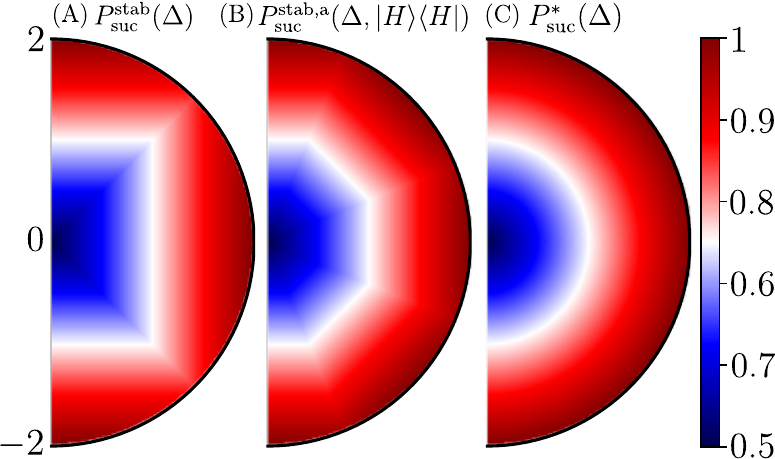}
    \caption{The minimum error success probability for all $\Vec{\Delta}$ in the $xz$-plane for (A) stabilizer circuits, (B) stabilizer circuits with a single $H$-type magic state and adaptive measurements, and (C) unrestricted measurements (Helstrom's).   }\label{figure: semi-circle figure} 
\end{figure}

From Result~\ref{result: appending any qubit state with feedfoward} it can be seen that a $H$-type magic state, $\ketbra{H} = \frac{1}{2}\big(\mathbb{I} + \frac{1}{\sqrt{2}}(X + Z)\big)$ for example \cite{PhysRevA.71.022316}, is always the optimal ancillary state, as such states satisfy $\vert \vert  \Vec{n}_\omega \vert \vert_{2-{\rm ky}}={\sqrt{2}}$ which is the maximum possible value. 
Operationally, this can be seen as using a single $T$-gate in setting the measurement basis on the unknown system, implemented via a $H$-type magic state and magic state injection. However, even with $\omega=\ketbra{H}$, Result~\ref{result: appending any qubit state with feedfoward} shows that there exists $\vec \Delta$ for which adaptive circuits do not lead to an advantage. For such $\vec \Delta$, setting the optimal measurement basis therefore needs only Clifford unitaries. See SM.~\ref{appendix: When does Non-stabilizerness provide an advantage} for an analysis of for what $\vec \Delta$ non-stabilizerness provides an advantage.

In Fig.~\ref{figure: semi-circle figure} a visual comparison of Lemma~\ref{lemma: qubit fixed stab measurement no ancilla}, Result~\ref{result: appending any qubit state with feedfoward} and Helstrom's bound is given for $\Vec{\Delta}$ confined to the $xz$-plane for positive $x$ $\big($using $P^{\rm stab}_{\rm suc}(\vec \Delta) = P^{\rm stab}_{\rm suc}(-\vec \Delta)\big)$ with $\omega=\ketbra{H}$ (or any other $H$-type magic state). Each point in the semi-circles corresponds to a valid $\Vec{\Delta}$, with the colour determining the achievable minimum error success probability under the respective measurement restrictions. It can be seen that with access to a single $H$-type magic state the Helstrom's bound begins to be recovered. In appendix~\ref{appendix: Recovering Helstrom's Bound} we assess for what orthogonal pure states Helstrom's bound is recovered given access to a single $H$-type magic state. 

Finally, we note that as $\Vert \vec n \Vert_{2-{\rm ky}} \leq 1$ for all qubits in ${\rm STAB(1)}$, Result~\ref{theorem: STABs don't increase sucess prob} can here be extended to include adaptive stabilizer circuits.
\begin{corollary}\label{corollary: STABs don't increase sucess prob}
    $P_{\rm suc}^{\rm stab}(\vec \Delta) = P_{\rm suc}^{\rm stab, f}(\vec \Delta, \omega) = P_{\rm suc}^{\rm stab, a}(\vec \Delta, \omega)$ \newline$~ ~\forall~\omega \in {\rm STAB}(1)$. 
\end{corollary}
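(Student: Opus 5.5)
The corollary follows by combining Result~\ref{theorem: STABs don't increase sucess prob} (or Result~\ref{result: for qubits, not about of resource is useful}) with Result~\ref{result: appending any qubit state with feedfoward}. The first equality is immediate: for $n=1$ and $\omega\in{\rm STAB}(1)\subset\mathcal{D}(\mathcal{H}_1)$, Result~\ref{theorem: STABs don't increase sucess prob} gives $P_{\rm suc}^{\rm stab}(\vec\Delta)=P_{\rm suc}^{\rm stab,f}(\vec\Delta,\omega)$. It therefore remains to show $P_{\rm suc}^{\rm stab,a}(\vec\Delta,\omega)=P_{\rm suc}^{\rm stab}(\vec\Delta)$, which by Lemma~\ref{lemma: qubit fixed stab measurement no ancilla} equals $\frac12+\frac14\|\vec\Delta\|_\infty$.

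Result~\ref{result: appending any qubit state with feedfoward} gives
\begin{equation*}
P_{\rm suc}^{\rm stab,a}(\vec\Delta,\omega)=\tfrac12+\tfrac14\max\big\{\|\vec\Delta\|_\infty,\ \|\vec\Delta\|_{2\text{-ky}}\|\vec n_\omega\|_{2\text{-ky}}/2\big\}.
\end{equation*}
So it suffices to show that the second entry never exceeds the first. This uses two inequalities.

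\textbf{Ancilla bound.} We need $\|\vec n_\omega\|_{2\text{-ky}}\le 1$ for $\omega\in{\rm STAB}(1)$. The single-qubit stabilizer polytope is the octahedron $\{\vec n:|n_x|+|n_y|+|n_z|\le 1\}$, the convex hull of $\pm\hat x,\pm\hat y,\pm\hat z$. Hence $\|\vec n_\omega\|_{2\text{-ky}}\le\|\vec n_\omega\|_1\le 1$.

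\textbf{Ky-Fan versus max norm.} We need $\|\vec\Delta\|_{2\text{-ky}}\le 2\|\vec\Delta\|_\infty$. This holds trivially, since the sum of the two largest absolute components is at most twice the largest.

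Combining the two bounds gives
\begin{equation*}
\|\vec\Delta\|_{2\text{-ky}}\,\|\vec n_\omega\|_{2\text{-ky}}/2\le\|\vec\Delta\|_\infty,
\end{equation*}
so the maximum equals $\|\vec\Delta\|_\infty$. Therefore $P_{\rm suc}^{\rm stab,a}(\vec\Delta,\omega)=\frac12+\frac14\|\vec\Delta\|_\infty=P_{\rm suc}^{\rm stab}(\vec\Delta)$, which closes the chain of equalities.

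There is no real obstacle; the only substantive input is the octahedron characterization of ${\rm STAB}(1)$. If needed, it can be checked directly: the six pure stabilizer states have Bloch vectors $\pm\hat e_i$, and $\|\cdot\|_1$ is convex, so every convex combination has $\|\vec n\|_1\le 1$.
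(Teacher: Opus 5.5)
Your proposal is correct and is essentially the paper's own argument. The paper likewise gets the fixed-circuit equality from Result~\ref{theorem: STABs don't increase sucess prob}, and obtains the adaptive one by substituting $\|\vec n_\omega\|_{2\text{-ky}}\le 1$ for $\omega\in{\rm STAB}(1)$ into Result~\ref{result: appending any qubit state with feedfoward}; you additionally make explicit the octahedron bound $\|\vec n_\omega\|_{2\text{-ky}}\le\|\vec n_\omega\|_1\le 1$ and the inequality $\|\vec\Delta\|_{2\text{-ky}}\le 2\|\vec\Delta\|_\infty$, which the paper leaves implicit.
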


\section{QSD with Non-stabilizer Measurements}
We now consider {introducing non-stabilizerness directly into the POVM elements, as opposed to introducing it via an ancillary system. For a given two outcome discriminating POVM, we quantify its non-stabilizerness by normalising the POVM elements and then using the trace distance \hbox{$\mathcal{M}(\rho) = \min_{\sigma \in \mathrm{STAB}} \frac{1}{2} \| \rho - \sigma \|_1$}~\cite{Cao_2025, palhares_2026}.
See SM~\ref{sm: SDP for non-stabilizer measurements} for more details. }

{When using a given POVM $\{M_0^\mu,M_1^\mu\}$, with non-stabilizerness $\mathcal{M}(M_i^\mu)$ restricted to $\mu \in [0,\mu_{max}]$, the optimal minimum error success probability is captured by:}
\begin{align}
&P_{\text{suc}}^\mu(\Delta) = \max_{M^\mu_0, M^\mu_1} \frac{1}{2} + \frac{1}{2} \text{Tr}(\Delta M^\mu_0),\\
&\text{s.t.}\quad\mathcal{M}(M^\mu_i)\leq \mu \quad \text{for} \quad i\in\{0,1\}\label{eq:opt_magic_constraits}.
\end{align}
In SM \ref{sm: SDP for non-stabilizer measurements}, we show that finding $P_{\text{suc}}^\mu(\Delta)$ can be cast as a semidefinite program, where the overall computational complexity is governed by the size of the stabilizer polytope description used to encode the constraints~\eqref{eq:opt_magic_constraits}. 

In the specific case of qubits, we derive an analytical lower bound on $P_{\text{suc}}^\mu(\Delta)$. To characterize the solution, it is necessary to partition the optimization: let $\vec s$ denote the Bloch vector of the closest stabilizer state to the optimal measurement, and define the subspace $\mathcal{A}$ as the subspace spanned by the coordinates for which $s_i>0$, where $s_i$ is the $i$th element of $\vec s$. Then, $\text{dim}(\mathcal{A})\in\{1,2,3\}$ will indicate whether $\vec{s}$ lives in a vertex, edge or face of the stabilizer polytope. Under the no-leakage ansatz, it is then assumed that the optimal measurement vector belongs to this same subspace $\mathcal{A}$. The optimisation can then be solved separately within each  possible subspace $\mathcal{A}$, with a bound on $P_{\text{suc}}^\mu(\Delta)$ obtained by maximizing over all the feasible subspaces $\mathcal{A}$. This leads to the following result, which is proved in SM.~\ref{appendix: QSD for qubits with magic measurements}.
\begin{theorem}\label{result: QSD for qubits with non-stabilizer measurements}For measurements $\{M_0^\mu,M_1^\mu\}$ with $\mathcal{M}(M_i^\mu) \leq \mu~\forall~i$, the success probability is tightly bounded below by:
\begin{equation}
P_{\mathrm{suc}}^\mu \geq \max_{\mathcal A}P^\mathcal{A}_{\mathrm{suc}}(\mu),
\end{equation}
where {$P^\mathcal{A}_{\mathrm{suc}}(\mu)$ denotes the optimal success probability when assuming that the closest stabilizer state has active support on the subspace $\mathcal{A}$}:
\begin{equation}\label{eq: Psucc_Nostab_m_qbits_mt}
P^\mathcal{A}_{\mathrm{suc}}(\mu) = 
\begin{cases}
\frac{1}{2} + \frac{1}{4N}(S+C_N\vert \vert\vec\Delta\vert \vert_{\mathcal{A},1}), & \text{if } C_N < \frac{\vert \vert\vec\Delta\vert \vert_{\mathcal{A},1}}{\vert \vert\vec\Delta\vert \vert_{\mathcal{A},2}} \\
\frac{1}{2} + \frac{1}{4}\vert \vert\vec{\Delta}\vert \vert_{\mathcal{A},2}, & \text{if } C_N \ge \frac{\vert \vert\vec\Delta\vert \vert_{\mathcal{A},1}}{\vert \vert\vec\Delta\vert \vert_{\mathcal{A},2}}
\end{cases},
\end{equation}
where $S=\sqrt{(N\vert \vert\vec\Delta\vert \vert_{\mathcal{A},2}^2 - \vert \vert\vec\Delta\vert \vert_{\mathcal{A},1}^2)(N-C_N^2)}$; ~\hbox{$\vert \vert\cdot\vert \vert_{\mathcal{A},p}$} denotes the vector $p$-norm restricted to $\mathcal{A}$; $N = \text{dim}(\mathcal A)$; and $C_N = 1 +2\sqrt{N}\mu$ defines the surface with non-stabilizerness $\mu$ restricted to $\mathcal{A}$. 
\end{theorem}
In Figure~\ref{figure:Psucc_mu} we compare the analytical expression of Result~\ref{result: QSD for qubits with non-stabilizer measurements} with the SDP (detailed in SM \ref{sm: SDP for non-stabilizer measurements}) for antipodal states in representative geodesics on the Bloch sphere. Namely, we consider the success probability for various $\mu$  when performing discrimination between states ranging from $\ket{T}$ to $\ket{0}$ ($\ket{T}$ to $\ket{H}$) on a Bloch sphere geodesic with their respective orthogonal complement (see Appendix~\ref{appendix: Geodesics} for details). Along the symmetry axes of the stabilizer octahedron (for example, $\ket{T}\rightarrow \ket{0}$), the analytical solution exactly reproduces the SDP optimum, confirming that the no-leakage assumption is tight. In contrast, for the $\ket{T}\rightarrow \ket{H}$ geodesic, the SDP exploits leakage into inactive dimensions, so Result~\ref{result: QSD for qubits with non-stabilizer measurements} becomes a strict lower bound.

\begin{figure}[t!]
    \centering
    \includegraphics[width=1\columnwidth]{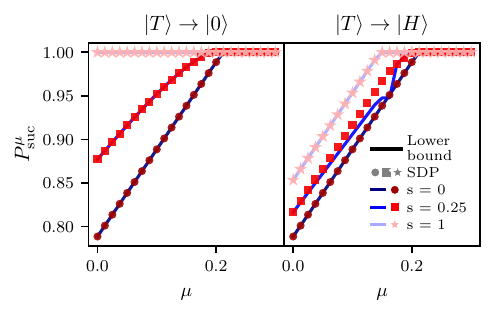}
    \caption{\textbf{Optimal success probability $P_{\mathrm{suc}}^\mu$ versus measurement non-stabilizerness $\mu$.} Discrimination of antipodal states along trajectories $\ket{T} \to \ket{0}$ (left, symmetric) and $\ket{T} \to \ket{H}$ (right, asymmetric) for $s \in \{0, 0.25, 1\}$ on the Bloch sphere (see Appendix~\ref{appendix: Geodesics}). The analytical bound (lines) saturates the exact SDP maximum (markers) along the stabilizer octahedron's symmetry axes (left). Conversely, for asymmetric trajectories (right), the SDP exploits measurement leakage to strictly exceed the analytical bound at intermediate $\mu$. } \label{figure:Psucc_mu}  
\end{figure}
When $\mu$ is sufficiently large --- meaning the measurement have high non-stabilizerness --- the measurement vector can align with $\vec{\Delta}$. Hence, ($C_N \ge \vert\vert\vec{\Delta}\vert\vert_{\mathcal{A},1}/\vert\vert\vec{\Delta}\vert\vert_{\mathcal{A},2}$) and Helstrom bound is recovered. Conversely, when $\mu = 0$, the measurement is confined to the stabilizer polytope. Here, $C_N = 1$ and the active subspace collapses to a single vertex ($N=1$). In this regime, the formula evaluates to the largest component of $\Vec{\Delta}$, which exactly recovers lemma~\ref{lemma: qubit fixed stab measurement no ancilla}.

\section{Applications: Random Access Codes}
As a first example of an application, we consider $n\to1$ quantum random access codes~\cite{ambainis2009}. In this task, Alice encodes a bit-string $\mathbf{x} = (x_1,\dots,x_n) \in \{0,1\}^n$ into some state $\rho_\mathbf{x}$ and sends it to Bob. Bob then aims to determine the $jth$ bit of $\mathbf{x}$ by measuring $\rho_\mathbf{x}$. To achieve this, Bob measures the $jth$ two outcome POVM from some set $\{~\{ M_{b \vert j} \}_{b \in \{0,1\}}\}_{j \in [n]}$.
 
Bob's average success probability is then \hbox{$P_{g}= \frac{1}{2^nn}\sum_{j\in[n]}\sum_{x\in\{0,1\}^n}\text{Tr}(\rho_xM_{x_j|j})$}. In Ref~\cite{zamora2025semi}, bounds on $P_g$ were obtained when the states were restricted to be only stabilizer states. Here, we solve the complementary problem of obtaining bounds when the measurements are restricted to be stabilizer. See SM.~\ref{appendix: QRAC bound} for the proof.
\begin{theorem}\label{result: QRAC bound}
     Let us split Alice's bistrings in two disjoint sets: $\mathbb X_0^{(j)} =  \{{x\in\{0,1\}^n} |x_j =0\}$ and $\mathbb X^{(j)}_1 =  \{{x\in\{0,1\}^n} |x_j =1\}$. When restricted to only stabilizer measurements, the maximum value of $P_g$, denoted $P_{g}^{\mathrm{STAB}}$, is given by 
     \begin{equation}
         P_{g}^{\mathrm{STAB}} = \frac{1}{2} +\frac{1}{4n}\sum_{j\in [n]}\vert\vert\vec{\Delta}^{(j)}\vert\vert_\infty.
     \end{equation}
     where $\Delta^{(j)} = \sigma^{(j)}_0- \sigma^{(j)}_1=\frac{1}{2}\vec{\Delta}^{(j)}\cdot\vec{\sigma}$ with $\vec{\sigma}$ a vector of Pauli matrices and $\sigma^{(j)}_{i} = \frac{1}{2^{n-1}}\sum_{x\in\mathbb{X}^{(j)}_i}\rho_x$ effective states.
\end{theorem}
Since $\vert\vert \vec{\Delta}^{(j)}\vert\vert_\infty\leq\vert\vert \vec{\Delta}^{(j)} \vert\vert_2$, restricting Bob to stabilizer measurements is generally suboptimal. Together with Ref.~\cite{zamora2025semi}, Result~\ref{result: QRAC bound} shows that optimal QRACs require non-stabilizerness in both the preparation and measurement stages, identifying measurement non-stabilizerness as an operational resource for their implementation.

\section{Applications: Simulation Fidelity}
Any unitary can be synthesized using Clifford and $T$ gates. Since stabilizer circuits comprise only Clifford operations, implementing a $T$ gate requires magic-state injection, where each $T$ gate consumes a single $H$-type magic state $\ketbra{H}$. Consequently, a finite supply of magic states limits the set of implementable unitaries, as it limits the number of implementable $T$ gates. This motivates the question of how accurately an arbitrary unitary can be synthesized within the resource theory of non-stabilizerness given access to only a finite number of $H$-type magic states \cite{Beverland_2020}. Using Result~\ref{result: appending any qubit state with feedfoward}, we derive an upper bound on the fidelity $F(U,V)=1/d^2\,|\mathrm{tr}(U^\dagger V)|^2$
with which an arbitrary qubit unitary $V$ can be synthesized by a unitary $U$ given access to a single $H$-type magic state (equivalently, a single $T$ gate).
\begin{theorem}\label{result: simulation fidelity}
     Let $V,U$ be qubit unitaries where \hbox{$U = C_1 T C_2 : C_1, C_2 \in \mathcal{C}_1$}. Then 
     \begin{equation}
         F(U,V) \leq P^{\rm stab, a}_{\rm suc}(\Delta_V, \ketbra{H}) 
     \end{equation}
     where $\Delta_V = V^\dagger(\ketbra{0}-\ketbra{1})V$. 
\end{theorem}
See SM.~\ref{appendix: result simulation fidelity} for the proof. Result~\ref{result: simulation fidelity} bounds the fidelity with which a unitary $V$ can be simulated by a unitary $U$ --- which contains a single $T$ gate --- by the maximum probability with which the state $V^\dagger \ketbra{0} V$ and its orthogonal complement can be discriminated under adaptive stabilizer circuits and access to a single H-type magic state. 

\section{Discussion}
We have shown that quantum state discrimination provides an operational framework for quantifying the measurement power enabled by non-stabilizerness. Restricting measurements to stabilizer circuits defines a weaker notion of distinguishability than the Helstrom limit, with non-stabilizer resources then progressively restoring the achievable success probability. We showed that adaptive strategies are necessary to exploit ancillary non-stabilizerness, establishing an operational separation between fixed and adaptive stabilizer circuits. Moreover, our SDP formulation provides a general framework for studying discrimination under bounded measurement non-stabilizerness in arbitrary dimensions. Taken together, our results connect the resource theory of non-stabilizerness with one of the most fundamental operational tasks in quantum information, establishing quantum state discrimination as a natural benchmark for characterizing the power of restricted measurements.

\begin{acknowledgments}
 We thank A. de Oliveira Junior for helpful discussions and feedback. We acknowledge financial support from the Simons Foundation (Grant No. 1023171, R.C.), the Brazilian National Council for Scientific and Technological Development (CNPq, Grants No. 403181/2024-0 and 301687/2025-0), the National Institute of Science and Technology for Applied Quantum Computing through CNPq process No. 408884/2024-0, the Financiadora de Estudos e Projetos (Grant No. 1699/24 IIF-FINEP), and the Coordenação de Aperfeiçoamento de Pessoal de Nível Superior -- Brasil (CAPES) -- Finance Code 001.\textbf{ Code availability.}  The code used to generate the results in this paper is available on GitLab~\cite{zamora2026code}.
\end{acknowledgments}

\appendix

\section{QSD With Coarse Grained Measurement}\label{appendix main text: QSD With Coarse Grained Measurement}
In the minimum error form of QSD with two states, a two outcome POVM is used to model the measurement. If the player gets one outcome they guess $\rho$, if they get the other they guess $\sigma$. If, instead, a POVM with $N$ outcomes is measured it must be coarse grained into a two outcome measurement. This can occur before the measurement, or after (via classical post processing). However, there exists exponentially many ($2^{N}$) different options for coarse grainings the POVM into a two outcome measurement. To handle this, the following Lemma, initially noted in \cite{Matthews_2009}, can be used to calculate the success probability for performing QSD under the optimal coarse graining. 
\begin{lemma}\label{lemma: general state discrimination}
    If using a POVM $\{M_i \}_{i=1}^N$ for perform QSD with $\Delta$, the maximum success probability under all possible coarse grainings is
    \begin{equation}
        P_{\rm suc}(\Delta) = \frac{1}{2} + \frac{1}{4} \sum_{i=1}^N \vert \textnormal{tr}\big[ M_i \Delta \big] \vert.
    \end{equation}
\end{lemma}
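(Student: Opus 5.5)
The plan is to reduce the problem to choosing a subset of outcomes, and then to show that the best subset is fixed by the signs of $\textnormal{tr}[M_i\Delta]$. A two-outcome coarse graining of $\{M_i\}_{i=1}^N$ is given by a subset $S\subseteq\{1,\dots,N\}$: the player guesses $\rho$ on outcomes in $S$ and $\sigma$ otherwise. This produces the binary POVM $E_0=\sum_{i\in S}M_i$, $E_1=\mathbb{I}-E_0$. Post-processing that is randomized rather than deterministic is a convex combination of such assignments, so it can only do as well as the best deterministic one. Assuming equal priors as in the main text, the success probability for a given $S$ is
\begin{equation*}
P_S=\tfrac12\textnormal{tr}[\rho E_0]+\tfrac12\textnormal{tr}[\sigma E_1]=\tfrac12+\tfrac12\textnormal{tr}[E_0\Delta]=\tfrac12+\tfrac12\sum_{i\in S}\textnormal{tr}[M_i\Delta].
\end{equation*}
Here we used $\textnormal{tr}[\sigma]=1$ and $E_1=\mathbb{I}-E_0$.

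The next step is to maximize over $S$. Each summand is a real number $t_i=\textnormal{tr}[M_i\Delta]$, because $M_i$ and $\Delta$ are Hermitian. The maximum is therefore attained by taking $S=\{i:t_i>0\}$, and its value is $\sum_i\max(t_i,0)$. We then use $\sum_i t_i=\textnormal{tr}[\Delta]=0$, which follows from $\sum_i M_i=\mathbb{I}$. This gives
\begin{equation*}
\sum_i t_i^+=\sum_i t_i^-,\qquad \sum_i t_i^+=\tfrac12\sum_i|t_i|.
\end{equation*}
Substituting back yields $P_{\rm suc}=\tfrac12+\tfrac14\sum_i|\textnormal{tr}[M_i\Delta]|$.

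The argument does not have a real obstacle. The only care needed is in two points. First, one must justify restricting to deterministic subset assignments, which follows from linearity of $P_S$ in the assignment. Second, one must note that the identity $\textnormal{tr}\Delta=0$ is what converts the one-sided sum into half the absolute sum.
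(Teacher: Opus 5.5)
Your proposal is correct and follows essentially the same route as the paper: you assign each outcome to $\rho$ exactly when $\textnormal{tr}[M_i\Delta]\ge 0$, write the success probability as $\tfrac12+\tfrac12\sum_i\max\{\textnormal{tr}[M_i\Delta],0\}$, and use $\sum_i\textnormal{tr}[M_i\Delta]=\textnormal{tr}[\Delta]=0$ to convert the positive-part sum into half the absolute sum. Your remark that randomized post-processing cannot beat the best deterministic subset adds something useful, since it explicitly justifies the optimality of the sign-based assignment, which the paper takes for granted.
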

See Supplementary Material (SM)~\ref{sm: proof of general state discrimination} for a proof.

\section{Recovering Helstrom's Bound}\label{appendix: Recovering Helstrom's Bound}
If $\Delta$ is the difference of two orthogonal pure states, $P^*_{\rm suc}(\Delta)=1$. When restricted to only stabilizer circuits on qubits, this continues to hold if and only if $\Delta$ is the difference of two orthogonal pure stabilizer states. 

In general, the maximum achievable minimum error success probability is reduced when restricting to stabilizer circuits. However, in Result~\ref{result: appending any qubit state with feedfoward} it was shown that, for qubits, this restriction can be partially overcome given access to adaptive stabilizer circuits and a source of non-stabilizerness. Although, even with the optimal source of non-stabilizerness i.e., a $H$ type magic state, Helstrom's bound is not in general recovered. To see exactly when Helstrom's bound can be recovered for orthogonal pure states, one must consider when $P_{\rm suc}^{\rm STAB, a}(\Delta, \omega)=1$ if $\omega=\ketbra{H}$. This occurs if $\vert \vert \Vec{\Delta} \vert \vert_\infty=2$ or $\vert \vert \Vec{\Delta} \vert \vert_{2-{\rm ky}}=4/\sqrt{2}$. The former case corresponds to when $\Vec{\Delta}$ is the difference of two orthogonal pure stabilizer states; the latter occurs when $\vert \Vec{\Delta}^\downarrow \vert = (2/\sqrt{2}, 2/\sqrt{2}, 0)^t$, meaning that $\Vec{\Delta}$ is the difference of two orthogonal $H$-type magic states. Hence, these are the only orthogonal pure states for which Helstrom's bound is recovered given access to a single $H$-type magic state. 

\section{QSD under Adaptive Stabilizer Circuits}\label{appendix: counter example}
Here, we show an example of a QSD tasks with adaptive stabilizer circuits in which a stabilizer state in an ancilla allows the minimum error success probability to be improved as compared to fixed stabilizer circuits. 

Consider the following two two-qubit states on a space $\mathcal{H}^{A_1}_1 \otimes \mathcal{H}^{A_2}_1$:
\begin{equation}
    \begin{split}
        \rho &= \frac{1}{2}(\ketbra{00}_{A_1A_2} + \ketbra{1+}_{A_1A_2}) \\
        \sigma &= \frac{1}{2}(\ketbra{01}_{A_1A_2} + \ketbra{1-}_{A_1A_2}).
    \end{split}
\end{equation}
Now, apply the following adaptive stabilizer circuit: append $\ketbra{0} \in {\rm STAB}(1)$ in $\mathcal{H}^1_B$, then perform $C_{A_1}X_B$ (controlled-X between $A_1$ and $B$) before measuring $B$ in the computational basis. If the outcome is $0$, apply $\mathbb{I}_{A_1A_2}$; if the outcome is $1$, apply $\mathbb{I}_{A_1}H_{A_2}$. Finally, measure $A_1$ and $A_2$ in the computational basis. 

Such a circuit will lead to an effective POVM of the form
\begin{equation}
    \big\{ \ketbra{00}, \ketbra{01}, \ketbra{1+}, \ketbra{1-} \big\},
\end{equation}
which is able to perfectly distinguish between $\rho$ and $\sigma$ i.e., $P^{\rm stab, a}_{\rm suc}(\Delta, \ketbra{0})=1$. However, when measuring under fixed stabilizer circuits it can be shown that $P^{\rm stab}_{\rm suc}(\Delta)=3/4$, such that there exists an example where $P^{\rm stab}_{\rm suc}(\Delta) < P^{\rm stab, a}_{\rm suc}(\Delta, \omega)$ when $\omega \in {\rm STAB}$. Finally, we note that this discrimination tasks could equivalently be achieved with probability $1$ without access to $\ket{0}$, by performing adaptive measurements directly on the primary system. However, we do not consider such a framework in this work, instead requiring all adaptive to be performed on an ancillary system. See SM.~\ref{SM: counter example} for more detail.  

\section{Generation of Geodesics Trajectories for Fig.~\ref{figure:Psucc_mu} }\label{appendix: Geodesics}
Parametrizing the geodesic on the Bloch sphere defined by the vectors $\vec{n}_{start}$ and $\vec{n}_{end}$ separated by an angle $\theta = \arccos(\vec{n}_{\text{start}} \cdot \vec{n}_{\text{end}})$ via spherical linear interpolation~\cite{Shoemake_1985}: $\vec{n}(s) = \frac{\sin((1-s)\theta)}{\sin(\theta)} \vec{n}_{\text{start}} + \frac{\sin(s\theta)}{\sin(\theta)} \vec{n}_{\text{end}}$. In Fig.~\ref{figure:Psucc_mu} we compare the analytical bound from Result~\ref{result: QSD for qubits with non-stabilizer measurements} with the SDP optimum detailed in SM~\ref{sm: SDP for non-stabilizer measurements} for a QSD task considering antipodal states (defined by $\vec{n}(s)$ and $-\vec{n}(s)$, for $s\in\{0,0.25,1\}$). This is done for the geodesics joining $\ket{T}$ ($\vec{n}_{\text{start}} = \frac{1}{\sqrt{3}}(1,1,1)$) with $\ket{0}$ ($\vec{n}_{\text{end}} = (0,0,1)$) as a function of the measurement non-stabilizerness $\mu$. Note that due to the symmetry of these states,  $P_{\mathrm{suc}}^\mu = \max_{\mathcal A}P^\mathcal{A}_{\mathrm{suc}}(\mu)$. Therefore it also constitutes the solution obtained through the  SDP proposed in SM~\ref{sm: SDP for non-stabilizer measurements}. Finally, we consider the geodesic between $\ket{T}$ and $\ket{H}$ ($\vec{n}_{\text{end}}= \frac{1}{\sqrt{2}}(1,0,1)$) showing that Eq.~\eqref{eq: Psucc_Nostab_m_qbits_mt} is in general a strict lower bound of the SDP solution.

\bibliographystyle{apsrev4-1}
\bibliography{mainTextBib}

\clearpage


\onecolumngrid
\begin{center}
    \Large \bfseries {Supplementary Material: State Discrimination With Stabilizer Measurements}
\end{center}
\vspace{1em}

\let\addcontentsline\oldaddcontentsline %

\begingroup
\parskip=0pt
\setcounter{tocdepth}{3}
\tableofcontents
\endgroup

\onecolumngrid

\section{Measurements Under Stabilizer Circuits}\label{SM: Measurements in the Resource Theory of Non-stabilizerness}
Measurements are modelled by Positive Operator Value Measures (POVMs), which are a set of operators $\{M_i \}_i$ such that 
\begin{equation}
    M_i \geq 0 ~\forall~i ~ ~ {\rm and} ~ ~  \sum_i M_i = \mathbb{I}.
\end{equation}
Stabilizer circuits are the subset of the stabilizer operations, and consist of Clifford unitaries; preparing and appending stabilizer states; and computational basis measurements. Here, we will consider what POVMs can be implemented via stabilizer circuits.

\subsection{Single Qubit}

For a single qubit, the POVM $\{ \ketbra{0}, \ketbra{1} \}$ captures a measurement in the computational basis. This is a two outcome projective measurement which can alternatively be written as: 
\begin{equation}
    \bigg\{ \Pi_a = \frac{\mathbb{I}+(-1)^a Z}{2} : a \in \{ 0,1 \} ~\bigg\},
\end{equation}
where $Z = \ketbra{0} - \ketbra{1}$ is the Pauli Z operator and $a \in \{0,1\}$ is the single bit output from performing the computational basis measurement --- this corresponds to getting the outcome associated to either $\ketbra{0}$ or $\ketbra{1}$ respectively. 

If a Clifford unitary is applied before a state is measured, it can equivalently be viewed as changing the measurement basis. Applying a unitary  $U \in \mathcal{C}_1$ to a state before measuring in the computational basis is, using the cyclic nature of the trace, equivalent to measuring the POVM
\begin{equation}
    \bigg\{ \Pi^U_a = U^\dagger \bigg(\frac{\mathbb{I}+(-1)^a Z}{2} \bigg)U:~U \in \mathcal{C}_1,~ a \in \{ 0,1 \} ~\bigg\},
\end{equation}
on the initial state. Now, as Clifford unitaries map Pauli operators to Pauli operators, and there exists a Clifford unitary to map each Pauli to every other Pauli, such a POVM can be rewritten as  
\begin{equation}
    \bigg\{ \Pi^P_a = \frac{\mathbb{I}+(-1)^a P}{2} :~P \in \mathcal{P}_1, ~ a \in \{ 0,1 \} ~\bigg\}.
\end{equation}

\subsection{Two to $n$ Qubits}

The above notion can easily be expanded to multiple qubits. We will first do this explicitly for two qubits, before generalising to $n$ qubits. 

Measuring multiple qubits in the computational basis means measuring the single qubit computational basis measurement POVM in each qubit subspace. For two qubits this corresponds to measuring the four outcome POVM
\begin{align} \label{eq: two qubit computational basis measurements}
    &\bigg\{ \Pi_{a,b} = \frac{\mathbb{I}_A+(-1)^a Z_A}{2} \otimes \frac{\mathbb{I}_B+(-1)^b Z_B}{2} : a,b \in \{ 0,1 \} ~\bigg\}, 
\end{align}
where the subscripts $A$ and $B$ label the two qubit spaces with corresponding output bits $a$ and $b$ respectively. Under a pre-processing Clifford unitary, the set of two qubit measurements that can be made under stabilizer circuits is then  
\begin{align}
    \bigg\{ \Pi^{ij}_{a,b} =  \frac{\mathbb{I}_A \otimes \mathbb{I}_B+(-1)^a~P_i + (-1)^b P_j + (-1)^{a \oplus b} P_iP_j}{4} : a,b \in \{ 0,1 \}, ~P_i, P_j \in \mathcal{P}_2, ~[P_i, P_j]=0 ~\bigg\},
\end{align}
where $\oplus$ means addition modulo two. The need for the Pauli's to commute arises due to Clifford unitaries preserving the commuting structure of Pauli strings and $[Z_A \otimes \mathbb{I}_B, \mathbb{I}_A \otimes Z_A]=0$. It can be seen that each $\Pi^{ij}_{a,b}$ is a projector onto a pure stabilizer state --- as it is an equally weighted sum over a mutually commuting set of Pauli-strings. Hence, these POVMs are a projection into a basis of stabilizer states. 

This can easily be extended to an arbitrary number of qubits: on $n$ qubits this corresponds to measuring an observables with an eigenbasis of the form 
\begin{equation*}\label{eq: n-qubit stab measurement}
    \big\{ \Pi^U_a = U^\dagger \ketbra{a} U \big\}_{a}~,
\end{equation*}
where $U \in \mathcal{C}_n$ and $a \in \{0,1\}^n$ is an n-bit-string output from the computational basis measurements. This is again a projection onto a basis of stabilizer states, meaning each POVM element can equivalently be written as an equally weighted sum over mutually commuting Pauli-strings, for example, 
\begin{equation}
    \Pi^U_0 = \frac{1}{2^n} \sum_l P_l : [P_l, P_{l'}] = 0 ~\forall~l, l', P_l \in \mathcal{P}_n. 
\end{equation}
Different basis states, corresponding to different bit-strings $a \in \{0,1\}^n$, then contain the same set of Pauli-string in their decomposition but with different phases of $\{+1,-1\}$ associated to each Pauli string i.e., the same Pauli-strings but different elements of the Pauli group.  

\subsection{$n$ Qubits With Ancilla}

In addition to computational basis measurements and Clifford unitaries, stabilizer circuits also allow for the creation of arbitrary stabilizer states. Here, we consider the additional POVMs made available as a result of this. 

Consider a bipartite space $\mathcal{H}_n^A \otimes \mathcal{H}_k^B$. Via the introduction of a stabilizer state in this $B$ system, the set of available POVMs on the $A$ system can be increased: for a given input state $\rho \in \mathcal{D}(\mathcal{H}_n)$ a $k$ qubit stabilizer state $\omega \in {\rm STAB}(k)$ can first be appended. A joint Clifford unitary can then be applied to the whole space, before a measurement in the computational basis on all $n+k$ qubits is made. Such a measurement is therefore just a projection into a stabilizer basis on $\rho \otimes \omega$, i.e., it is a measurement of the form of Eq.\eqref{eq: n-qubit stab measurement} on all $n+k$ qubits in the $A$ and $B$ spaces. 

Such a measurement will output an $n+k$ bit-strings, with the probability of getting the outcomes $a \in \{0,1\}^{n +k}$ is therefore 
\begin{equation}
    \begin{split}
        p_{a} &= \textrm{tr}\big[ U^\dagger \ketbra{a} U (\rho \otimes \omega) \big] \\
        &= \textrm{tr}\big[ M^\omega_{a} \rho \big],
    \end{split}
\end{equation}
where $M^\omega_{a} = \textrm{tr}_B \big[  U^\dagger \ketbra{a} U (\mathbb{I} \otimes \omega) \big]$ is an element of a POVM acting on just the $A$ space, although, it depends on the system in the $B$ space. For a given $\omega$, one is therefore able to measure a POVM ${M^\omega_{a}}$ on the $A$ system. 
  
\subsection{Adaptive Measurements} \label{appendix: stabilzer measurements, adaptive measurements}

The above sets of measurements achievable with stabilizer circuits are fixed measurements i.e., they are determined before the circuit begins. We also consider the possibility of the player performing {\em adaptive stabilizer circuits}. Under adaptive circuits, measurement outcomes on part of the system determine which subsequent operations (measurements) are applied. 
Specifically, we consider a bipartite space $\mathcal{H}^A_n \otimes \mathcal{H}_k^B$, where a fixed measurement on the $B$ space is used to determine which of a number of fixed measurement is then performed on the $A$ space. With $\omega_B \in {\rm STAB}(k)$ in the ancilla, one can model adaptive measurements on a state $\rho \in \mathcal{D}(\mathcal{H}_n)$ as a quantum instrument of the form 
\begin{equation}
    \mathcal{I}(\rho) = \sum_{a \in \{0,1\}^n} \sum_{b \in \{0,1\}^k} (\ketbra{a} \otimes \mathbb{I}_A) (V^b_A \otimes \mathbb{I}_B)(\mathbb{I} \otimes \ketbra{b})U(\rho \otimes \omega_B)U^\dagger(\mathbb{I} \otimes \ketbra{b})(V^{b,\dagger}_A \otimes \mathbb{I}_B)(\ketbra{a} \otimes \mathbb{I}_A) \otimes \ketbra{ab},
\end{equation}
such that depending on the measurement output from the $B$ system i.e., the bit-string $b \in \{0,1\}^k$, there is a Clifford unitary $V^b \in \mathcal{C}_n$ applied to the $A$ system. A computational basis measurement on the $A$ system is then applied, outputting a bit-string $a \in \{0,1\}^n$. Both bit-strings are stored in a classical register.  

The probability of getting the output bit-strings $a \in \{0,1\}^n$ and $b \in \{0,1\}^k$ is therefore given by 
\begin{equation}
    \begin{split}
        p_{(a,b)} &= \textrm{tr}\big[ U^\dagger \big( V^{b,\dagger}_A \ketbra{a} V^b_A \otimes \ketbra{b} \big) U (\rho \otimes \omega_B) \big] \\
        &= \textrm{tr}\big[ T^{U,\omega, \{V\}}_{a,b} \rho \big]
    \end{split}
\end{equation}
where 
\begin{equation}
    T^{U, \omega,  \{V\}}_{a,b} = \textrm{tr}_B\big[ U^\dagger \big( V_A^{b,\dagger} \ketbra{a} V_A^b \otimes \ketbra{b} \big) U (\mathbb{I}_A \otimes \omega_B) \big]
\end{equation}
are elements of a POVM on the $A$ space, with $U \in \mathcal{C}_{n+k}, V^b \in \mathcal{C}_n~\forall~b$. This POVM depends on the state in the $B$ system, the global Clifford unitary, and the choice of Clifford unitaries $\{ V^b \}_b$ applied to the $A$ system.  
\section{Non-stabilizer measurements}
\subsection{ n-qudits minimum error success probability via an SDP}\label{sm: SDP for non-stabilizer measurements}
Here we consider the problem of  determining the minimum error success probability of discriminating correctly between $2$ states of $n$ qudits, $\rho_0$ and $\rho_1$, given that we allow the measurements to have a maximum value of non-stabilizerness $\mu\in [0,\mu_{max}]$. This generalizes the stabilizer measurements, corresponding to $\mu=0$.  We show that it can be written as a semidefinite program, allowing in principle to consider an arbitrary finite number of states and arbitrary prime dimension $d$. As we will see, our approach is subjected to the exponential growth of the stabilizer polytope, so even if stated as an SDP, we do not claim an efficient calculation for a large number of systems or high dimensions. Setting $\Delta =\rho_0-\rho_1$, the QSD problem can be framed as the following optimization
\begin{align}
    &\max_{M^\mu_0, M^\mu_1} \frac{1}{2} + \frac{1}{2} Tr(\Delta M^\mu_0),\label{eq:opt_problem_M_mu}\\
    &s.t.\nonumber\\
    &\quad\mathcal{M}(M^\mu_i)\leq \mu \quad for \quad i= 0,1, \nonumber\\
    &\quad M^\mu_i\succeq 0,\nonumber\\
    &\quad M^\mu_0+M^\mu_1 =\mathbb{I}\nonumber,\\
\end{align}
specifying that both elements ${M^\mu_0, M^\mu_1}$ form a POVM and that each of them is restricted to have a maximum value $\mu$ of non-stabilizerness respect to some monotone $\mathcal{M}$.  In particular, we choose the measure defined in~\cite{Cao_2025, palhares_2026} as:
\begin{equation}
    \mathcal{M}(\rho) = \min_{\sigma \in \mathrm{STAB}} \frac{1}{2} \| \rho - \sigma \|_1,
\end{equation}
which is the trace distance between $\rho$ and the stabilizer polytope, vanishing if and only if $\rho$ is a convex combination of stabilizer pure states.
Note that the definition is valid only for states $\rho$, which are normalized. Therefore, to extend it to our problem, for $\text{Tr}(M_i^\mu)\neq 0$ we consider:
\begin{equation}
    \mathcal{M}\left(\frac{M^\mu_i}{\text{Tr}(M^\mu_i)}\right) = \min_{\sigma_i \in \mathrm{STAB}} \frac{1}{2} \| \frac{M^\mu_i}{\text{Tr}(M^\mu_i)} - \sigma_i \|_1\leq\mu, \quad\quad for\quad i=0,1.
\end{equation}
Note that we have defined different optimization variables $\sigma_i$, one for each POVM element $i$. Since $M^\mu_i\succeq0$ for $i=0,1$, we can rewrite this condition as:  
\begin{equation}
    \min_{\sigma \in \mathrm{STAB}}  \| M^\mu_i -  \text{Tr}(M^\mu_i)\sigma_i\|_1\leq2\mu\text{Tr}(M^\mu_i). \label{eq:Magic_constraint_M}
\end{equation}

In what follows we show that this restriction can be linearized and the optimization~\eqref{eq:opt_problem_M_mu} can be stated as a semidefinite program. As it is well known~\cite{skrzypczyk2023, palhares_2026}, the 1-norm of an Hermitian operator $A$ admits the following SDP representation.
\begin{align}
    & \|A\|_1 = \min_{X} \text{Tr}(X),\nonumber\\
    & s.t.\nonumber\\
    &\quad\quad X\succeq0 \label{eq:1norm_C1}\\
    &\quad \quad X\succeq A\succeq-X \label{eq:1norm_C2},
\end{align}
where $X$ is an auxiliary positive semidefinite matrix.
Therefore, we have the following chain of inequalities:
\begin{equation}
    \min_{\sigma_i \in \mathrm{STAB}}  \| M^\mu_i -  \text{Tr}(M^\mu_i)\sigma_i\|_1 \leq \| M^\mu_i -  \text{Tr}(M^\mu_i)\sigma_i^{guess}\|_1  = \min_{X_i \textit{ s.t. (\ref{eq:1norm_C1}),(\ref{eq:1norm_C2})}} \text{Tr}(X_i) \leq \text{Tr}(X_i^{guess}),\label{eq: Chain inequalities_SDP}
\end{equation}
where the first inequality is true because any random guess $\sigma_i^{guess}$ would be further away from $M^\mu_i/\text{Tr}(M^\mu_i)$ than the closest stabilizer state $\sigma_i^*=\arg\min_{\sigma_i \in \mathrm{STAB}}  \| M^\mu_i -  \text{Tr}(M^\mu_i)\sigma_i\|_1 $. The second equality is the definition of the 1-norm as an SDP, so the minimization over $X_i$ is restricted to  $X_i\succeq M^\mu_i -  \text{Tr}(M^\mu_i)\sigma_i^{guess}\succeq-X_i$ and $X_i\succeq0$. Finally, the last inequality shows that any random $X_i^{guess}$ would have a higher trace than $X_i^*=\arg\min \text{Tr}(X_i)$. It is assumed that both $X_i^{guess}$ and $X_i^*$ are subjected to the same constraints.
Eq.~(\ref{eq: Chain inequalities_SDP}) implies Eq.~(\ref{eq:Magic_constraint_M}) if we set
\begin{align}
    &  \text{Tr}(X^{guess})\leq 2\mu\text{Tr}(M^\mu_i),\nonumber\\
    & s.t.\nonumber\\
    &\quad \quad X^{guess}\succeq M^\mu_i - \sigma_i'\succeq-X^{guess},
\end{align}
which is a linear constriant. Here we have defined the positive semidefinite but unnormalized state $\sigma_i' =\sigma_i^{guess} \text{Tr}(M_i)$. 

Now note that we can constrain $\sigma_i^{guess}$ to be a stabilizer state by a set of linear restrictions given by the operators $\{A_q\}_q$ defining the facets of the stabilizer polytope~\cite{Howard2014}:
\begin{equation}
    \sigma_i^{guess}\in \mathrm{STAB}_{d,n} \iff \text{Tr}(\sigma_i^{guess} A_q)\geq 0 \quad \forall q.
\end{equation}

Note that this condition implies that $\text{Tr}(\sigma_i^{guess}) =1$, however given the unnormalized state $\sigma_i'$ we must include the condition $\text{Tr}(\sigma_i') = \text{Tr}(M_i)$.

Putting all together, and relabeling $X_i^{guess}=X_i$, the optimization problem~(\ref{eq:opt_problem_M_mu}) can be written as
\begin{align}
    &\max_{\{M^\mu_i, X_i, \sigma'_i\}_i} \frac{1}{2} + \frac{1}{2} Tr(\Delta M^\mu_0),\label{eq:SDP_M_mu}\\
    &s.t.\nonumber\\
    &\quad\text{Tr}(X_i)\leq 2\mu\text{Tr}(M^\mu_i),\nonumber\\&\quad X_i\succeq M^\mu_i - \sigma_i'\succeq-X_i, \nonumber\\
    &\quad M^\mu_i\succeq 0,\nonumber\\
    &\quad M^\mu_0+M^\mu_1 =\mathbb{I},\nonumber\\
    &\quad\text{Tr}(\sigma_i' A_q)\geq 0 \quad \forall q,\nonumber\\
    &\quad \text{Tr}(\sigma_i') = \text{Tr}(M^\mu_i)\nonumber\\
    &\quad\sigma_i'\succeq 0,
\end{align}
for $\quad i\in\{0,1\}$. Note that the facet inequalities naturally enforce the positive semidefiniteness of $\sigma_i'$ so the last constraint can be removed, however we include it to show explicitly that the resulting optimizaion is an SDP. We emphasize that although semidefinite programs can be computed in polynomial time, the constraints include the restriction of $\sigma_i'$ to satisfy all the inequalities defining the facets of the polytope. Since these grow super exponentially, the problem becomes quickly untractable for more than a few qubits or dimensions. However, the solution is general, and in particular it allows to obtain the curves shown in Fig.~\ref{figure:Psucc_mu} of the main text. 

\section{When does Non-stabilizerness Provide an Advantage Under Adaptive Stabilizer Circuits}\label{appendix: When does Non-stabilizerness provide an advantage}

From Result~\ref{result: appending any qubit state with feedfoward} it can be seen that having an ancillary state proves advantageous for performing QSD under adaptive stabilizer circuits, over fixed stabilizer circuits, if
\begin{equation}\label{eq: condition for advanatge in appendix}
    2 \vert \vert \Delta \vert \vert_\infty < \vert \vert \Delta \vert \vert_{2-{\rm ky}} \vert \vert \vec n \vert \vert_{2-{\rm ky}}.
\end{equation}
Now, let 
\begin{equation}
    \vert \vec \Delta \vert^\downarrow = \begin{pmatrix}
        \vert \Delta_1 \vert \\
         \vert \Delta_2 \vert \\
          \vert \Delta_3 \vert 
    \end{pmatrix}, 
\end{equation}
where $ \vert \vec v \vert^\downarrow$ is the absolute value of the vector $v$ ordered in non-increasing order, such that $\vert \Delta_1 \vert \geq \vert \Delta_2 \vert \geq \vert \Delta_3 \vert$. Using this, Eq.~\eqref{eq: condition for advanatge in appendix} becomes 
\begin{equation}
    \frac{2-\vert \vert \vec n \vert \vert_{2-{\rm ky}}}{\vert \vert \vec n \vert \vert_{2-{\rm ky}}} < \frac{\vert \Delta_2 \vert}{\vert \Delta_1 \vert}.
\end{equation}
As the success probability for QSD under adaptive stabilizer circuits is maximised if $\vert \vert \vec n \vert \vert_{2-{\rm ky}}=\sqrt{2}$, meaning that $\vec n$ is the Bloch vector of an $H$-type magic state, then a sufficient condition for non-stabilizerness to {\em not} provide an advantage is
\begin{equation}
        \frac{2-\sqrt{2}}{\sqrt{2}} \geq \frac{\vert \Delta_2 \vert}{\vert \Delta_1 \vert}.
\end{equation}
See Fig.~\ref{appendix figure of region where nonstabs helps} for a plot of the regions of the $xz$-plane of $\vec \Delta$ for which an $H$-type magic state does and does not facilitate an advantage in QSD under adaptive stabilizer circuits. The correlation between the segments in Fig.~\ref{appendix figure of region where nonstabs helps} and Fig.~\ref{figure: semi-circle figure} is clear to see. We note, if one uses a general state, rather than an $H$-type magic state, in the ancillary system then the angle of the region in which an advantage is gained (green region in Fig.~\ref{appendix figure of region where nonstabs helps}) will get smaller. See Fig.~\ref{not H} and Fig.~\ref{appendix figure of region where nonstabs close to stab} for examples. 
\begin{figure}
    \centering
    \includegraphics[scale=0.5]{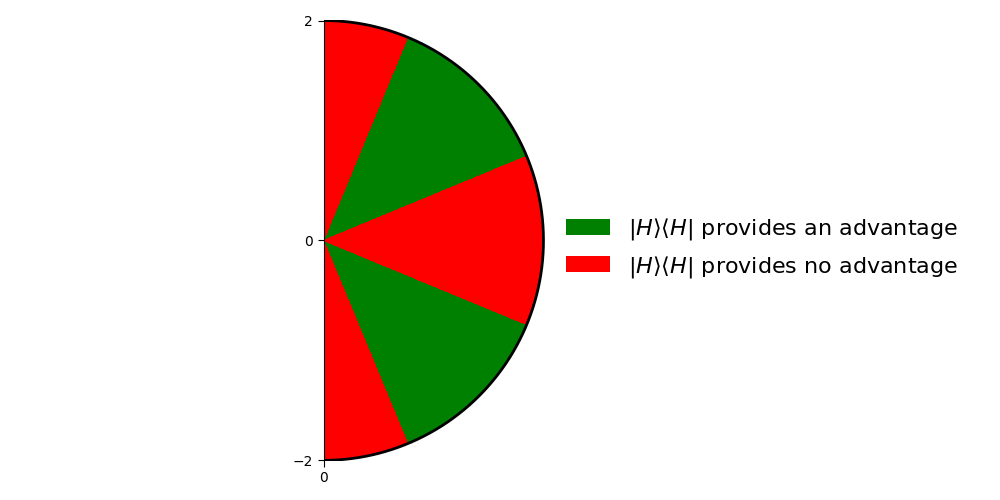}
    \caption{The regions for which access to an $H$-type magic state does and does not allow an advantage to be gained in QSD under adaptive stabilizer circuits for all $\Vec{\Delta}$ in the $xz$-plane.}\label{appendix figure of region where nonstabs helps}
\end{figure}

\begin{figure}
    \centering
    \includegraphics[scale=0.5]{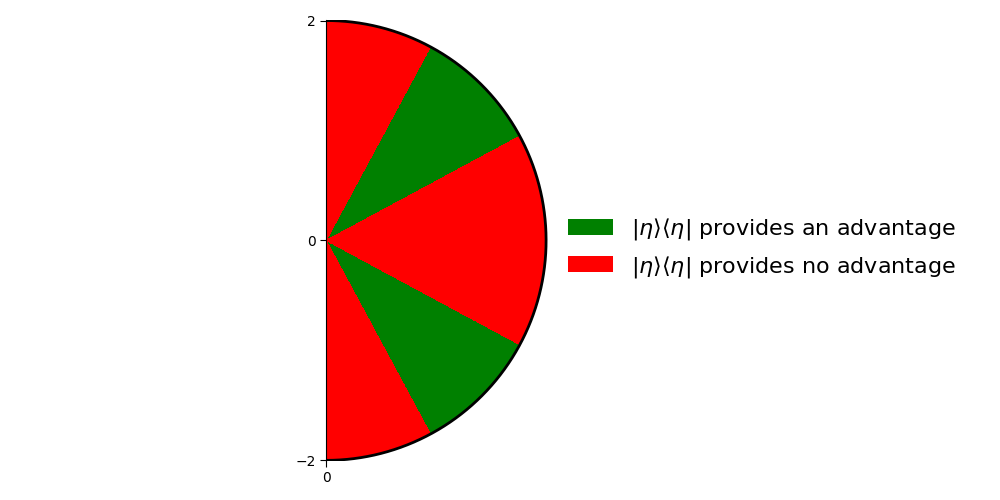}
    \caption{The regions for which access to a state $\ket{\eta}$ which has $\vert \vec n \vert^\downarrow = (0.9, 0.4, 0.173)$ does and does not allow an advantage to be gained in QSD under adaptive stabilizer circuits for all $\Vec{\Delta}$ in the $xz$-plane.}\label{not H}
\end{figure}

\begin{figure}
    \centering
    \includegraphics[scale=0.5]{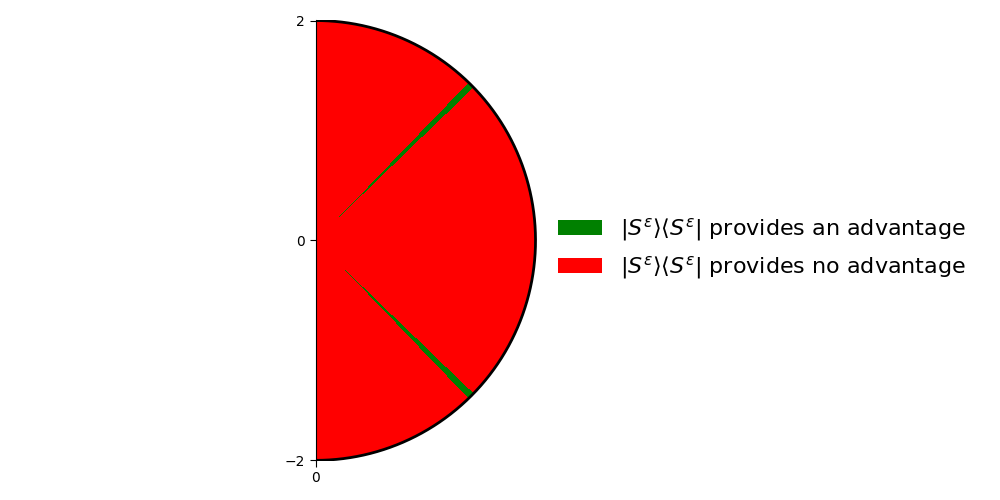}
    \caption{The regions for which access to a state $\ket{S^\epsilon}$ which has Bloch vector $\vert \vec n\vert^\downarrow  = (1-\epsilon, \sqrt{\epsilon(2-\epsilon)}, 0)$, which is close to a pure stabilizer state, does and does not allow an advantage to be gained in QSD under adaptive stabilizer circuits for all $\Vec{\Delta}$ in the $xz$-plane.}\label{appendix figure of region where nonstabs close to stab}
\end{figure}

\section{Proofs}

\subsection{Proof of Result~\ref{theorem: STABs don't increase sucess prob}} \label{appendix: lemma: STABs don't increase sucess prob}

Before beginning the proof, we re-derive some known proofs from the literature used in the proof of Result~\ref{theorem: STABs don't increase sucess prob} for completeness. 
\\
\\
We now confirm a known result that projecting into a subspace of a pure state with a pure state leaves a pure state up to a constant $\alpha \in [0,1]$ in the remaining system. 
\begin{lemma}\label{appendix lemma: projecting into pure states}
    Consider a bipartite space $\mathcal{H}^A \otimes \mathcal{H}^B$, and let $\ketbra{\Phi} \in \mathcal{D}(\mathcal{H}^A \otimes \mathcal{H}^B)$ whilst $\ketbra{\Psi} \in \mathcal{D}(\mathcal{H}^B)$, such that each state is a pure state in their respective spaces. Then,
    \begin{equation}
         \textnormal{tr}_B \big[ \ketbra{\Phi}_{AB} (\mathbb{I}_A \otimes \ketbra{\Psi}_B ) \big] = \alpha \ketbra{\Tilde{\chi}},
    \end{equation}
    where $\ketbra{\Tilde{\chi}} \in \mathcal{D}(\mathcal{H}^A)$ and $\alpha \in [0,1]$. 
\end{lemma}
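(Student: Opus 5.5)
The plan is to compute the partial trace directly in a basis adapted to $\ket{\Psi}$. Write the bipartite pure state in a product basis in which $\ket{\Psi}$ is one of the basis vectors of $\mathcal{H}^B$. Concretely, choose an orthonormal basis $\{\ket{j}_B\}$ of $\mathcal{H}^B$ with $\ket{0}_B = \ket{\Psi}$. Then expand
\begin{equation*}
    \ket{\Phi}_{AB} = \sum_j \ket{\phi_j}_A \otimes \ket{j}_B,
\end{equation*}
where $\ket{\phi_j}_A = (\mathbb{I}_A \otimes \bra{j}_B)\ket{\Phi}_{AB}$ are unnormalised vectors on $\mathcal{H}^A$.

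Next, evaluate the partial trace in the same basis. We have
\begin{equation*}
    \textnormal{tr}_B\big[\ketbra{\Phi}(\mathbb{I}_A\otimes\ketbra{\Psi})\big] = \sum_k (\mathbb{I}_A\otimes\bra{k})\ketbra{\Phi}(\mathbb{I}_A\otimes\ketbra{\Psi})(\mathbb{I}_A\otimes\ket{k}).
\end{equation*}
Only the term $k=0$ survives, since $\braket{\Psi|k}=\delta_{k0}$. The result is therefore $\ketbra{\phi_0}$, where $\ket{\phi_0} = (\mathbb{I}_A\otimes\bra{\Psi})\ket{\Phi}$. This step is equivalent to using cyclicity of the partial trace over $B$ for operators acting only on $B$.

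Finally, normalise. Set $\alpha = \braket{\phi_0|\phi_0}$. If $\alpha>0$, define $\ket{\Tilde{\chi}} = \ket{\phi_0}/\sqrt{\alpha}$, which gives $\alpha\ketbra{\Tilde{\chi}}$ with $\ketbra{\Tilde{\chi}}$ pure. If $\alpha=0$, the operator vanishes and any pure $\ket{\Tilde{\chi}}$ works.

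It remains to show $\alpha\in[0,1]$. Positivity is immediate. For the upper bound, note that $\alpha = \textnormal{tr}\big[\ketbra{\Phi}(\mathbb{I}_A\otimes\ketbra{\Psi})\big]$, and $0 \le \mathbb{I}_A\otimes\ketbra{\Psi} \le \mathbb{I}$. Alternatively, $\alpha = \sum_j \|\phi_j\|^2$ restricted to $j=0$, while the full sum equals $\braket{\Phi|\Phi}=1$.

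There is no real obstacle here; the only care needed is the degenerate case $\alpha=0$ and the correct handling of the partial trace with the projector sitting on one side. Cyclicity on $B$ resolves the latter.
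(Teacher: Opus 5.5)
Your proof is correct. It arrives at the same rank-one operator as the paper but by a slightly different route.

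The paper writes $\ket{\Phi}_{AB}=\sum_i\sqrt{\lambda_i}\,\ket{i}_A\ket{i}_B$ in Schmidt form and evaluates the partial trace term by term, obtaining $\sum_{i,j}\sqrt{\lambda_i\lambda_j}\,\langle j|\Psi\rangle\langle\Psi|i\rangle\,|i\rangle\langle j|$. It recognises this as $|\chi\rangle\langle\chi|$ with $\ket{\chi}=\sum_i\sqrt{\lambda_i}\,\langle\Psi|i\rangle\,\ket{i}_A$, which is exactly your $\ket{\phi_0}=(\mathbb{I}_A\otimes\bra{\Psi})\ket{\Phi}$ written in Schmidt coordinates. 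It then bounds $\alpha=\sum_i\lambda_i|\langle\Psi|i\rangle|^2\le 1$ coefficient by coefficient; that step tacitly needs $\sum_i\lambda_i=1$ or Bessel's inequality, not just $\lambda_i,|\langle\Psi|i\rangle|^2\le 1$.

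Your version needs no Schmidt decomposition: completing $\ket{\Psi}$ to a basis of $\mathcal{H}^B$ amounts to using the partial contraction directly. This makes it shorter and basis-independent. Your bound via $\mathbb{I}_A\otimes|\Psi\rangle\langle\Psi|\le\mathbb{I}$ is cleaner. You also explicitly handle $\alpha=0$, which the paper skips when it divides by $\alpha$ to define $\ket{\tilde{\chi}}$.

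What the Schmidt route buys in return is an explicit formula for $\alpha$ in terms of the Schmidt coefficients. From it the sharper bound $\alpha\le\max_i\lambda_i$ follows at once by Bessel's inequality, though the lemma does not need it.
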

\begin{proof}
    Using the Schmidt decomposition of $\ketbra{\Phi}_{AB}$, it follows that 
    \begin{equation}
        \begin{split}
            \textrm{tr}_B \big[ \ketbra{\Phi}_{AB} (\mathbb{I}_A \otimes \ketbra{\Psi}_B ) \big] &= \sum_{i,j} \sqrt{\lambda_i \lambda_j} ~ \textrm{tr}_B \big[ \ketbratwo{ii}{jj}(\mathbb{I}_A \otimes \ketbra{\Psi}) \big] \\
            &= \sum_{i,j} \sqrt{\lambda_i \lambda_j}~ \ketbratwo{i}{j} ~\bigg( \braket{j \vert \Psi}\braket{\Psi \vert i} \bigg). \\
        \end{split}
    \end{equation}
    One can then define 
    \begin{equation}
        \ket{\chi} = \sum_i \sqrt{\lambda_i} \braket{\Psi \vert i} \ket{i},
    \end{equation}
    such that 
    \begin{equation}
         \textrm{tr}_B \big[ \ketbra{\Phi}_{AB} (\mathbb{I}_A \otimes \ketbra{\Psi}_B ) \big] = \ketbra{\chi}.
    \end{equation}
   It can be seen that 
    \begin{equation}
        \alpha = \textrm{tr}\big[ \ketbra{\chi} \big] = \sum_{i} \vert \braket{\Psi | i} \vert^2 \lambda_i \leq 1,
    \end{equation}
    where the inequality comes from fact that $0 \leq \vert  \braket{\Psi | i} \vert^2 \leq 1 ~\forall~i$, and $0 \leq \lambda_i \leq 1~\forall~i$ as each $\lambda_i$ are eigenvalues of a density operator ($\ketbra{\Phi}_{AB}$). Hence,
    \begin{equation}
        \ketbra{\Tilde{\chi}} = \frac{1}{\alpha} \ketbra{\chi},
    \end{equation}
    is a valid pure quantum state, such that 
    \begin{equation}
        \alpha \ketbra{\Tilde{\chi}} = \textrm{tr}_B \big[ \ketbra{\Phi}_{AB} (\mathbb{I}_A \otimes \ketbra{\Psi}_B ) \big],
    \end{equation}
    completing the proof. 
\end{proof}

We now use Lemma~\ref{appendix lemma: projecting into pure states} to re-derive the fact that projecting a pure stabilizer state into the subspace of another pure stabilizer will leave a pure stabilizer state in the remaining subspace. 
\begin{lemma}\label{Lemma: projecting into stab states gives stab states}
    Let $\ketbra{\Phi} \in \mathcal{D}(\mathcal{H}_n^A \otimes \mathcal{H}_k^B)$ and $\ketbra{\Psi} \in \mathcal{D}(\mathcal{H}_k^B)$ be pure stabilizer states. Then 
    \begin{equation}
         \textnormal{tr}_B \big[ \ketbra{\Phi}_{AB} ( \mathbb{I}_A \otimes \ketbra{\Psi}_B) \big] = \alpha \ketbra{\Tilde{\chi}}
    \end{equation}
    where $\ketbra{\Tilde{\chi}} \in {\rm STAB}(n)$ and $\alpha \in [0,1]$, i.e., the output is proportional to a pure stabilizer state. 
\end{lemma}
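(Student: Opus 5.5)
By Lemma~\ref{appendix lemma: projecting into pure states}, the partial trace already equals $\alpha\ketbra{\Tilde{\chi}}$ for some pure $\ket{\Tilde{\chi}}$ and $\alpha\in[0,1]$. What remains is to show that $\ket{\Tilde{\chi}}$ is a stabilizer state whenever $\alpha>0$. If $\alpha=0$ the statement holds trivially with any choice of stabilizer state. The plan is to reduce the general case to a computational-basis projection on $B$ and then use the Pauli-expansion characterisation of stabilizer states.

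\textbf{Step 1 (reduction).} Write $\ket{\Psi}=C_B\ket{0}^{\otimes k}$ with $C_B\in\mathcal{C}_k$. Then
\begin{equation*}
\textrm{tr}_B\big[\ketbra{\Phi}(\mathbb{I}_A\otimes\ketbra{\Psi})\big]=\textrm{tr}_B\big[\ketbra{\Phi'}(\mathbb{I}_A\otimes\ketbra{0}^{\otimes k})\big],
\end{equation*}
where $\ket{\Phi'}=(\mathbb{I}_A\otimes C_B^\dagger)\ket{\Phi}$. Since $\mathbb{I}_A\otimes C_B^\dagger$ is Clifford, $\ket{\Phi'}$ is again a pure stabilizer state. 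It therefore suffices to treat $\ket{\Psi}=\ket{0}^{\otimes k}$.

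\textbf{Step 2 (measuring one qubit at a time).} The projector decomposes as $\ketbra{0}^{\otimes k}=\prod_{j=1}^k\frac{1}{2}(\mathbb{I}+Z_{B_j})$. We project sequentially. By the standard Gottesman--Knill update rule, projecting a stabilizer state with stabilizer group $S$ onto the $+1$ eigenspace of a Pauli $P$ gives zero or a state proportional to a stabilizer state, according to three cases. If $-P\in S$ the result is $0$. If $P\in S$ the state is unchanged. If $P$ anticommutes with some generator $g$, replace $g$ by $P$ and multiply every other anticommuting generator by $g$; the result, with norm factor $1/2$, is stabilized by a new group of $2^{n+k}$ commuting Paulis.

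I would reprove this quickly from the projector identity
\begin{equation*}
\ketbra{\Phi}=\frac{1}{2^{n+k}}\sum_{g\in S}g,
\end{equation*}
using that $\frac{1}{2}(\mathbb{I}+P)\ketbra{\Phi}\frac{1}{2}(\mathbb{I}+P)$ keeps exactly the terms of $S$ that commute with $P$, together with their products with $P$. After all $k$ projections, the joint state (if nonzero) is a stabilizer state whose group contains $Z_{B_1},\dots,Z_{B_k}$.

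\textbf{Step 3 (tensor-product structure).} A stabilizer group containing $Z_{B_1},\dots,Z_{B_k}$ can be given generators $Z_{B_j}$ together with $n$ further generators. After multiplying these by suitable $Z_{B_j}$, they act as $\mathbb{I}$ or $Z$ on $B$ and can be taken to act trivially there, since all generators must commute with the $Z_{B_j}$. These remaining generators restrict to $n$ independent commuting Paulis on $A$, forming a stabilizer group $S_A$. The post-projection state is therefore $\ket{\chi_A}\otimes\ket{0}^{\otimes k}$ with $\ket{\chi_A}$ stabilized by $S_A$. Tracing out $B$ gives $\alpha\ketbra{\chi_A}$, which identifies $\ket{\Tilde{\chi}}=\ket{\chi_A}\in{\rm STAB}(n)$. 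As a byproduct, $\alpha$ is $0$ or a power of $1/2$, consistent with $\alpha\in[0,1]$.

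\textbf{Main obstacle.} The delicate step is Step 3: checking that the surviving generators can be chosen supported only on $A$ and that they are independent, so that $S_A$ has exactly $2^n$ elements. I would handle independence by counting. The full group has $2^{n+k}$ elements and the subgroup generated by the $Z_{B_j}$ has $2^k$ elements, so the quotient restricted to $A$ has $2^n$ elements. This count also excludes $-\mathbb{I}_A$: if $-\mathbb{I}_A$ were in $S_A$, the full group would contain $-\mathbb{I}$, which would make the state zero, contradicting $\alpha>0$.
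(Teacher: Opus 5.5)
Your proof is correct, but it takes a different route from the paper's.

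\textbf{The paper's route.} The paper does not reduce $\ket{\Psi}$ to $\ket{0}^{\otimes k}$ and does not use the measurement update rule. It expands both $\ketbra{\Phi}_{AB}$ and $\ketbra{\Psi}_B$ in Pauli strings. It then notes that the partial trace removes every term whose $B$-part is not, up to sign, a stabilizer of $\ket{\Psi}$. Next it argues that the surviving $A$-parts mutually commute and all appear with a common multiplicity $\kappa$. This makes the output proportional to a projector $\Pi_G$ onto the joint eigenspace of an abelian Pauli group $G$. Finally it invokes Lemma~\ref{appendix lemma: projecting into pure states} (the output has rank at most one) to force $G$ to be maximal.

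\textbf{Your route.} In your argument, the appeal to Lemma~\ref{appendix lemma: projecting into pure states} at the start is not needed. Steps~1--3 directly produce $\ket{\chi_A}\otimes\ket{0}^{\otimes k}$ together with the full group $S_A$ stabilizing $\ket{\chi_A}$. Purity therefore comes out of your counting $2^{n+k}/2^k=2^n$ and the exclusion of $-\mathbb{I}$, rather than being imported from the previous lemma.

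\textbf{What each approach buys.}
\begin{itemize}
\item Your approach handles signs more cleanly. The vanishing case is exactly $-Z_{B_j}$ lying in the current stabilizer group at some stage. The paper only sketches this part: its formula $\kappa_r=|\sum_j c_j|$ and the claim that $\kappa_r$ is independent of $r$ need a homomorphism and kernel argument that is not spelled out.
\item Your approach gives a sharper conclusion: $\alpha\in\{0\}\cup\{2^{-m}:0\le m\le k\}$, plus an explicit description of the stabilizer group of $\ket{\Tilde{\chi}}$.
\item The paper's approach treats an arbitrary stabilizer $\ket{\Psi}$ in one shot, without the Clifford reduction and without relying on the Gottesman--Knill update rule.
\end{itemize}

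\textbf{One step to make explicit.} You should state the identity $\textrm{tr}_B[\ketbra{\Phi'}(\mathbb{I}_A\otimes\Pi_B)]=\textrm{tr}_B[(\mathbb{I}_A\otimes\Pi_B)\ketbra{\Phi'}(\mathbb{I}_A\otimes\Pi_B)]$ for the projector $\Pi_B=\ketbra{0}^{\otimes k}$. It follows from cyclicity of $\textrm{tr}_B$ under $B$-local operators together with $\Pi_B^2=\Pi_B$. This identity is what justifies replacing the partial trace by the partial trace of the unnormalised post-projection state that your sequential update tracks.
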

\begin{proof}
As $\ketbra{\Phi}_{AB}$ is a stabilizer state, its Pauli decompositions is 
\begin{equation}
    \ketbra{\Phi}_{AB} = \frac{1}{2^{n+k}} \sum_l c_l P_l : [P_l, P_{l'}] = 0~\forall~l,l',~ P_l \in \mathcal{P}_{n+k},
\end{equation}
such that each $P_l$ is a $n+k$ qubit Pauli-string and $c_l \in \{-1,+1\}$. The same is therefore true of $\ketbra{\Psi}_B$, i.e.,  
\begin{equation}
    \ketbra{\Psi}_B = \frac{1}{2^k} \sum_j d_j P_j : [P_j, P_{j'}] = 0~\forall~j,j', ~P_j \in \mathcal{P}_k, ~d_j \in \{+1, -1\}.
\end{equation}
We then note that each $n+k$ qubit Pauli-string can be written as the composition of an $n$ and $k$ qubit Pauli string:
\begin{align}
    P_l = H_l^A \otimes H^B_l: ~ ~ H^A_l \in \mathcal{P}_n, ~ ~ H^B_l \in \mathcal{P}_k~\forall~l.
\end{align}
As $[P_l, P_{l'}]=0~\forall~l,l'$ it must be the case that 
\begin{align}\label{eq: Pauli-conditons for local commuting}
    [H^A_l, H^A_{l'}] = 0 ~ ~ &{\rm and} ~ ~ [H^B_l, H^B_{l'}] = 0, ~ ~ {\rm or}\\
    \{H^A_l, H^A_{l'}\} = 0 ~ ~ &{\rm and} ~ ~ \{H^B_l, H^B_{l'}\} = 0.
\end{align}
Now, it holds that 
\begin{equation}
    \begin{split}
        \textrm{tr}_B \big[ \ketbra{\Phi}_{AB} ( \mathbb{I}_A \otimes \ketbra{\Psi}_B) \big] = \frac{1}{2^{n+k}} \frac{1}{2^k} \sum_{jl} (c_l d_j) H^A_l \textrm{tr}\big[ H^b_l P_j \big], 
    \end{split}
\end{equation}
where 
\begin{equation}
    \textrm{tr}\big[ H^b_l P_j \big] = \begin{cases}
        2^k & {\rm if } ~ H^B_l = P_j \\
        0 & {\rm otherwise}
    \end{cases}
\end{equation}
and $(c_ld_j) \in \{-1, +1\}~\forall~l,j$. Hence, only the set of Pauli strings $\{ H^A_l \otimes P_j \}_{l,j}$ are non-zero after the Partial trace. Therefore, due to the fact that all $P_j$'s are mutually commuting and Eq.\eqref{eq: Pauli-conditons for local commuting}, all remaining Pauli-strings in the $A$ space must mutually commute. From here, it can be seen that 
\begin{equation}
    \begin{split}
        \textrm{tr}_B \big[ \ketbra{\Phi}_{AB} ( \mathbb{I}_A \otimes \ketbra{\Psi}_B) \big] &= \frac{1}{2^{n+k}} \sum_{r} \kappa_r C_r H^A_r : [H^A_r, H^A_{r'}]=0~\forall~r,r', \\
    \end{split}
\end{equation}
where $\kappa_r \geq 0$ is a constant that counts the number of times each string $H_r^A$ appears and $C_r \in \{-1, +1\}$. To find $\kappa_r$, we first define the sets $\mathfrak{S}_{AB}$ and $\mathfrak{S}_B$, which are the Pauli-strings in the decomposition of $\ketbra{\Phi_{AB}}$ and $\ketbra{\Psi}_B$ respectively i.e., the stabilizers of the state up to some phase on each Pauli. We then define 
\begin{equation}
    \mathfrak{S}_{IB} = \big\{ \mathbb{I} \otimes P_j : \mathbb{I} \otimes P_j \in \mathfrak{S}_{AB}, ~P_j \in \mathfrak{S}_B \big\},
\end{equation}
which are the stabilizers of $\ketbra{\Phi}_{AB}$ that act trivially on the $A$ system. As the stabilzers of a state form a closed group, if $H^A \otimes P_{j'} \in \mathfrak{S}_{AB}$ and $\mathbb{I} \otimes P_j \in \mathfrak{S}_{IB}$ then $H_A \otimes P_jP_{j'} \in \mathfrak{S}_{AB}$. Hence, $\kappa_r \leq \vert \mathfrak{S}_{\mathbb{I}B} \vert \leq 2^k$, as there can be a maximum of $2^k$ commuting Pauli-strings in a $k$ qubit space. More specifically, if ${c}_j$ are the phases of the Pauli-strings in $\mathfrak{S}_{IB}$, then 
\begin{equation}
    \kappa_r = \big\vert \sum_j c_j \big\vert,
\end{equation}
meaning $\kappa=\kappa_r~\forall~r$ i.e, it is independent of the specific Pauli-string in the decomposition. Hence, 
\begin{equation}
    \begin{split}
        \textrm{tr}_B \big[ \ketbra{\Phi}_{AB} ( \mathbb{I}_A \otimes \ketbra{\Psi}_B) \big] &= \frac{\kappa}{2^{n+k}} \sum_{r} C_r H^A_r : [H^A_r, H^A_{r'}]=0~\forall~r,r', \\
        &= \frac{\kappa}{2^k} \bigg( \frac{1}{2^n} \sum_r C_r H_r^A \bigg) \\
        &= \frac{\kappa}{2^k} ~ 2^{m-n} \Pi_G,
    \end{split}
\end{equation}
where  $G$ is an abelian subgroup of the $n$-qubit Pauli strings and $\Pi_G = \frac{1}{2^m} \sum_r C_r H_r^A$ is the rank $2^{n-m}$ projector onto the joint $+1$ eigenspace of the group.

Now, from Lemma~\ref{appendix lemma: projecting into pure states} it is known that this is a pure state up to a constant that is less that or equal to $1$, meaning it is at most rank 1. Therefore, $\Pi_G$ must be rank $1$, which occurs if $m=n$, meaning that $G$, the abelian subgroup of the $n$-qubit Pauli strings, is maximal. Hence, this is a pure stabilizer state up to a constant $\kappa/2^k$ where $0 \leq \kappa \leq 2^k$. 
\end{proof}
We now restate Result~\ref{theorem: STABs don't increase sucess prob} before proving it. 
\\
\\
{\em {\bf Result}~\ref{theorem: STABs don't increase sucess prob}.~
    $ P_{\rm suc}^{\rm stab}(\Delta) = P_{\rm suc}^{\rm stab, f}(\Delta, \omega) ~\forall~\omega \in {\rm STAB}$. }
\begin{proof}
    Here, we consider a space $\mathcal{H}_n \otimes \mathcal{H}_k$, such that $\Delta$ is the difference of two $n$-qubit density operators, and $\omega$ is a stabilizer state on $k$-qubits. As mentioned in the main text, without loss of generality we can set $\omega=\ketbra{\omega} \in {\rm STAB}$. 

    Under fixed stabilizer circuits POVM's of the form 
    \begin{equation*}
    \big\{ M^{U, \omega}_a = \textrm{tr}_B \big[ U^\dagger \ketbra{a} U (\mathbb{I}_A \otimes \ketbra{\omega}_B) \big] \big\}_{a}~,
    \end{equation*}
    are considered. Given $U^\dagger \ketbra{a} U$ is a pure stabilizer state on $n+k$ qubits, as $\ketbra{a} \in STAB(n)$ and $U \in \mathcal{C}_{n+k}$, it can be seen from Lemma~\ref{Lemma: projecting into stab states gives stab states} that 
    \begin{equation}
        M^{U, \omega}_a = \textrm{tr}_B \big[ U^\dagger \ketbra{a} U (\mathbb{I}_A \otimes \ketbra{\omega}_B) = \frac{\kappa_a}{2^k} \ketbra{\eta_a}_A,
    \end{equation}
    where $\ketbra{\eta_a}$ is a pure stabilizer state where the Pauli strings in its Pauli decomposition are independent of $a$, but the sign of those strings are dependent on $a$. This is due to all $\ketbra{a}$ having the same Pauli-strings in its Pauli-decomposition for all $a$, but with different signs. Moreover, $0 \leq \kappa_a \leq 2^k$ is a constant that also depends on $a$. Hence, all $\ketbra{\eta_a}$ are elements of the same stabilizer basis. Now, as $\sum_a M^{U, \omega}_a = \mathbb{I}_A$, and letting this stabilizer basis be $\{ \ketbra{\Tilde{\eta}_s}\}_s$, such that $s \in \{0,1\}^n$ labels each of the $2^n$ different stabilizer basis states, there must exists subsets $A \subset \{0,1\}^{n+k}$ such that 
    \begin{equation}
        \ketbra{\Tilde{\eta}_s} = \sum_{a \in A} \frac{\kappa_a}{2^k} \ketbra{\eta_a},
    \end{equation}
    where $\ketbra{\eta_a} = \ketbra{\eta_s}~\forall~a \in A$.
    Letting $\mathfrak{A}$ be the set of all such subsets, and assuming the optimal unitary has been used, the minimum error success probability when using a POVM of this form becomes 
    \begin{equation}
        \begin{split}
            P^{\rm stab, f}_{\rm suc}(\Delta, \omega) &= \frac{1}{2} + \frac{1}{4} \sum_{a \in \{0,1\}^{n+k}} \big\vert \textrm{tr} \big[ M^{U,\omega}_a (\Delta) \big] \big\vert, \\
            &= \frac{1}{2} + \frac{1}{4} \sum_{A \in \mathfrak{A}} \sum_{a \in A } \bigg \vert \frac{\kappa_a}{2^k} \textrm{tr} \big[ \ketbra{\eta_a} (\Delta) \big] \bigg \vert, \\
            &= \frac{1}{2} + \frac{1}{4}\sum_{s \in \{0,1\}^n } \bigg \vert \textrm{tr} \big[ \ketbra{\Tilde{\eta}_s} (\Delta) \big] \bigg \vert, \\
            & \leq P_{\rm suc}^{\rm stab}(\Delta).
        \end{split}
    \end{equation}
    To complete the proof, we note that equality can always be achieved by ignoring the ancillary system and performing a unitary of the form $U_{AB} = U_A \otimes \mathbb{I}_B \in \mathcal{C}_{n+k}$, where $U_A \in \mathcal{C}_n$.
\end{proof}

\subsection{Proof of Lemma~\ref{lemma: qubit fixed stab measurement no ancilla}} \label{appendix: proof of qubit fixed stab measurement no ancilla}
Let $\rho, \sigma \in \mathcal{D}(\mathcal{H}_1)$ and $\Vec{\Delta} \coloneq n_\rho - n_\sigma$ where $n_\rho$ and $n_\sigma$ are the Bloch vectors of $\rho$ and $\sigma$ respectively.
\\
\\
{\em {\bf Lemma~\ref{lemma: qubit fixed stab measurement no ancilla}}
        $P_{\rm suc}^{\rm stab}(\vec \Delta) = \frac{1}{2} + {\frac{1}{4}} \vert \vert \Vec{\Delta} \vert \vert_\infty,$
    where $\vert \vert \Vec{\Delta} \vert \vert_\infty$ is the vector infinity norm i.e., the largest component of the absolute value of $\Vec{\Delta}$.
}

\begin{proof}
    When restricted to qubit stabilizer circuits, only a measurement in the eigenbasis of $Z$ following a pre-processing Clifford unitary $U \in \mathcal{C}_1$ can be applied. This is equivalent to measuring a POVM of the form \{$\ketbra{\psi}, \mathbb{I} - \ketbra{\psi} \}$ where $\ket{\psi} \in {\rm STAB(1)}$. With this in mind, the minimal error success probability becomes
    \begin{align}
        P^{\rm stab}_{\rm suc} (\Delta) &= \frac{1}{2} + \max_{U \in \mathcal{C}_1} \frac{1}{4} \sum_{a \in \{0,1\}} \big\vert \textrm{tr} \big[ \Pi_a^U (\Delta) \big] \big\vert \\
        &= \frac{1}{2} + \frac{1}{2}~\max_{\ket{\psi} \in STAB} \big\vert \textrm{tr}\big[ \ketbra{\psi} \Delta \big] \big\vert \\
        &= \frac{1}{2} + \frac{1}{4}~\max_{\ket{\psi} \in STAB} \bigg\vert\textrm{tr}\big[\Delta X \big] \textrm{tr}\big[ \ketbra{\psi} X \big] + \textrm{tr}\big[\Delta Y \big] \textrm{tr}\big[ \ketbra{\psi} Y \big] +  \textrm{tr}\big[\Delta Z\big] \textrm{tr}\big[ \ketbra{\psi} Z \big] \bigg\vert   \label{eq: final optimisation}
    \end{align}
    where in the final line we have decomposed $\Delta$ into its Pauli basis such that $\Delta = \Delta_x X + \Delta_y Y + \Delta_z Z$, where $\Delta_x = \textrm{tr}\big[\Delta X \big]$, for example. The pure qubit stabilizer states are then 
    \begin{equation}
        {\rm STAB}(1) = \big\{ \ket{0}, \ket{1}, \ket{+}, \ket{-}, \ket{y^+}, \ket{y^-} \big\}, \label{eq: STAB states}
    \end{equation}
    which are the eigenstates of the qubit Pauli operators. It can then be seen that if $\ketbra{\psi} =\frac{1}{2}(\mathbb{I} \pm \kappa_j) \in {\rm STAB}(1)$, where $\kappa_1=X, \kappa_2=Y, \kappa_3=Z$, then 
    \begin{align}
        \textrm{tr}\big[ \ketbra{\psi} \kappa_i \big] &= \frac{1}{2} \textrm{tr}\big[ (\mathbb{I} \pm \kappa_j) \kappa_i \big] \\
        &= \pm \delta_{ij},
    \end{align}
    as $\textrm{tr}\big[\kappa_i] = 0 ~ \forall ~ i$ and $\textrm{tr}\big[\kappa_i \kappa_j \big]=2\delta_{ij}$. From here, it can be seen that Eq.~\eqref{eq: final optimisation} is maximised by choosing a qubit stabilizer state associated to the Pauli-coefficient of $\Delta$ with the largest absolute value. Hence, 
       \begin{align}
        P_{\rm suc}^{\rm stab}(\Delta) &= \frac{1}{2} + {\frac{1}{4}} \max_{X,Y,Z} \big\{ |\textrm{tr}\big[\Delta X \big]|, |\textrm{tr}\big[\Delta Y \big]| , |\textrm{tr}\big[\Delta Z \big]|  \big\} \\
        &= \frac{1}{2} + {\frac{1}{4}} \vert \vert \Vec{\Delta} \vert \vert_\infty,
    \end{align}
    where in the final line we have introduced $\vec \Delta \coloneq \vec n_\rho - \vec n_\sigma$, where $\vec n_\rho$ and $\vec n_\sigma$ are the Bloch vector of $\rho$ and $\sigma$ respectively. Hence, $\textrm{tr}\big[\Delta \kappa_i\big] = (\vec \Delta)_i$, where $(\vec \Delta)_i$ is the component of $\vec \Delta$ associated to the Pauli operator $\kappa_i$. This completes the proof. 
\end{proof}

\subsection{Proof of Result~\ref{result: for qubits, not about of resource is useful}} \label{appendix: result: for qubits, not about of resource is useful}

Firstly, we prove a Lemma for the minimum error success probability under fixed measurements when \hbox{$\rho, \sigma \in \mathcal{D}(\mathcal{H}_2)$}, such that 
\begin{equation}
    \Delta \coloneq \rho - \sigma = \sum_{k} \Delta_{k} P_k, ~ ~ ~ P_k \in \mathcal{P}_2~\forall~i,j. 
\end{equation}
In this case, the figure of merit we are considering is 
\begin{equation}
    \begin{split}
        P_{\rm suc}^{\rm stab, f}(\Delta) &= \frac{1}{2} + \frac{1}{4} \max_{U \in \mathcal{C}_2}  \sum_{a \in \{0,1\}^1} \sum_{b \in \{0,1\}^1} \big\vert \textrm{tr} \big[ (U^\dagger \ketbra{ab} U) \Delta \big] \big\vert, \\
    \end{split}
\end{equation}
We note that for two qubits, a stabilizer basis is characterised by two Pauli-strings, $P_i$ and $P_j$, such that the Pauli-decomposition of each of the basis states contains only $\mathbb{I}, P_i, P_j$ and $P_iP_j$ with different phases. Once this expression is known, we then apply it to an input state of the form $(\cdot) \otimes \omega$. 
\\
\\
{\bf Two Qubits Fixed Measurements}

\begin{lemma}\label{lemma: appendix : two qubit min error sucess prob}
    The minimum error success probability with fixed measurement on two qubits is  
    \begin{equation} \label{result: stab min error equality}
        P_{\rm suc}^{\rm stab, f}(\Delta) = \frac{1}{2} + \frac{1}{2} \max_{P_i, P_j : [P_i, P_j]=0} \max \big\{ 2 \vert \Delta_i \vert, ~2 \vert \Delta_j \vert, ~2 \vert \Delta_{ij} \vert, ~\vert \Delta_i \vert +  \vert \Delta_j \vert +  \vert \Delta_{ij} \vert \big\},
    \end{equation}
    where 
    \begin{equation}
         \Delta_{i} = \frac{\textnormal{tr}\big[ P_i \Delta]}{d} ~ ~ {\rm and} ~ ~ \Delta_{ij} = \frac{\textnormal{tr}\big[ P_i P_j \Delta]}{d},
    \end{equation}
    where $d=4$. 
\end{lemma}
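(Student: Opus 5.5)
The strategy is to use the structure of two-qubit stabilizer bases described above. Every $U \in \mathcal{C}_2$ defines a basis $\{U^\dagger \ketbra{ab} U\}_{a,b}$, and each basis element can be written as
\begin{equation*}
\tfrac{1}{4}\big(\mathbb{I} + (-1)^a P_i + (-1)^b P_j + (-1)^{a\oplus b} P_iP_j\big)
\end{equation*}
for some commuting pair $P_i, P_j$. Optimizing over $U$ is therefore the same as optimizing over commuting pairs, since the Clifford group acts transitively on such pairs together with their signs. Because $\textrm{tr}[\Delta]=0$ and $\textrm{tr}[P_k P_l]=4\delta_{kl}$, I get $\textrm{tr}[U^\dagger\ketbra{ab}U\,\Delta] = (-1)^a\Delta_i + (-1)^b\Delta_j + (-1)^{a\oplus b}\Delta_{ij}$, using the normalisation $\Delta_i = \textrm{tr}[P_i\Delta]/4$. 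The signs of $P_i$ and $P_j$ only permute the outcomes, so they do not matter.

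Next I use Lemma~\ref{lemma: general state discrimination} to reduce the problem to the purely combinatorial quantity
\begin{equation*}
S(x,y,z) = \sum_{a,b\in\{0,1\}} \big| (-1)^a x + (-1)^b y + (-1)^{a\oplus b} z \big|,
\end{equation*}
where $x=\Delta_i$, $y=\Delta_j$, $z=\Delta_{ij}$. The four terms are $|x+y+z|$, $|x-y-z|$, $|-x+y-z|$ and $|-x-y+z|$. The claim is that $S = 2\max\{2|x|,2|y|,2|z|,|x|+|y|+|z|\}$, which reproduces the factor $\tfrac12$ in front of the inner maximum once combined with the prefactor $\tfrac14$.

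To evaluate $S$, I first note that flipping the sign of any two of $x,y,z$ just permutes the four terms. I can therefore assume $y,z\ge 0$ and track only the sign of $x$. I would then rewrite $S$ as $\max_{\epsilon}\sum_{ab}\epsilon_{ab}(\cdots)$ over sign vectors $\epsilon\in\{\pm1\}^4$, which turns it into a maximum of linear forms. The choices $\epsilon=(1,1,1,1)$ give $0$. The choices with a single $-1$ give things like $2(x+y+z)$ in appropriate sign patterns, i.e.\ $2(|x|+|y|+|z|)$ when the product of the signs of $x,y,z$ is positive. The choices with two $-1$'s give $4|x|$, $4|y|$ or $4|z|$. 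When the sign product is negative, the single-flip forms give only bounds like $2(|x|+|y|-|z|)$, which never exceed $4\max\{|x|,|y|,|z|\}$.

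This gives $S = \max\{4|x|,4|y|,4|z|, 2(|x|+|y|+|z|)\}$, possibly with a sign-dependent refinement. The step I expect to be the main obstacle is reconciling that refinement with the statement. When $xyz<0$, the value $|x|+|y|+|z|$ may not actually be attained for a fixed pair. However, replacing $P_i$ by $-P_i$ flips the sign of $x$ and of $z$ together, so the product $xyz$ is unchanged. The sign of $xyz$ is therefore intrinsic to the basis and cannot be fixed by relabelling. I would check this carefully, since it may mean the stated maximum is only an upper bound for the $xyz<0$ bases. If so, I would argue that the outer maximum over commuting pairs is still achieved by a basis where the formula is exact, possibly by switching to a basis such as $\{P_i, P_iP_j\}$; otherwise I would record the expression as tight only up to this sign condition.

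Finally, I maximise over $(P_i,P_j)$ to obtain the displayed expression, using Lemma~\ref{lemma: general state discrimination} to absorb the optimal coarse graining.
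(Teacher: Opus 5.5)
Your reduction to $S(x,y,z)$ is correct, and so are your remarks that the signs of $P_i,P_j$ only permute outcomes and that Clifford transitivity gives the reverse direction. The gap is in evaluating $S$, which is the step that actually proves the lemma. Your enumeration over $\epsilon\in\{\pm1\}^4$ is incomplete, and the sign obstruction you then worry about comes from that omission.

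Write $L_1=x+y+z$, $L_2=x-y-z$, $L_3=-x+y-z$, $L_4=-x-y+z$. Then $\sum_k L_k=0$, so:
\begin{itemize}
\item an $\epsilon$ with a single $-1$ in slot $k$ gives $-2L_k$;
\item an $\epsilon$ with three $-1$'s, which you omitted, gives $+2L_k$ (this is where $2(x+y+z)$ actually comes from);
\item the two-flip choices give $\pm4x,\pm4y,\pm4z$.
\end{itemize}
The $L_k$ are exactly the patterns $\pm x\pm y\pm z$ with sign product $+1$, and the $-L_k$ are those with product $-1$. Together they exhaust all eight patterns, so their maximum is $2(|x|+|y|+|z|)$ for every $x,y,z$. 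Hence $S=\max\{4|x|,4|y|,4|z|,2(|x|+|y|+|z|)\}$ with no sign condition. Incidentally, the single-flip forms alone are the product-$(-1)$ patterns, so even in the case you did treat, your sign condition is reversed.

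A quicker route: $|-x+y-z|=|x-y+z|$ and $|-x-y+z|=|x+y-z|$, so $S=\sum_{s,t=\pm1}|x+sy+tz|$. This is invariant under flipping the sign of any \emph{single} variable, not just pairs, so you may take $x,y,z\geq 0$ from the start. For example, $x=y=1$, $z=-1$ gives $S=1+1+1+3=6=2(|x|+|y|+|z|)$, so the value is attained when $xyz<0$.

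As written, your argument establishes only the upper bound for bases with $xyz<0$ and leaves equality open. Your proposed repair would also fail if the obstruction were real. Passing to generators $\{P_i,\pm P_iP_j\}$ describes the same basis and maps $(x,y,z)$ to $(x,\pm z,\pm y)$ with the same sign on both. So $xyz$ is unchanged, just as you observed for $P_i\to -P_i$.

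The paper avoids the enumeration by pairing terms with $|a+b|+|a-b|=2\max\{|a|,|b|\}$. This gives $S=2\max\{|x|,|y+z|\}+2\max\{|x|,|y-z|\}$. Expanding the sum of two maxima into a maximum of four sums, and using $\max\{|y+z|,|y-z|\}=|y|+|z|$ and $|y+z|+|y-z|=2\max\{|y|,|z|\}$, yields $S=2\max\{2|x|,2|y|,2|z|,|x|+|y|+|z|\}$ unconditionally. Either fix completes your proof.
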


\begin{proof}
    This result arises from applying Lemma~\ref{lemma: general state discrimination} with an arbitrary two qubit stabilizer basis and then maximising over all stabilizer basis, i.e., 
    \begin{equation}
     P_{\rm suc}^{\rm stab, f}(\Delta) = \frac{1}{2} + \frac{1}{4}~\max_{P_i, P_j : [P_i, P_j]=0} \sum_{a,b \in \{0,1\} } \vert \textnormal{tr}\big[ \Pi^{ij}_{a,b} \Delta \big] \vert, \label{eq: sucess probability with stabs}
    \end{equation}
    where 
    \begin{equation}
        \Pi^{ij}_{a,b} = \frac{\mathbb{I} \otimes \mathbb{I} + (-1)^a P_i + (-1)^{b} P_j + (-1)^{a \oplus b} P_iP_j}{d}.
    \end{equation}
    It can be seen that 
    \begin{equation}
         \vert \textnormal{tr}\big[ \Pi^{ij}_{a,b} \Delta \big] \vert = \vert (-1)^a \Delta_i + (-1)^b \Delta_j + (-1)^{a \oplus b} \Delta_{ij} \vert, 
    \end{equation}
    where the factor of $d$ cancels due to the fact that $\textrm{tr}\big[ P_k P_l \big] = \delta_{kl} d$, where $\delta_{kl}$ is the Kronecker delta function. Hence,  
    \begin{align}
        \sum_{a,b \in \{0,1\} } \vert \textnormal{tr}\big[ \Pi^{ij}_{a,b} \Delta \big] \vert &= \vert \Delta_i + \Delta_j + \Delta_{ij} \vert + \vert - \Delta_i + \Delta_j - \Delta_{ij} \vert 
        + \vert \Delta_i - \Delta_j - \Delta_{ij} \vert + \vert - \Delta_i  - \Delta_j + \Delta_{ij} \vert.
    \end{align}
    By using the following equation: 
    \begin{equation}\label{eq: max equation}
        \max \{ \vert a \vert , \vert b \vert \} = \frac{\vert a + b \vert + \vert a - b \vert}{2},
    \end{equation}
    and grouping $\Delta_i$, $\Delta_j$, and $\Delta_{ij}$, this can be rewritten as 
    \begin{align} \label{eq: max equation}         \sum_{a,b \in \{0,1\} } \vert \textnormal{tr}\big[ \Pi^{ij}_{a,b} \Delta \big] \vert &= 2 \max \big\{ \vert \Delta_i \vert, \vert \Delta_j + \Delta_{ij} \vert \big\} + 2 \max \big\{ \vert \Delta_i \vert, \vert \Delta_j - \Delta_{ij} \vert \big\} \\ 
    \end{align}
    This object is therefore some function of the form
    \begin{align}
        f(a,b,c) &= \max \big\{ \vert c \vert, \vert a + b \vert \big\} + \max \big\{ \vert c \vert, \vert a - b\vert \big\} \\
        &= \max \big\{ 2 \vert c \vert, \vert c \vert + \vert a + b \vert, \vert c \vert + \vert a - b \vert, \vert a + b \vert + \vert a-b \vert  \big\}.
    \end{align}
    It can then be seen that 
    \begin{equation}
        \begin{split}
            \max\big\{  \vert a + b \vert,  \vert a - b \vert \big\} &=  \frac{\vert a+b+a-b \vert + \vert a+b-a+b \vert}{2} \\
            &=  \vert a \vert + \vert b \vert, 
        \end{split}
    \end{equation}
    It can therefore be seen that 
    \begin{equation}
        \max \big\{ \vert c \vert + \vert a + b \vert, \vert c \vert + \vert a - b \vert  \big\} = \vert a \vert + \vert b \vert + \vert c \vert.
    \end{equation}
    Hence, it can be concluded, using Eq.\eqref{eq: max equation}, that 
    \begin{equation}
         f(a,b,c) =  \max\big\{2 \vert a \vert, 2 \vert b \vert, 2 \vert c \vert,  \vert a \vert + \vert b \vert + \vert c \vert \big\}
    \end{equation}
    such that 
    \begin{equation}
        \sum_{a,b \in \{0,1\} } \vert \textnormal{tr}\big[ \Pi^{ij}_{a,b} \Delta \big] \vert = 2 \max\big\{ 2 \vert \Delta_i \vert, 2 \vert \Delta_j \vert, 2 \vert \Delta_{ij} \vert, \vert \Delta_i \vert + \vert \Delta_j \vert + \vert \Delta_{ij} \vert \big\}.
    \end{equation}
    Inputting this into Eq~\eqref{eq: sucess probability with stabs} completes the forward direction of the proof. To see the reverse, it can simply be noted that any pair of mutually commuting two qubit Pauli-strings corresponds to a stabilizer basis. Hence, given any two-qubit commuting Pauli-strings, a basis extracting them can always be found. 
\end{proof}

\hspace{-0.4cm}{\bf Single Qubit With Qubit Ancilla and Fixed Measurements}
\\
\\
We now return to the proof of the following Lemma: 
\\
\\
{\em {\bf Result~\ref{result: for qubits, not about of resource is useful}}
$P_{\rm suc}^{\rm stab}(\vec \Delta) = P_{\rm suc}^{\rm stab, f}(\vec \Delta, \omega)~\forall~\omega \in \mathcal{D}(\mathcal{H}_1)$.
}
\begin{proof}
    To prove this, Lemma~\ref{lemma: appendix : two qubit min error sucess prob} is employed with 
    \begin{align}
        \Tilde{\Delta} &= \rho \otimes \omega - \sigma \otimes \omega \\
        &= {\Delta} \otimes \omega,
    \end{align}
    where 
    \begin{equation}
        {\Delta} \coloneq \rho - \sigma = \Delta_x X + \Delta_y Y + \Delta_z Z,
    \end{equation}
    such that $\Vec{\Delta} = (\Delta_x, \Delta_y, \Delta_z)^t$ is the difference of the Bloch vectors of $\rho$ and $\sigma$, and 
    \begin{equation}
        \omega = \frac{1}{2}(\mathbb{I} + n_x X + n_y Y + n_z Z),
    \end{equation}
    such that $\Vec{n} = (n_x, n_y, n_z)^t$ is the Bloch vector of $\omega$. In terms of these Bloch vectors, $\Tilde{\Delta}$ is then 
    \begin{equation}
        \Tilde{\Delta} = \frac{1}{4}\Vec{\Delta} \cdot \Vec{\kappa} \otimes \mathbb{I} + \sum_{ij=1}^3 T_{ij} \kappa_i \otimes \kappa_j
    \end{equation}
    where 
    \begin{equation}
        \kappa_1 = X, \kappa_2=Y, \kappa_3 = Z ~ ~ {\rm and} ~ ~ \Vec{\kappa} = (\kappa_1, \kappa_2, \kappa_3)^t,
    \end{equation}
    and $T_{ij}$ are the matrix elements of the $3 \times 3$ matrix 
    \begin{equation}
        (\Vec{\Delta})(\Vec{n})^t = \frac{1}{4}~\begin{bmatrix} \label{eq: correlation matrix}
            \Delta_x n_x & \Delta_x n_y & \Delta_x n_z \\
            \Delta_y n_x & \Delta_y n_y & \Delta_y n_z \\
            \Delta_z n_x & \Delta_z n_y & \Delta_z n_z \\
        \end{bmatrix}
    \end{equation}
    To proceed, we firstly note that a simplification of the optimisation over all stabilizer basis (Pauli-strings) can be made: if the two qubit Clifford unitary used to set the stabilizer basis being measured is a tensor product of two single qubit Clifford, it can be seen that no advantage can be gained. Specifically, letting $U = U_1 \otimes U_2 \in \mathcal{C}_2$ where $U_1, U_2 \in \mathcal{C}_1$, this can be seen as 
    \begin{align}
        \sum_{a,b \in \{0,1\}} \vert \textrm{tr}\big[ \ketbra{ab}U(\Delta \otimes \omega)U^\dagger \big] \vert &= \sum_{a,b \in \{0,1\}} \vert \textrm{tr}\big[(\ketbra{a} \otimes \ketbra{b})(U_1 \otimes U_2)(\Delta \otimes \omega)(U_1 \otimes U_2)^\dagger \big] \vert \\
        &= \sum_{a,b \in \{0,1\}} \vert \textrm{tr}\big[ \ketbra{a}U_1 \Delta U_1^\dagger \big] \vert ~\vert \textrm{tr}\big[ \ketbra{b} U_2 \omega U_2^\dagger \big] \vert \\
        &= \sum_{a \in \{0,1\}} \vert \textrm{tr}\big[ \ketbra{a}U_1 \Delta U_1^\dagger \big] \vert ~\textrm{tr}\bigg[ \bigg(\sum_{b \in \{0,1\}} \ketbra{b} \bigg) U_2 \omega U_2^\dagger \bigg],
    \end{align}
    where the absolute value can be dropped on the second term as $U_2 \omega U_2^\dagger$ is a quantum state and $\ketbra{b}$ a POVM element such that \hbox{$\textrm{tr}\big[ \ketbra{b} U_2 \omega U_2^\dagger \big] \geq 0~\forall~ \ketbra{b}~~{\rm and}~ ~U_2$}. We then note that $\sum_{b \in \{0,1\}} \ketbra{b} = \mathbb{I}$, meaning that 
    \begin{equation}
         \sum_{a,b \in \{0,1\}} \vert \textrm{tr}\big[ \ketbra{ab}U(\Delta \otimes \omega)U^\dagger \big] \vert = \sum_{a \in \{0,1\}} \vert \textrm{tr}\big[ \ketbra{a}U_1 \Delta U_1^\dagger \big] \vert,
    \end{equation}
    such that if a unitary of the form $U=U_1 \otimes U_2$ is used the success probability is independent of the state in the ancillary system. Physically, this is of course an obvious statement: if one acts independently on a system and ancilla which are initially uncorrelated (in a product state), nothing done to the ancilla will impact the system. As a result of this, one does not need to consider measuring in basis of the form 
    \begin{equation}
        \bigg\{ \Pi^{ij}_{a,b} =  \frac{\mathbb{I}_A \otimes \mathbb{I}_B+(-1)^a~P_i \otimes \mathbb{I} + (-1)^b \mathbb{I} \otimes P_j + (-1)^{a \oplus b} P_i \otimes P_j}{4} : a,b \in \{ 0,1 \}, ~P_i, P_j \in \mathcal{P}_1, ~[P_i, P_j]=0 ~\bigg\},
    \end{equation}
    as such basis are generated by applying a pre-processing tensor product of single qubit Clifford unitaries. Hence, only non-product stabilizer basis that consist of summations over the following set of Pauli strings need to be considered:
    \begin{align*}
        & \{ X \otimes X, Y \otimes Y, Z \otimes Z \} \\
        & \{ X \otimes Y, Y \otimes X, Z \otimes Z \} \\
        & \{ X \otimes Z, Y \otimes Y, Z \otimes X \} \\
        & \{ X \otimes X, Y \otimes Z, Z \otimes Y \} \\
        & \{ X \otimes Y, Y \otimes Z, Z \otimes X \} \\
        & \{ X \otimes Z, Z \otimes Y, Y \otimes X \} 
    \end{align*}
    Considering this, and Lemma~\ref{lemma: appendix : two qubit min error sucess prob}, it can be seen that the maximum success probability of discrimination is related to either the maximum absolute value of a matrix element in Eq.\eqref{eq: correlation matrix}, or the largest sum of the absolute value of three matrix elements where each comes from a distinct row and column. We now assume that the optimal basis has been found, and then determine whether either of these values could give an improvement over the success probability of measuring only the system using stabilizer circuits. 

    Initially, the largest sum of the absolute value of three matrix elements, where each element comes from a distinct row and column of Eq.\eqref{eq: correlation matrix}, is investigated. It can be seen that this will be given by
    \begin{equation}
        \frac{1}{4} ~\vert \Vec{\Delta} \vert ^\downarrow \cdot \vert \Vec{n} \vert ^\downarrow, \label{eq: max over the correlation matrix}
    \end{equation}
    where $\vert \vec{a} \vert^\downarrow$ means taking the absolute value of each of the elements of the vector $\vec{a}$ and then placing the elements in non-increasing order. This is optimal as it takes the product of the largest absolute value of $\Vec{\Delta}$ with the largest absolute value of $\Vec{n}$ and then the second largest, and so on. 
    
    As this is a dot product, this value will be maximised if $\vert \Vec{n} \vert ^\downarrow$ points in the same direction as $ \vert \Vec{\Delta} \vert ^\downarrow$. Let 
    \begin{equation}
        \vert \Vec{\Delta} \vert^\downarrow = \begin{pmatrix}
            \vert \Delta_a \vert \\
            \vert \Delta_b \vert \\
            \vert \Delta_c \vert \\
        \end{pmatrix}, ~ ~ \vert \Vec{n} \vert^\downarrow = \begin{pmatrix}
            \vert n_a \vert \\
            \vert n_b \vert \\
            \vert n_c \vert 
        \end{pmatrix},
    \end{equation}
    where $a,b,c$ is some labelling to order both vectors in non-increasing order, i.e., such that
    \begin{align} \label{eq: assumptions on ordereding}
        \vert \Delta_a \vert \geq \vert \Delta_b \vert \geq \vert \Delta_c \vert \\
        \vert n_a \vert \geq \vert n_b \vert \geq \vert n_c \vert.
    \end{align}
    Eq.\eqref{eq: max over the correlation matrix} is then maximised if 
    \begin{equation}
        \vert \Vec{n} \vert^\downarrow = \frac{1}{\sqrt{\vert \Delta_a \vert^2 + \vert \Delta_b \vert^2 + \vert \Delta_c \vert ^2}} \begin{pmatrix}
            \vert \Delta_a \vert \\
            \vert \Delta_b \vert \\
            \vert \Delta_c \vert \\
        \end{pmatrix},
    \end{equation}
    such that 
    \begin{equation}
        \vert \Vec{\Delta} \vert ^\downarrow \cdot \vert \Vec{n} \vert ^\downarrow = \sqrt{\vert \Delta_a \vert^2 + \vert \Delta_b \vert^2 + \vert \Delta_c \vert ^2}.
    \end{equation}
    The success probability would therefore be 
    \begin{equation}
        P_{\rm suc}^{\rm stab, f}(\omega) = \frac{1}{2} + \frac{1}{2} \times \frac{1}{4} \sqrt{\vert \Delta_a \vert^2 + \vert \Delta_b \vert^2 + \vert \Delta_c \vert ^2}.
    \end{equation}
    For
    $ P_{\rm suc}^{\rm stab, f}(\Delta, \omega) > P_{\rm suc}^{\rm stab}(\Delta),
    $ 
    it must therefore be the case that 
    \begin{equation}
        \vert \Delta_b \vert^2 + \vert \Delta_c \vert^2 > 3 \vert \Delta_a \vert^2,
    \end{equation}
    which is never possible due to the relations in Eq.\eqref{eq: assumptions on ordereding} (they upper bound the left-hand side by $2 \vert \Delta_a \vert^2$). Hence, there exists no $\omega$ such that, even when optimising over all stabilizer circuits, the success probability can be improved upon in this way. 
    
    We now turn to the maximum absolute value of a matrix element in Eq.\eqref{eq: correlation matrix}.
    It can be seen that this is given by $\frac{1}{4}~\vert \vert \Vec{\Delta} \vert \vert_\infty  \vert \vert \Vec{n} \vert \vert_\infty$, i.e., it is the product of largest components of the absolute values of $\Vec{\Delta}$ and $\Vec{n}$. This would give a success probability of
    \begin{align}
         P_{\rm suc}^{\rm stab, f}(\vec \Delta, \omega) &= \frac{1}{2} + \frac{1}{2} \times 2 \times \frac{1}{4}~\vert \vert \Vec{\Delta} \vert \vert_\infty  \vert \vert \Vec{n} \vert \vert_\infty \\
         &=\frac{1}{2} + \frac{1}{4} \vert \vert \Vec{\Delta} \vert \vert_\infty  \vert \vert \Vec{n} \vert \vert_\infty \\
         &\leq P_{\rm suc}^{\rm stab}(\Delta)
    \end{align}
    with equality if and only if $\vert \vert \Vec{n} \vert \vert_\infty = 1$. This final condition means only if $\omega$ is a pure stabilizer state does there exists a measurement on the joint system and ancillary space that gives a success probability equal to what could be achieved by measuring the system alone. 

    In summary, this means that there exists no $\omega$ such that using fixed stabilizer circuits the success probability of discriminating $\Delta$ can be improved. 
\end{proof}

\subsection{Proof of Qubit Helstrom Bound}\label{appendix: Proof of Qubit Helstrom Bound}

\begin{theorem}[Qubit Helstrom Bound \cite{Helstrom1976}] \label{qubit: helstrom} 
Let $\rho, \sigma \in \mathcal{D}(\mathcal{H}_1)$. The optimal minimum error success probability when discriminating  between $\rho$ and $\sigma$ is given by 
\begin{equation}
    P_{\rm suc}^*(\vec \Delta) = \frac{1}{2} + \frac{1}{4} \vert \vert  \Vec{\Delta} \vert \vert_2,
\end{equation}
where $\Vec{\Delta} = n_\rho - n_\sigma$, with $n_\rho, n_\sigma$ the Bloch vectors of $\rho$ and $\sigma$ respectively, and  
\begin{equation}
    \vert \vert \Vec{v} \vert \vert_2 = \sqrt{\vert v_1 \vert^2 + \vert v_2 \vert^2 + \vert v_3 \vert^2},
\end{equation}
is the vector $l$-$2$ norm. 
\end{theorem}

\begin{proof}
    In the computational basis $\Delta = \rho - \sigma$ is given by   
    \begin{equation}
        \Delta = \frac{1}{2}\begin{bmatrix}
            \Delta_z & \Delta_x - i \Delta_y \\
            \Delta_x + i \Delta_y & - \Delta_z
        \end{bmatrix}
    \end{equation}
    where 
    \begin{equation}
        \Vec{\Delta} = 
        \begin{bmatrix}
           \Delta_x \\
           \Delta_y \\
           \Delta_z
        \end{bmatrix} =  \begin{bmatrix}
            \textrm{tr}\big[\Delta X \big] \\
            \textrm{tr}\big[\Delta Y \big] \\
            \textrm{tr}\big[\Delta Z \big] \\
        \end{bmatrix} = 
        \Vec{n}_\rho - \Vec{n}_\sigma = \begin{bmatrix}
            \textrm{tr}\big[\rho X \big] - \textrm{tr}\big[\sigma X \big] \\
            \textrm{tr}\big[\rho Y \big] - \textrm{tr}\big[\sigma Y \big]\\
            \textrm{tr}\big[\rho Z \big] - \textrm{tr}\big[\sigma Z \big] \\
        \end{bmatrix}
    \end{equation}
    The eigenvalues of $\Delta$ are then 
    \begin{align*}
        \lambda_+ = \frac{1}{2}\sqrt{\Delta_z^2 + \Delta_x^2 + \Delta_y^2} \\
        \lambda_- = -\frac{1}{2}\sqrt{\Delta_z^2 + \Delta_x^2 + \Delta_y^2}. \\
    \end{align*}
    As $\Delta$ is a Hermitian operator, $\Delta^\dagger = \Delta$, the singular values of $\Delta$ are the absolute value of the eigenvalues. Then, the one norm is the sum of the singular values and hence 
    \begin{align*}
        \onenorm{\Delta} &= \vert \lambda_+ \vert + \vert \lambda_- \vert \\
        &= 2 \vert \lambda_+ \vert \\
        &= \vert \vert \Vec{\Delta} \vert \vert_2.
    \end{align*}
    Helstrom's bound is therefore 
    \begin{equation}
        \begin{split}
            P_{\rm suc}^*(\Delta) &= \frac{1}{2} + \frac{1}{4} \vert \vert \Delta \vert \vert_1 \\
            &= \frac{1}{2} + \frac{1}{4} \vert \vert \Vec{\Delta} \vert \vert_2,
        \end{split}
    \end{equation}
    completing the proof. 
\end{proof}

\subsection{Proof of Result~\ref{result: appending any qubit state with feedfoward}} \label{appendix: appending any qubit state with feedfoward}
As above in Appendix~\ref{appendix: result: for qubits, not about of resource is useful}, we first prove a Lemma for the minimum error success probability under adaptive measurements when $\rho, \sigma \in \mathcal{D}(\mathcal{H}_2)$. In this case, the figures of merit is 
\begin{equation}
    \begin{split}
        P_{\rm suc}^{\rm stab, a}(\Delta) &= \frac{1}{2} + \frac{1}{4} \max_{U \in \mathcal{C}_2}  \max_{\{V_b : V_b \in \mathcal{C}_1 \}_b} \sum_{a \in \{0,1\}^1} \sum_{b \in \{0,1\}^1} \big\vert \textrm{tr} \big[ (U^\dagger ( V^\dagger_b \ketbra{a} V_b \otimes \ketbra{b})U) \Delta \big] \big\vert,
    \end{split}
\end{equation}
where for $P_{\rm suc}^{\rm stab, a}(\Delta)$, after a global Clifford unitary, the outcome of a computational basis measurement on the second qubit determines which local Clifford unitary is then applied to the first qubit, before another computational basis measurement is made. Again, as above, we note that for two qubits a stabilizer basis is generated by two Pauli-strings, $P_i$ and $P_j$. Once the above expressions is known, it is then applied to the input state $(\cdot) \otimes \omega$. 
\\
\\
\hspace{-0.4cm}{\bf Two Qubit Adaptive Measurements}

\begin{lemma}\label{lemma: appendix : two qubit min error sucess prob with feed-forward}
    The minimum error success probability with feed-forward when $\Delta$ is the difference of two two-qubit density operators is 
    \begin{equation} \label{result: stab min error equality WRONG?}
        P_{\rm suc}^{\rm stab, a}(\Delta) = \frac{1}{2} + \frac{1}{2} \max_{P_i, P_j, P_k : ~[P_i, P_j]=0, ~[P_k, P_j]=0} \bigg[ \max \big\{ \vert \Delta_j \vert, \vert \Delta_i \pm {\rm sign}(P_iP_j) \Delta_{ij} \vert \big\} + \max \big\{ \vert \Delta_j \vert, \vert \Delta_k \mp  {\rm sign}(P_kP_j) \Delta_{kj} \vert \big\} \bigg], \\
    \end{equation}
    where 
    \begin{equation}
         \Delta_{i} = \frac{\textnormal{tr}\big[ P_i \Delta]}{d} ~ ~ {\rm and} ~ ~ \Delta_{ij} = \frac{\textnormal{tr}\big[ P_i P_j \Delta]}{d},
    \end{equation}
    and $d=4$.
\end{lemma}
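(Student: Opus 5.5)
The plan is to compute the effective POVM produced by an arbitrary adaptive circuit, evaluate each term $\vert \textrm{tr}[T_{a,b}\Delta]\vert$ through the Pauli decomposition, collapse the sums over $a$ with the identity $\max\{\vert x\vert,\vert y\vert\} = (\vert x+y\vert + \vert x-y\vert)/2$ already used in Lemma~\ref{lemma: appendix : two qubit min error sucess prob}, and then show that optimising over $(U,\{V_b\})$ is the same as optimising over triples of Pauli strings $(P_i,P_j,P_k)$ with $[P_i,P_j]=[P_k,P_j]=0$.

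\textbf{Structure of the POVM.} Set $P_j \coloneq U^\dagger(\mathbb{I}\otimes Z)U$; this is a Hermitian two-qubit Pauli string up to sign. Write $V_b^\dagger Z V_b = Q_b$, where $Q_b \in \{\pm X,\pm Y,\pm Z\}$ is a single-qubit Pauli. Then the first factor becomes $V_b^\dagger \ketbra{a} V_b = \frac{1}{2}(\mathbb{I}+(-1)^a Q_b)$. Define $P_i \coloneq U^\dagger(Q_0\otimes\mathbb{I})U$ and $P_k \coloneq U^\dagger(Q_1\otimes \mathbb{I})U$. Both commute with $P_j$, and both are distinct from $\mathbb{I}$ and $P_j$. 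Because Cliffords preserve products, each POVM element takes the form
\begin{equation*}
T_{a,b} = \frac{1}{4}\big(\mathbb{I} + (-1)^a P_{i/k} + (-1)^b P_j + (-1)^{a\oplus b} P_{i/k}P_j\big),
\end{equation*}
with $P_i$ used for $b=0$ and $P_k$ used for $b=1$. The product $P_{i/k}P_j$ equals $\pm$ a Hermitian Pauli string, and I record that sign as ${\rm sign}(P_iP_j)$ and ${\rm sign}(P_kP_j)$.

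\textbf{Evaluating the traces.} Since $\textrm{tr}\,\Delta = 0$, the identity term drops out. Orthogonality of Pauli strings, $\textrm{tr}[P_kP_l]=d\,\delta_{kl}$, gives for $b=0$
\begin{equation*}
\vert \textrm{tr}[T_{a,0}\Delta]\vert = \vert (-1)^a(\Delta_i + {\rm sign}(P_iP_j)\Delta_{ij}) + \Delta_j\vert .
\end{equation*}
For $b=1$ the analogous expression holds with $\Delta_j \to -\Delta_j$ and the relative sign of $\Delta_{kj}$ flipped. Summing over $a$ and applying the max identity yields $2\max\{\vert\Delta_j\vert, \vert\Delta_i \pm {\rm sign}(P_iP_j)\Delta_{ij}\vert\}$ for $b=0$, and $2\max\{\vert\Delta_j\vert,\vert\Delta_k\mp{\rm sign}(P_kP_j)\Delta_{kj}\vert\}$ for $b=1$. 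The $\pm$ absorbs the overall sign freedom of $Q_b$, or equivalently of which outcome is labelled $+1$. Inserting these into Lemma~\ref{lemma: general state discrimination}, which contributes the factor $\frac14$, gives exactly the stated expression for a fixed circuit.

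\textbf{Converse (realisability).} Given any $P_j$ and any $P_i,P_k$ commuting with $P_j$ and not in $\{\mathbb{I},P_j\}$, I construct a circuit attaining them. First choose $U\in\mathcal{C}_2$ with $U^\dagger(\mathbb{I}\otimes Z)U = P_j$ and $U^\dagger(Z\otimes\mathbb{I})U = P_i$; such a $U$ exists for any commuting independent pair, as in Lemma~\ref{lemma: appendix : two qubit min error sucess prob}. The Paulis commuting with $P_j$ are then the images under $U^\dagger(\cdot)U$ of $Q\otimes\mathbb{I}$ and $Q\otimes Z$ for $Q\in\{\mathbb{I},X,Y,Z\}$. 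On the $b=1$ eigenspace, $Q\otimes Z$ acts as $-Q\otimes\mathbb{I}$. Hence any admissible $P_k$ is realised, up to a sign, by choosing $V_1$ with $V_1^\dagger Z V_1 = Q$. That sign is again absorbed by the $\pm$ in the formula.

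I expect the main obstacle to be this sign bookkeeping. Specifically, I must check that the phases ${\rm sign}(P_iP_j)$, ${\rm sign}(P_kP_j)$, and the eigenspace-dependent sign of $P_kP_j$ combine into precisely the $\pm/\mp$ pattern claimed, and not into an independent sign for each branch. I would verify this by explicitly tracking $(Q\otimes Z)(\mathbb{I}\otimes\ketbra{1}) = -(Q\otimes\mathbb{I})(\mathbb{I}\otimes\ketbra{1})$ through the conjugation by $U$.
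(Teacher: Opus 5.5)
Your proposal is correct and follows essentially the paper's proof. Both arguments write the effects $U^\dagger(V_b^\dagger\ketbra{a}V_b\otimes\ketbra{b})U$ in Pauli form and apply $|x+y|+|x-y|=2\max\{|x|,|y|\}$ on each $b$-branch. The converse then takes $U$ with $U^\dagger(\mathbb{I}\otimes Z)U=P_j$ and splits the Paulis commuting with $P_j$ into images of $Q\otimes\mathbb{I}$ and $Q\otimes Z$; your observation that $Q\otimes Z$ acts as $-Q\otimes\mathbb{I}$ on the $b=1$ branch is the paper's relabelling $P_k\leftrightarrow P_kP_j$.

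The only slip is in the sign bookkeeping you flagged. Flipping $Q_b$ merely swaps the two $a$-outcomes and changes nothing. The correlated $\pm/\mp$ instead comes from the sign of $P_j$, i.e.\ from swapping $V_0\leftrightarrow V_1$ between the $b$-outcomes, which is how the paper obtains it.
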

\begin{proof}
    Given an input on a two qubit Hilbert space $(\mathcal{H}_1^A \otimes \mathcal{H}_1^B)$, the action of an adaptive two qubit stabilizer circuit on a state $\rho \in (\mathcal{H}_1^A \otimes \mathcal{H}_1^B)$ can be captured by the following instrument: 
    \begin{equation}
        \mathcal{E}(\rho) = \sum_{a,b \in \{0,1\}} (\ketbra{a} \otimes \mathbb{I})(V_b \otimes \mathbb{I})(\mathbb{I} \otimes \ketbra{b})U(\rho)U^\dagger(\mathbb{I} \otimes \ketbra{b})(V_b^\dagger \otimes \mathbb{I})(\ketbra{a} \otimes \mathbb{I}) \otimes \ketbra{ab},
    \end{equation}
    where $U \in \mathcal{C}_2$ and $V_b \in \mathcal{C}_1~\forall~b$, as detailed in Appendix~\ref{appendix: stabilzer measurements, adaptive measurements}. The probability of getting the outcome bit-string $(a,b)$ for an input $\rho$, denoted $P(a,b)_\rho$, is therefore 
    \begin{align*}
        P(a,b)_\rho &= \textrm{tr}\big[  (\ketbra{a} \otimes \mathbb{I})(V_b \otimes \mathbb{I})(\mathbb{I} \otimes \ketbra{b})U(\rho)U^\dagger(\mathbb{I} \otimes \ketbra{b})(V_b^\dagger \otimes \mathbb{I})(\ketbra{a} \otimes \mathbb{I})   \big] \\
        &= \textrm{tr}\big[  (\ketbra{a} \otimes \ketbra{b})(V_b \otimes \mathbb{I})U(\rho)U^\dagger (V_b^\dagger \otimes \mathbb{I}) \big] \\
        &= \textrm{tr}\big[ U^\dagger ( V_b^\dagger \ketbra{a} V_b \otimes \ketbra{b})U \rho \big]. \\
    \end{align*}   
    To get a picture of what is happening here, we will firstly explicitly state the operators $ V_b^\dagger\ketbra{a} V_b \otimes \ketbra{b}$ for each output bit-string $(a,b)$:
    \begin{align}
        V_0^\dagger\ketbra{0} V_0 \otimes \ketbra{0} &= \frac{\mathbb{I} \otimes \mathbb{I} + V^\dagger_0 Z V_0 \otimes \mathbb{I} + \mathbb{I} \otimes Z + V_0^\dagger Z V_0 \otimes Z}{4} \\
        V_0^\dagger \ketbra{1} V_0 \otimes \ketbra{0} &= \frac{\mathbb{I} \otimes \mathbb{I} - V^\dagger_0 Z V_0 \otimes \mathbb{I} + \mathbb{I} \otimes Z - V_0^\dagger Z V_0 \otimes Z}{4} \\
        V_1^\dagger \ketbra{0} V_1 \otimes \ketbra{1} &= \frac{\mathbb{I} \otimes \mathbb{I} + V^\dagger_1 Z V_1 \otimes \mathbb{I} - \mathbb{I} \otimes Z - V_1^\dagger Z V_1 \otimes Z}{4} \\
        V_1^\dagger \ketbra{1} V_1 \otimes \ketbra{1} &= \frac{\mathbb{I} \otimes \mathbb{I} - V^\dagger_1 Z V_1 \otimes \mathbb{I} - \mathbb{I} \otimes Z + V_1^\dagger Z V_1 \otimes Z}{4}. \\
    \end{align}
    As $V_0, V_1 \in \mathcal{C}_1$ it means that $V_0^\dagger Z V_0, V_1^\dagger Z V_1 \in \mathcal{P}_1$. Hence, $\big\{ V_0^\dagger \ketbra{0} V_0 \otimes \ketbra{0}, V_0^\dagger \ketbra{1} V_0 \otimes \ketbra{0} \big\}$ are two of the four rank one projectors that form a stabilizer basis. The same is then true for $\big\{ V_0^\dagger \ketbra{0} V_0 \otimes \ketbra{1}, V_0^\dagger\ketbra{1} V_0 \otimes \ketbra{1} \big\}$. Note if $V_0 = V_1$ then together these four operators will form a complete basis. 

    Applying a global Clifford unitary to each of these operators will then map each of the Pauli-strings to another Pauli-string and will preserve the commuting strucutre:  
    \begin{equation}
        \begin{split} \label{eq: decomposition of effect}
        U^\dagger(V_0^\dagger\ketbra{0} V_0 \otimes \ketbra{0})U &= \frac{\mathbb{I} \otimes \mathbb{I} + P_i + P_j + P_iP_j}{4}  = \ketbra{\Phi_1} \\
        U^\dagger(V_0^\dagger\ketbra{1} V_0 \otimes \ketbra{0})U &= \frac{\mathbb{I} \otimes \mathbb{I} - P_i + P_j - P_iP_j}{4} = \ketbra{\Phi_2}\\
        U^\dagger(V_1^\dagger\ketbra{0} V_1 \otimes \ketbra{1})U &= \frac{\mathbb{I} \otimes \mathbb{I} + P_k - P_j - P_kP_j}{4} = \ketbra{\Psi_3} \\
        U^\dagger(V_1^\dagger\ketbra{1} V_1 \otimes \ketbra{1})U &= \frac{\mathbb{I} \otimes \mathbb{I} - P_k - P_j + P_kP_j}{4} = \ketbra{\Psi_4} \\
        \end{split}
    \end{equation}
    where 
    \begin{align*}
        \{ \ketbra{\Phi_1}, \ketbra{\Phi_2}, \ketbra{\Phi_3}, \ketbra{\Phi_4} \}  ~ ~ {\rm and} ~ ~ 
        \{ \ketbra{\Psi_1}, \ketbra{\Psi_2}, \ketbra{\Psi_3}, \ketbra{\Psi_4} \}
    \end{align*}
    are two two-qubit stabilizer basis. 
    
    It is known, and can easily be seen from the above decomposition, that $\big\{ \ketbra{\Phi_1}, \ketbra{\Phi_2}, \ketbra{\Psi_3}, \ketbra{\Psi_4} \big\} $
    is a POVM --- each element is positive and they sum to the identity. However, it is not necessarily a stabilizer POVM, as if $[P_i, P_j] \neq 0$, it cannot be generated from computational basis states and Clifford unitaries alone. 
    
    As a result of this being a POVM, Lemma~\ref{lemma: general state discrimination} can be employed for calculating the success probability:
    \begin{equation}\label{eq: p_suc in the two qubit feed forward case}
    \begin{split}
        P_{\rm suc}^{\rm stab, a}(\Delta) &= \frac{1}{2} + \frac{1}{4} \vert \textrm{tr}\big[ \ketbra{\Phi_1} \Delta \big] \vert + \vert \textrm{tr}\big[ \ketbra{\Phi_2} \Delta  \big] \vert + \vert \textrm{tr}\big[ \ketbra{\Psi_3} \Delta  \big] \vert + \vert \textrm{tr}\big[ \ketbra{\Psi_4} \Delta  \big] \vert \\
        &= \frac{1}{2} + \frac{1}{4} \bigg[  \vert \Delta_i + \Delta_j + \Delta_{ij} \vert + \vert - \Delta_i + \Delta_j -  \Delta_{ij} \vert + \vert \Delta_k - \Delta_j - \Delta_{kj} \vert + \vert - \Delta_k - \Delta_j + \Delta_{kj} \vert \bigg] \\ 
        &= \frac{1}{2} + \frac{1}{4} \bigg[  \vert\Delta_j + (\Delta_i + \Delta_{ij}) \vert + \vert \Delta_j - (\Delta_i +  \Delta_{ij}) \vert + \vert -\Delta_j + (\Delta_k - \Delta_{kj}) \vert + \vert - \Delta_j - (\Delta_k - \Delta_{kj}) \vert \bigg] \\ 
        &= \frac{1}{2} + \frac{1}{4} \bigg[ 2 \max\big\{ \vert \Delta_j \vert, \vert \Delta_i + \Delta_{ij} \vert \big\} +  2 \max\big\{ \vert \Delta_j \vert, \vert \Delta_k - \Delta_{kj} \vert \big\}\bigg].
    \end{split}
    \end{equation}
    We now focus on determining when the second term in each of the maximums is the sum of two Pauli-coefficients of $\Delta$, and when it is the difference of two Pauli-coefficients of $\Delta$. What determines if this term is a difference or a sum is whether the sign of $P_i$ and $P_iP_j$ are the same, or different. In Eq.~\eqref{eq: decomposition of effect} it can be seen that this sign structure is initially set by which output bit-string one has. For the outcome $b=0$ the signs are the same, such that the term $\vert \Delta_i + \Delta_{ij} \vert$ appears in $P_{\rm suc}^{\rm stab, a}(\Delta)$. For $b=1$, the signs are different and hence the term $\vert \Delta_k - \Delta_{kj} \vert$ appears in $P_{\rm suc}$. Therefore, for a given $P_i, P_j, P_k$ the positive and negative signs are interchangeable by instead applying $V_1$ if $b=0$ and $V_0$ if $b=1$, such that:
    \begin{equation}
       P_{\rm suc}^{\rm stab, a}(\Delta) = \frac{1}{2} + \frac{1}{4} \bigg[ 2 \max\big\{ \vert \Delta_j \vert, \vert \Delta_i \pm \Delta_{ij} \vert \big\} +  2 \max\big\{ \vert \Delta_j \vert, \vert \Delta_k \mp \Delta_{kj} \vert \big\}\bigg].
    \end{equation}
    We then note that if the product $P_iP_j$ has an overall phase of $-1$, then the sign of the extracted coefficient $\Delta_{ij}$ will be flipped --- this is likewise the case for $P_kP_j$. Hence, $P^{\rm stab, a}_{\rm suc}(\Delta)$ is 
    \begin{align*}
    P_{\rm suc}^{\rm stab, a}(\Delta) = \frac{1}{2} + \frac{1}{4} \bigg[ 2 \max\big\{ \vert \Delta_j \vert, \vert \Delta_i \pm {\rm sign}(P_iP_j) \Delta_{ij} \vert \big\} +  2 \max\big\{ \vert \Delta_j \vert, \vert \Delta_k  \mp {\rm sign}(P_kP_j)  \Delta_{kj} \vert \big\}\bigg],
    \end{align*}
    where ${\rm sign(P_iP_j)} \in \{-1, +1\}$ is the overall phase of the product of the Pauli-strings $P_i$ and $P_j$. Therefore, the different sign configurations that can be created are limited by the structure of the available stabilizer basis, with arbitrary sign configurations not being able to be created i.e., one cannot just choose $+$ or $-$ in each term. 
    To complete the proof, we must then show that whenever there exists two qubit Pauli-strings $P_i, P_j, P_k \in \mathcal{P}_2$ such that 
    \begin{equation}
        [P_i, P_j] = 0 ~ ~ {\rm and} ~ ~ [P_k, P_j]=0,
    \end{equation}
    it is possible to find a two qubit Clifford unitary $U \in \mathcal{C}_2$, and two single qubit Clifford unitaries $V_0, V_1 \in \mathcal{C}_1$, such that 
    \begin{align*}
        P_i &= U^\dagger (V_0^\dagger \otimes \mathbb{I})({Z} \otimes \mathbb{I})(V_0 \otimes \mathbb{I})U \\
        P_j &= U^\dagger (\mathbb{I} \otimes Z) U \\
        P_k & = U^\dagger (V_1^\dagger \otimes \mathbb{I})({Z} \otimes \mathbb{I})(V_1 \otimes \mathbb{I})U.
    \end{align*}
    To see this, we first note that there exists a Clifford unitary that maps each Pauli-string to every other Pauli-string, hence there always exists a $U^\dagger \in \mathcal{C}_2$ such that 
    \begin{equation}
        P_j = U^\dagger(\mathbb{I} \otimes Z)U. 
    \end{equation}
    Considering now the Pauli-strings $P'_i=U P_i U^\dagger$, such that $P_i = U^\dagger P_i'U$. As commutation relations are preserved under conjugation by Clifford unitaries it follows that 
    \begin{equation}
        [P_i, P_j] = 0 \iff [P'_i, \mathbb{I} \otimes Z] = 0. 
    \end{equation}
    This then implies that 
    \begin{equation}
        P'_i \in \{A \otimes \mathbb{I}, A \otimes Z\},
    \end{equation}
    for $A \in \mathcal{P}_1$. As there exists single qubit Clifford unitaries that map each Pauli to every other Pauli, there always exists a $V_0^\dagger \in \mathcal{C}_1$ such that $A=V_0^\dagger Z V_0$, hence 
    \begin{equation} \label{eq: different second Pauli-terms}
        P_i' = (V_0^\dagger \otimes \mathbb{I})(Z \otimes \mathbb{I})(V_0 \otimes \mathbb{I}) ~ ~ {\rm or} ~ ~ P_i' = (V_0^\dagger \otimes \mathbb{I})(Z \otimes {Z})(V_0 \otimes \mathbb{I}).
    \end{equation}
    One can then conjugate $P_i'$ via $U^\dagger$ to get $P_i$. For the first case in Eq.\eqref{eq: different second Pauli-terms}, the proof is complete for $P_i$. For the latter case, it can be seen that 
    \begin{align*}
        P_iP_j &= \big[U^\dagger(V_0^\dagger \otimes \mathbb{I})(Z \otimes {Z})(V_0 \otimes \mathbb{I})U \big] ~\big[ U^\dagger(\mathbb{I} \otimes Z) U \big]\\
        &= U^\dagger (V_0^\dagger \otimes \mathbb{I})(Z \otimes \mathbb{I})(V_0 \otimes \mathbb{I})U,
    \end{align*}
    which means that under a relabelling $P_i \rightarrow P_iP_j$ and $P_iP_j \rightarrow P_i$ the conditions are recovered. This does not effect the success probability as the resulting sign permutation has no effect due to the absolute value. The above proof will also hold for $P_j$. This completes the proof. 
\end{proof}

\hspace{-0.4cm}{\bf Single Qubit With Qubit Ancilla and Adaptive Measurements}
\\
\\
Lemma~\ref{lemma: appendix : two qubit min error sucess prob with feed-forward} says that when trying to discriminate between two two-qubit states with feed-forward the sucess probability takes one of four values, assuming the optimal $P_i, P_j, P_k$ have been found: 
\begin{enumerate}
    \item $P_{\rm suc}^{\rm stab, a}(\Delta) = \frac{1}{2} + \vert \Delta_j \vert$ 
    \item $P_{\rm suc}^{\rm stab, a}(\Delta) = \frac{1}{2} + \frac{1}{2}\big[ \vert \Delta_i \pm {\rm sign}(P_iP_j) \Delta_{ij} \vert + \vert \Delta_j \vert \big] \leq \frac{1}{2} + \frac{1}{2} \big[ \vert \Delta_i \vert + \vert \Delta_j \vert + \vert \Delta_{ij} \vert \big]. $
    \item $P_{\rm suc}^{\rm stab, a}(\Delta) = \frac{1}{2} + \frac{1}{2}\big[ \vert \Delta_j \vert + \vert \Delta_k \mp  {\rm sign}(P_kP_j) \Delta_{kj} \vert \big] \leq \frac{1}{2} + \frac{1}{2} \big[ \vert \Delta_k \vert + \vert \Delta_j \vert + \vert \Delta_{kj} \vert \big].  $
    \item $P_{\rm suc}^{\rm stab, a}(\Delta) = \frac{1}{2} + \frac{1}{2} \big[ \vert \Delta_i  \pm {\rm sign}(P_iP_j) \Delta_{ij} \vert + \vert \Delta_k  \mp {\rm sign}(P_kP_j) \Delta_{kj} \vert \big] $.
\end{enumerate}
It can be noted that $1,2$ and $3$ can all be achieved without feed-forward, as they all consider measuring the coefficients of Pauli-strings that belong to a single stabilizer basis (See Lemma~\ref{lemma: appendix : two qubit min error sucess prob}). The $4th$ possibility, however, includes terms from two different stabilizer basis, and therefore has the possibility to give an advantage over the case without feed-forward. This observation will be used in the next proof. 
\\
\\
Let $\rho, \sigma \in \mathcal{D}(\mathcal{H}_1)$ and $\Vec{\Delta} \coloneq n_\rho - n_\sigma$ where $n_\rho$ and $n_\sigma$ are the Bloch vectors of $\rho$ and $\sigma$ respectively. 
\\
\\
{\em {\bf Result~\ref{result: appending any qubit state with feedfoward}}
    Let $n_\omega$ be the Bloch vector of $\omega \in \mathcal{D}(\mathcal{H}_1)$ and $\vert \vert \Vec{v} \vert \vert_{2-{\rm ky}}$ the $2$nd vector KY-Fan norm i.e., the sum of largest two components of the absolute value of $\Vec{v}$, then 
    \begin{equation*}
        P_{\rm suc}^{\rm stab,~ a}(\Vec{\Delta}, \omega) =  \frac{1}{2} + \frac{1}{4} \max \big\{\vert \vert \Vec{\Delta} \vert \vert_\infty, \vert \vert \Vec{\Delta} \vert \vert_{2-\rm{ky}} \vert \vert \Vec{n}_\omega \vert \vert_{2-\rm{ky}}{/2} \big\}
    \end{equation*}
}
\begin{proof}
    Firstly, we write $\Delta$ as 
    \begin{align}
        \Tilde{\Delta} &= \rho \otimes \omega - \sigma \otimes \omega \\
        &= {\Delta} \otimes \omega, \\
        &= \sum_{P_j \in \mathcal{P}_2} \Tilde{\Delta}_j P_j.
    \end{align}
    More specifically, 
    \begin{equation}\label{eq: more specific pauli-decompostion}
        \Tilde{\Delta} = \frac{1}{4}\Vec{\Delta} \cdot \Vec{\kappa} \otimes \mathbb{I} + \sum_{ij=1}^3 T_{ij} \kappa_i \otimes \kappa_j
    \end{equation}
    with 
    \begin{equation}
        \kappa_1 = X, \kappa_2=Y, \kappa_3 = Z ~ ~ {\rm and} ~ ~ \Vec{\kappa} = (\kappa_1, \kappa_2, \kappa_3)^t,
    \end{equation}
    and $T_{ij}$ being the matrix elements of the $3 \times 3$ matrix 
    \begin{equation}
        (\Vec{\Delta})(\Vec{n})^t = \frac{1}{4}~\begin{bmatrix} \label{eq: correlation matrix take 2}
            \Delta_x n_x & \Delta_x n_y & \Delta_x n_z \\
            \Delta_y n_x & \Delta_y n_y & \Delta_y n_z \\
            \Delta_z n_x & \Delta_z n_y & \Delta_z n_z \\
        \end{bmatrix}
    \end{equation}
    Then, we note that a simplification of the optimisation over all stabilizer basis (Pauli-strings) can be made: if the two qubit Clifford unitary used to set the stabilizer basis being measured is a tensor product of two single qubit Cliffords, it can be seen that no advantage can be gained (as considered in the proof of Result~\ref{result: for qubits, not about of resource is useful} given above). Explicitly letting $U = U_A \otimes U_B \in \mathcal{C}_2$ where $U_A, U_B \in \mathcal{C}_1$, it can be seen that: 
    \begin{equation}
        \begin{split}
             P_{\rm suc}^{\rm stab, a}(\Delta, \omega) &= \frac{1}{2} + \frac{1}{4} \max_{U \in \mathcal{C}_2}  \max_{\{V_b : V_b \in \mathcal{C}_1 \}_b} \sum_{a \in \{0,1\}^1} \sum_{b \in \{0,1\}^1} \big\vert \textrm{tr} \big[ (U^\dagger ( V^\dagger_b \ketbra{a} V_b \otimes \ketbra{b})U) (\Tilde{\Delta}) \big] \big\vert, \\
             &= \frac{1}{2} + \frac{1}{4} \sum_{a \in \{0,1\}^1} \sum_{b \in \{0,1\}^1} \vert \textrm{tr}\big[  (U_A \otimes U_B)^\dagger(V_b^\dagger \ketbra{a} V_b \otimes \ketbra{b})  (U_A \otimes U_B) (\Delta \otimes \omega) \big] \vert \\
              &= \frac{1}{2} + \frac{1}{4} \sum_{a,b \in \{0,1\}} \vert \textrm{tr}\big[ (U_A^\dagger V_b^\dagger \ketbra{a} V_b U_A \otimes U_B^\dagger \ketbra{b} U_B)(\Delta \otimes \omega) \big] \vert \\
              &= \frac{1}{2} + \frac{1}{4} \sum_{a,b \in \{0,1\}} \vert \textrm{tr}\big[ (U_A^\dagger V_b^\dagger \ketbra{a} V_b U_A (\Delta) \big] \vert ~ \vert \textrm{tr} \big[ U_B^\dagger \ketbra{b} U_B (\omega) \big] \vert \\
              &= \frac{1}{2} + \frac{1}{4} \bigg[ p \sum_{a \in \{0,1\}} \vert \textrm{tr}\big[ \Tilde{U}_{A,0}^\dagger \ketbra{a} \Tilde{U}_{A,0} (\Delta) \big] \vert + (1-p) \sum_{a \in \{0,1\}} \vert \textrm{tr}\big[ \Tilde{U}_{A,1}^\dagger \ketbra{a} \Tilde{U}_{A,1} \Delta \big] \vert \biggl] .
        \end{split}
    \end{equation}
    where $p= ~ \textrm{tr} \big[ U_B^\dagger \ketbra{0} U_B (\omega) \big] $, and $\Tilde{U}_{A,0} = V_0 U_A$ and $ \Tilde{U}_{A,1} = V_1U_A$ are Clifford unitaries, as products of Clifford unitaries are Clifford unitaries. As $p \geq 0$, the success probability can therefore be seen to be a probabilistic mixture of performing two different single qubit measurements to try and discrimination between $\rho$ and $\sigma$. One can therefore not perform better than just measuring the qubit in the optimal basis with probability one. 
    
    Hence, in order to provide an advantage when performing QSD with adaptive measurements, one needs to apply an entangling two qubit Clifford unitary. One can therefore just optimise over the measurement basis characterised by the following Pauli's:  
    \begin{equation}
        \begin{split}     \label{eq: entangled two qubit stab basis}
        & \{ X \otimes X, Y \otimes Y, Z \otimes Z \} \\
        & \{ X \otimes Y, Y \otimes X, Z \otimes Z \} \\
        & \{ X \otimes Z, Y \otimes Y, Z \otimes X \} \\
        & \{ X \otimes X, Y \otimes Z, Z \otimes Y \} \\
        & \{ X \otimes Y, Y \otimes Z, Z \otimes X \} \\
        & \{ X \otimes Z, Z \otimes Y, Y \otimes X \}.
        \end{split}
    \end{equation}
    Only terms in the Pauli-decomposition of $\Tilde{\Delta}$
    in Eq.~\eqref{eq: correlation matrix take 2} will therefore be relevant for gaining an advantage with adaptive measurements and an ancillary system. 
    
    It can be seen from Lemma~\ref{lemma: appendix : two qubit min error sucess prob with feed-forward} that the two basis being optimised over to find $P_{\rm suc}^{\rm stab}(\Delta, \omega)$ must have a single Pauli-string in common. By grouping together the sets of Pauli's in Eq.~\eqref{eq: entangled two qubit stab basis} that have a common Pauli string it can be verified that 
    \begin{equation}
        {\rm sign}(P_iP_j) =  - {\rm sign}(P_kP_j)
    \end{equation}
    for all possible pairings i.e., the products of the other Pauli-strings in the sets with a mutual Pauli-string have opposite signs. Hence, when consider just these entangled basis,  Lemma~\ref{lemma: appendix : two qubit min error sucess prob with feed-forward} becomes 
    \begin{equation} \label{result: stab min error equality WRONG?}
        P_{\rm suc}^{\rm stab, a}(\Delta, \omega) = \frac{1}{2} + \frac{1}{2} \max_{P_i, P_j, P_k : ~[P_i, P_j]=0, ~[P_k, P_j]=0} \big[ \max \big\{ \vert \Tilde{\Delta}_j \vert, \vert \Tilde{\Delta}_i \pm \Tilde{\Delta}_{ij} \vert \big\} + \max \big\{ \vert \Tilde{\Delta}_j \vert, \vert \Tilde{\Delta}_k \pm  \Tilde{\Delta}_{kj} \vert \big\} \big], \\
    \end{equation}
    where, as discussed above, we may focus on only 
    \begin{equation}
          P_{\rm suc}^{\rm stab, a}(\Delta, \omega) = \frac{1}{2} + \frac{1}{2}  \max_{P_i, P_j, P_k : ~[P_i, P_j]=0, ~[P_k, P_j]=0} \big[ \vert \Tilde{\Delta}_i \pm \Tilde{\Delta}_{ij} \vert + \vert \Tilde{\Delta}_k \pm \Tilde{\Delta}_{kj} \vert \big],
    \end{equation}
    as this is where an advantage can be found in the adaptive case over the non-adaptive case. Both terms are therefore either the sum of Pauli coefficients of $\Tilde{\Delta}$, or the difference of Pauli coefficients of $\Tilde{\Delta}$.
    Hence forth we focus on the expression:
    \begin{equation}
        \vert \Tilde{\Delta}_i \pm \Tilde{\Delta}_{ij} \vert + \vert \Tilde{\Delta}_k \pm \Tilde{\Delta}_{kj} \vert. \label{eq: key values in p_suc}
    \end{equation}
    Each $\Tilde{\Delta}$ term is the product of a term from $\Vec{\Delta}$ and $\Vec{n}$, as seen in Eq.~\eqref{eq: more specific pauli-decompostion}, i.e., for a Pauli-string $P_j = P_{j(1)} \otimes P_{j(2)}$ one has an associated coefficient $\Tilde{\Delta}_j = \frac{1}{4} \Delta_{j(1)} n_{j(2)}$. 

    Let the overlapping Pauli-string be $P_j = P_{j(1)} \otimes P_{j(2)}$. It can be seen by inspection that the coefficients of $\Delta$ from Eq.~\eqref{eq: correlation matrix take 2} that feature in Eq.~\eqref{eq: key values in p_suc} arise from crossing out the row associated to $P_{j(1)}$, and the column associated to $P_{j(2)}$. For example, consider the following to basis: 
    \begin{equation}
        \big\{ X \otimes Z, Z \otimes Y, Y \otimes X \big\} ~ ~ {\rm and} ~ ~ \big\{ X \otimes Z, Y \otimes Z, Z \otimes X \big\},
    \end{equation}
    such that the overlapping Pauli-string in the two basis is $X \otimes Z$. Then, the coefficients of Eq.~\eqref{eq: correlation matrix take 2} that are relevant are:
    \NiceMatrixOptions{cell-space-limits = 1pt}
    \begin{equation}
    \frac{1}{4}\begin{bmatrix}
    \Delta_x n_x & \Delta_x n_y & \Delta_x n_z \\
    \drawCustomCircletwo{0}{0}{0.3cm}{green}{green}{0.4}{0pt}{$\Delta_y n_x$} & \drawCustomCircletwo{0}{0}{0.3cm}{blue}{blue}{0.4}{0pt}{$\Delta_y n_y$} & \Delta_y n_z \\
    \drawCustomCircletwo{0}{0}{0.3cm}{blue}{blue}{0.4}{0pt}{$\Delta_z n_x$} & \drawCustomCircletwo{0}{0}{0.3cm}{green}{green}{0.4}{0pt}{$\Delta_z n_y$} & \Delta_z n_z
    \end{bmatrix}
    \end{equation}
    such that in this example
    \begin{equation}
        \vert \Delta_i \pm \Delta_{ij} \vert + \vert \Delta_k \pm \Delta_{kj} \vert = \frac{1}{4} \vert \Delta_y n_x \pm \Delta_z n_y \vert + \frac{1}{4} \vert \Delta_y n_y \pm \Delta_z n_x \vert.
    \end{equation}
    To find the maximum, we note that the largest four matrix elements of Eq.~\eqref{eq: correlation matrix take 2} --- ignoring the sign for now --- that can be simultaneous selected by choosing two stabilizer basis from Eq.~\eqref{eq: entangled two qubit stab basis} with an overlapping Pauli-string is:
    \begin{equation}
        \frac{1}{4}\Delta_a n_\alpha, ~ \frac{1}{4}\Delta_a n_\beta, ~ \frac{1}{4}\Delta_b n_\alpha, ~ \frac{1}{4}\Delta_b n_\beta,
    \end{equation}
    where 
    \begin{equation}
        \vert \Vec{\Delta} \vert^\downarrow = \begin{pmatrix}
            \vert \Delta_a \vert \\
            \vert \Delta_b \vert \\
            \vert \Delta_c \vert 
        \end{pmatrix} ~ ~ \vert \Vec{n} \vert^\downarrow = \begin{pmatrix}
            \vert n_\alpha \vert \\
            \vert n_\beta \vert \\
            \vert n_\gamma \vert 
        \end{pmatrix}, 
    \end{equation}
    are the absolute value of the vectors $\Vec{\Delta}$ and $\Vec{n}$ ordered in non-increasing order. If one chooses the shared mutually commuting Pauli-string to be that associated to $c$ on the first qubit and $\gamma$ on the second, then 
    \begin{equation}
        \begin{split}
            \vert \Delta_i \pm \Delta_{ij} \vert + \vert \Delta_k \pm \Delta_{kj} \vert &= \frac{1}{4}~\vert \Delta_a n_\alpha \pm \Delta_b n_\beta \vert + \frac{1}{4}~\vert \Delta_a n_\beta \pm \Delta_b n_\alpha \vert \\
            &\leq \frac{1}{4}~ \vert \Delta_a n_\alpha \vert + \vert \Delta_b n_\beta \vert + \frac{1}{4}~ \vert \Delta_a n_\beta \vert + \vert \Delta_b n_\alpha \vert, \\
            &= \frac{1}{4}~ \vert \Delta_a \vert (\vert n_\alpha \vert + \vert n_\beta \vert) + \frac{1}{4}~\vert \Delta_b \vert ( \vert n_\alpha \vert + \vert n_\beta \vert) \\
            &= \frac{1}{4}~ (\vert \Delta_a \vert + \vert \Delta_b \vert) (\vert n_\alpha \vert + \vert n_\beta \vert) \\
            &= \frac{1}{4}~ \vert \vert \Vec{\Delta} \vert \vert_{2-{\rm ky}} ~  \vert \vert \Vec{n} \vert \vert_{2-{\rm ky}}
        \end{split}
    \end{equation}
    where the upper-bound comes from the triangle inequality. 

    It must then be shown that this upper bound is always achievable. This can be seen to arise from the freedom to choose whether the terms in Eq.\eqref{eq: key values in p_suc} are the sum of two Pauli coefficients, or the difference of two Pauli coefficients (by alternating for which $b$ the unitaries $V_{0}$ and $V_1$ are applied). To do this, one can do an exhaustive consideration of all possible sign combinations of $\Delta_a, \Delta_b, n_a, n_b$ being positive or negative. In doing so, one can see that regardless of the configurations of the signs, the sum of the absolute values of the individual elements can be achieved in the minimum error success probability by using different Clifford unitaries $U, V_0$ and $V_1$ to set the basis. 
    
    This therefore means that the largest success probability that can be achieved when performing QSD with adaptive stabilizer measurements and a state $\omega$ in the ancilla is given by 
    \begin{equation}
        P_{\rm suc}^{\rm stab, a}(\Delta, \omega) = \frac{1}{2} + \frac{1}{2} \times \frac{1}{4} \vert \vert \Vec{\Delta} \vert \vert_{2-{\rm ky}} ~  \vert \vert \Vec{n} \vert \vert_{2-{\rm ky}}.
    \end{equation}
    It can easily be seen that there exists examples where this is less then $P^{\rm stab}_{\rm suc} (\Delta) = \frac{1}{2} + \frac{1}{4}\vert \vert \Vec{\Delta} \vert \vert_\infty$. This is due to the restriction to only considering the measurement basis in Eq.~\eqref{eq: entangled two qubit stab basis} i.e., the entangled stabilizer basis. However, it will sometimes be optimal to not interact with the state in the ancillary system, and instead just perform a stabilizer circuit on the unknown states. The maximum achievable minium error success probability will occur for whichever strategy gives a higher success probability. Taking the maximum of the two therefore completes the proof.  
\end{proof}

\subsection{Proof of Result~\ref{result: QSD for qubits with non-stabilizer measurements}}\label{appendix: QSD for qubits with magic measurements}

\textit{\textbf{Result}.~\ref{result: QSD for qubits with non-stabilizer measurements}~For measurements $\{M_0^\mu,M_1^\mu\}$ bounded by $\mathcal{M}(M^\mu_i) \le \mu~\forall~i$, the success probability is tightly bounded below by:
\begin{equation}
P_{\mathrm{suc}}^\mu \geq \max_{\mathcal A}P^\mathcal{A}_{\mathrm{suc}}(\mu),
\end{equation}
where {$P^\mathcal{A}_{\mathrm{suc}}(\mu)$ denotes the optimal success probability assuming that the closest stabilizer state has active support on the subspace $\mathcal{A}$}:
\begin{equation}\label{eq appendix: Psucc_Nostab_m_qbits_mt}
P^\mathcal{A}_{\mathrm{suc}}(\mu) = 
\begin{cases}
\frac{1}{2} + \frac{1}{4N}(S+C_N\vert \vert\vec\Delta\vert \vert_{\mathcal{A},1}), & \text{if } C_N < \frac{\vert \vert\vec\Delta\vert \vert_{\mathcal{A},1}}{\vert \vert\vec\Delta\vert \vert_{\mathcal{A},2}} \\
\frac{1}{2} + \frac{1}{4}\vert \vert\vec{\Delta}\vert \vert_{\mathcal{A},2}, & \text{if } C_N \ge \frac{\vert \vert\vec\Delta\vert \vert_{\mathcal{A},1}}{\vert \vert\vec\Delta\vert \vert_{\mathcal{A},2}}
\end{cases},
\end{equation}
with $S=\sqrt{(N\vert \vert\vec\Delta\vert \vert_{\mathcal{A},2}^2 - \vert \vert\vec\Delta\vert \vert_{\mathcal{A},1}^2)(N-C_N^2)}$. Here, $\vert \vert\cdot\vert \vert_{\mathcal{A},p}$ denotes the $p$-norm restricted to $\mathcal{A}$, and $N = \text{dim}(\mathcal A)$. $C_N = 1 +2\sqrt{N}\mu$  defines the surface with non-stabilizerness $\mu$ restricted to $\mathcal{A}$.}
\begin{proof}
We begin by establishing relevant quantities. Denote the Bloch vector of  the closest stabilizer state to the optimal measurement (following the measure $\mathcal{M}$) by $\vec{s}$. Then, $P^\mathcal{A}_{\mathrm{suc}}(\mu)$ corresponds to the optimal success probability restricting the measurements to the subspace $\mathcal A$ spanned by the coordinates for which $s_i>0$. $N = \text{dim}(\mathcal A)\in\{1,2,3\}$ is the number of dimensions where  $\vec{s}$ has $s_i>0$ and $C_N$ defines the flat surface corresponding to the Bloch vectors yielding measurements with magic $\mu$, $C_N=\sum_{i=1}^Nm_i = 1 +2\sqrt{N}\mu$. We show that for some of the states considered in the main text, the bound is tight, this is $P_{\mathrm{suc}}^\mu = \max_{\mathcal A}P^\mathcal{A}_{\mathrm{suc}}(\mu)$.

Now consider the success probability of successfully discriminating the states 
$\rho_0$ and $\rho_1$ with measurements $M^\mu_0,M^\mu_1$:
\begin{equation}
    P^\mu_{\text{suc}} = \frac{1}{2} + \frac{1}{2}\text{Tr}(\Delta M^\mu_0),
\end{equation}
where $\Delta = \rho_0-\rho_1$ and $M^\mu_0+M^\mu_1 = \mathbb{I}$.

Consider also the states and measurements in the Bloch vector decomposition $\rho_0 = \frac{1}{2}(\mathbb{I} + \vec{r}_0\cdot{\vec\sigma})$, $\rho_1 = \frac{1}{2}(\mathbb{I} + \vec{r}_1\cdot{\vec\sigma})$, and  $M^\mu_0=\frac{1}{2}(\mathbb{I}+\vec{m}\cdot{\vec\sigma})$. 
By denoting $\vec{\Delta} = \vec{r}_0-\vec{r}_1$, we have 
\begin{equation}
    P^\mu_{\text{suc}} = \frac{1}{2} + \frac{1}{4}\vec{\Delta} \cdot \vec{m}.
\end{equation}
Without loss of generality, we relabel the stabilizer states (or equivalently, we apply local Clifford rotations) in such a way that $\vec{\Delta}$ is in the positive octant, sorted as $\Delta_1 \geq \Delta_2 \geq\Delta_3 \geq 0$. 

We consider the non-stabilizer monotone for states  $\mathcal{M}(\rho) = \min_{\sigma \in \mathrm{STAB}} \frac{1}{2} \| \rho - \sigma \|_1$ defined in ~\cite{Cao_2025, palhares_2026}. As in Sec.~\ref{sm: SDP for non-stabilizer measurements}, we extend the definition to measurements. However, the assumed Bloch form implies $\text{Tr}(M^\mu_i)=1$ and therefore, 
\begin{equation}\label{eq:app:Trace distance}
    \mathcal{M}\left(\frac{M^\mu_i}{\text{Tr}(M^\mu_i)}\right) = \mathcal{M}\left(M^\mu_i\right)  = \min_{\sigma_i \in \mathrm{STAB}} \frac{1}{2} \|M^\mu_i- \sigma_i \|_1\leq\mu, \quad\quad for\quad i=0,1.
\end{equation}
To impose the non-stabilizer constraint, we use the closest stabilizer Bloch vector $\vec{s}$. The optimization is then performed over both $\vec m$ and $\vec {s}$, where $\vec {s}$ acts as an auxiliary variable enforcing that the measurement lies at most a distance $2\mu$ from the stabilizer polytope. Therefore the optimization can be stated as 
\begin{align}
    &\max_{\vec{m},\vec{s}} \frac{1}{2} + \frac{1}{4}\vec{\Delta} \cdot \vec{m}\\
    &s.t.\nonumber\\
    &\quad \|\vec m\|^2_2\leq 1,\nonumber\\
    &\quad \|\vec s\|_1\leq 1,\quad  \vec{s}_i\geq 0,\nonumber\\
    &\quad \|\vec  m-\vec{s}\|^2_2\leq 4\mu^2,\nonumber
\end{align}
Where the first constraint restricts the measurement vector to be inside the Bloch sphere, the second constraint restricts the closest stabilizer Bloch vector to be inside the positive quadrant of the octahedron. Finally the third constraint imposes that the trace distance between the stabilizer polytope and the measurement be less than $\mu$. We used the fact that the trace distance $\frac{1}{2} \|M^\mu_i- \sigma_i \|_1$ is half the euclidean distance $\|\cdot\|_2$ between the Bloch vectors of the closest stabilizer state $\vec{s}$ and the measurement Bloch vector $\vec m$.

To obtain a close analytical expression, we introduce the \textit{no-leakage Ansatz}: we assume that the measurement vector $\vec m$ is in the same geometric space than the stabilizer vector $\vec s$. 
This is, the no-leakage Ansatz says that if $\vec s\in\mathcal A$ then,   $\vec m\in\mathcal A$. Since the SDP presented in Sec.~\ref{sm: SDP for non-stabilizer measurements} will in general consider measurment vectors outside of $\mathcal A$, our derivation constitutes a lower bound to the unconstrained SDP optimization. However, this assumption allows us to derive a close analytical expression as follows. 

Define the norms $\|\cdot\|_{\mathcal A, 2}$ and $\|\cdot\|_{\mathcal A, 1}$ as the 2-norm and 1-norm restricted to the $\mathcal A$ space. The no-leakage ansatz then implies that $\|\vec  m-\vec{s}\|_2 = \|\vec  m-\vec{s}\|_{\mathcal A, 2}$, $\|\vec  m\|_2 = \|\vec  m\|_{\mathcal A,2}$ and $\|\vec{s}\|_1 = \|\vec{s}\|_{\mathcal A,1} $. Therefore we can restrict our optimization to the $\mathcal A $ space.  To see this formally, let us define the Lagrangian $\mathcal L$ that optimizes $P^\mu_{\text{suc}}$ subject to the $3$ constraints above. Since the non-constant part of the objective function is $\vec \Delta \cdot \vec m$, it reads
\begin{equation}
    \mathcal L(\vec m, \vec s, \nu, \lambda, \gamma) = \vec\Delta \cdot \vec m - \nu (\|\vec m\|_2^2 - 1) - \lambda (\|\vec m - \vec s\|_2^2 - 4\mu^2) - \gamma (\| \vec s\|_1 - 1)+\sum_{i=1}^3\eta_is_i,
\end{equation}
where $\nu, \lambda$, $\gamma$ and $\eta_i$ for  $i\in\{1,2,3\}$ are Lagrangian multipliers.

To obtain the maximum, we search for the points where the gradient is zero:
\begin{equation}
     \frac{\partial \mathcal L }{\partial m_i} = \Delta_i - 2\nu m_i -2\lambda(m_i-s_i) = 0,\label{eq:m_grad}
\end{equation}
and
\begin{equation}
    \frac{\partial \mathcal L }{\partial s_i} = 2\lambda(m_i-s_i)-\gamma+\eta_i = 0. \label{eq:s_grad}
\end{equation}
Note that in the last equation, since we are restricting $s_i\ge0$, we could remove the absolute values in the sum defining $\| \vec s\|_1 $. From complementary slackness, we have $\eta_is_i=0$, defining the active coordinates by the condition $s_i> 0$, $i.e$ the coordinates defining $\mathcal A$.  Therefore, for $i\in \mathcal A$, $\eta_i=0$ and Eq.~(\ref{eq:s_grad}) reads
\begin{equation}
    2\lambda(m_i-s_i)=\gamma.\label{eq:s_grad_active}
\end{equation}
Now, denote the vector $\mathbb{1}_N=(1,\dots,1)$ in $\mathbb R^N$ with $N=\text{dim}(\mathcal A)$. From the equation of the active components~(\ref{eq:s_grad_active}) and under our no-leakage ansatz, we have 

\begin{equation}
    \|\vec m-\vec s\|^2_{\mathcal{A},2} = \frac{\gamma^2\|\mathbb{1}_N\|^2_2}{4\lambda^2} =4\mu^2,
\end{equation}
which implies $\gamma/\lambda =4\mu/\|\mathbb{1}_N\|_2 = 4\mu/\sqrt{N}$. We emphasize that in general $\|\vec m-\vec s\|_2^2  = \|\vec m-\vec s\| ^2_{\mathcal{A},2} + L = 2\mu$, where $L$ depends on the other possible dimensions not spanning $\mathcal A$ (dimension leakage). The no-leakage assumption precisely takes $L=0$.  Therefore we can rewrite Eq.~\eqref{eq:s_grad_active} as $\vec m=\vec s + 2\mu\mathbb{1}_N/\sqrt{N}$ which summing over the active components we obtain the flat surface corresponding to the Bloch vectors yielding measurements with non-stabilizernes $\mu$
\begin{equation}
    \sum_{i\in\mathcal A} m_i = 1 + 2\mu\sqrt N,\label{eq:facet_magic_polytope}
\end{equation}
where we have used that $\sum_{i\in \mathcal A} s_i = 1$. Define $C_N = 1 + 2\mu\sqrt N$. Note that if we have $3$ active dimensions, we obtain $C_N = 1 +2\sqrt{3}\mu $ which recovers the flat parallel surfaces to the stabilizer octahedron as shown in \cite{palhares_2026}.

Now we turn our attention to Eq.~\eqref{eq:m_grad}. For the active components, replace $\vec s$ via our displaced vector equation $\vec m=\vec s + 2\mu\mathbb{1}_N/\sqrt{N}$. We obtain
\begin{equation}
    \vec \Delta = 2\nu \vec m +\frac{4\mu\lambda}{\sqrt{N}}\mathbb{1}_N.\label{eq:equilibrium}
\end{equation}
Consider the complementary slackness conditions involving the multiplier $\lambda$:
\begin{equation}
    \lambda (\|\vec{m}-\vec{s}\|^2_{\mathcal{A},2}-4\mu^2) = 0.
\end{equation}
This condition splits the problem into two regimes. The first one occurs when $\mu$ is greater enough so the constraint is not active ($\|\vec{m}-\vec{s}\|^2_{\mathcal{A},2} <4\mu^2$) , in this case $\lambda =0$. 
From Eq.~(\ref{eq:equilibrium}) we have $ \vec{\Delta}= 2\nu\vec{m}$. However, $\vec{m}$ is still constrained to   $\|\vec m\|_{\mathcal{A},2} \leq 1$, so to maximize $P_\text{suc}^\mu$ we set $\|\vec m\|_{\mathcal{A},2} = 1$, obtaining $2\nu = \|\vec{\Delta}\|_{\mathcal{A},2}$. Therefore $\vec m = \vec\Delta/\|\vec \Delta\|_{\mathcal{A},2}$ and inserting it in the success probability we obtain,
\begin{align}
    P^\mathcal A_{\text{suc}} &= \frac{1}{2} + \frac{1}{4}\vec{\Delta} \cdot \vec{m}\nonumber\\
    &=\frac{1}{2} + \frac{1}{4}\|\vec \Delta\|_{\mathcal{A},2},
\end{align}
which is the Helstrom bound as expected. 

The second regime is whenever the constraint is active,  $\|\vec{m}-\vec{s}\|^2_{\mathcal{A},2}=4\mu^2$. In this case, we need to solve explicitly for the multipliers $\gamma$ and $\nu$. Summing each component in Eq.~\eqref{eq:equilibrium} and imposing the non-stabilizer constraint $\sum_{i\in\mathcal A} m_i = C_N$, we get
\begin{equation}
\frac{\|\vec\Delta\|_{\mathcal{A},1}  - N\gamma}{2\nu}=C_N ,
\end{equation}
where we have used $\gamma/\lambda = 4\mu/\sqrt{N}$ to express it in terms of $\gamma$. By imposing the projectivity constraint $\sum_{i\in\mathcal A} m_i^2 = 1$ we get 
\begin{equation}
    \frac{\|\vec\Delta\|^2_{\mathcal{A},2}-2\gamma\|\vec\Delta\|_{\mathcal{A},1}+N\gamma^2}{4\nu^2} = 1.
\end{equation}

Solving for $\gamma$  we get the quadratic equation
\begin{equation}
    N(C_N^2 - N)\gamma^2 - 2(C_N^2 - N)\|\vec{\Delta}\|_{\mathcal{A},1}\gamma + C_N^2\|\vec{\Delta}\|_{\mathcal{A},2}^2 - \|\vec \Delta\|_{\mathcal{A},1}^2 = 0,
\end{equation}
and by choosing the square root that leaves $\nu\geq 0$, we can insert the optimal multipliers to obtain

\begin{equation}
    \vec\Delta\cdot\vec m =\frac{\sqrt{(N\|\vec\Delta\|_{\mathcal{A},2}^2 - \|\vec\Delta\|_{\mathcal{A},1}^2)(N-C_N^2)} +C_N\|\vec\Delta\|_{\mathcal{A},1}}{N}. \label{eq:Optimal_overlap_with_mconstraint}
\end{equation}
Therefore, putting all together we get 
\begin{equation}\label{eq: Psucc_Nostab_m_qbits}
P^\mathcal{A}_{\mathrm{suc}}(\mu) = 
\begin{cases}
\frac{1}{2} + \frac{1}{4N}(\sqrt{(N\|\vec\Delta\|_{\mathcal{A},2}^2 - \|\vec\Delta\|_{\mathcal{A},1}^2)(N-C_N^2)} +C_N\|\vec\Delta\|_{\mathcal{A},1}), & \text{if } C_N < \frac{\|\vec\Delta\|_{\mathcal{A},1}}{\|\vec\Delta\|_{\mathcal{A},2}} \\
\frac{1}{2} + \frac{1}{4}\|\vec{\Delta}\|_{\mathcal{A},2}, & \text{if } C_N \ge \frac{\|\vec\Delta\|_{\mathcal{A},1}}{\|\vec\Delta\|_{\mathcal{A},2}}
\end{cases}.
\end{equation}

Note that to get the critical $C_N$ we used the fact that if Bob were to have an infinite amount of available non-stabilizerness, he will simply align his measurement with $\Delta$. In this case the optimal measurement vector $\vec m$ would correspond to the Helstrom measurement vector $m_i= \Delta_i/\|\vec \Delta\|_{\mathcal A,2}$. Following Eq.~(\ref{eq:facet_magic_polytope}), the amount of magic in this case is $C_N=\sum_{i\in\mathcal A} m_i = \|\vec\Delta\|_{\mathcal A,1}/\|\vec \Delta\|_{\mathcal A,2}$. Therefore, whenever $C_N<\|\vec\Delta\|_{\mathcal A,1}/\|\vec \Delta\|_{\mathcal A,2}$, the Helstrom bound is unattainable and Eq.~(\ref{eq:Optimal_overlap_with_mconstraint}) must be used. 
The active set $\mathcal A$ determines the dimension $N=\text{dim}(\mathcal A)$ of the face of the stabilizer polytope supporting the optimal solution. For qubit systems, $N\in\{1,2,3\}$, and the optimal success probability is obtained by selecting the admissible solution yielding the largest value of Eq.~\eqref{eq: Psucc_Nostab_m_qbits}. In the following we explain this in detail, showing how to obtain the maximum success probability  under the no-leakage assumption. 

First note that the above derivation was obtained assuming an active space $\mathcal A$. To obtain the maximum success probability for a fixed $\mu$ we must optimize over all phyisically realizable active spaces $\mathcal A$ (which could be a vertex, an edge or a face of the polytope). Therefore the solution for the (restricted to no-leakage)  optimization problem is
\begin{equation}
    \max_{\mathcal A}P^\mathcal{A}_{\mathrm{suc}}(\mu).\label{eq:P_suc_mu_sm}
\end{equation}

To determine which active spaces $\mathcal{A}$ admit a physical realization, we must ensure that the optimal state $\vec{s}$ remains strictly inside the $N$-dimensional facet. This requires the components of the $\vec s$ to remain non-negative: $s_i \ge 0$ for all $i \in \mathcal{A}$. We can determine the exact boundary where this condition fails by explicitly solving for $s_i$. From the gradient of the measurement vector in Eq.~(\ref{eq:m_grad}), we substitute the active space condition $2\lambda(m_i - s_i) = \gamma$ from Eq.~(\ref{eq:s_grad_active}) to obtain: 
\begin{equation}
    \Delta_i - 2\nu m_i - \gamma = 0 \implies m_i = \frac{\Delta_i - \gamma}{2\nu}.
\end{equation}
Furthermore, we previously established we can express the measurement components as a displacement from the closest stabilizer state 
\begin{equation}
    m_i = s_i + \frac{2\mu}{\sqrt{N}}.
\end{equation}
Equating these two expressions for $m_i$ allows us to isolate the state components $s_i$:
\begin{equation}
    s_i + \frac{2\mu}{\sqrt{N}} = \frac{\Delta_i - \gamma}{2\nu} \implies s_i = \frac{\Delta_i - \gamma - \frac{4\mu\nu}{\sqrt{N}}}{2\nu}.
\end{equation}
Because we selected the root corresponding to $\nu \ge 0$, the constraint $s_i \ge 0$ imposes conditions on the components of $\vec{\Delta}$. For this to hold for all $i \in \mathcal{A}$, it must be satisfied by the smallest component within the active set. Let us define this smallest component as $\Delta_{\mathrm{min}} = \min_{i \in \mathcal{A}} \Delta_i$. We thus obtain the following boundary threshold:
\begin{equation}
    \Delta_{\mathrm{min}} \ge \gamma + \frac{4\mu\nu}{\sqrt{N}}. \label{eq:KKT_threshold}
\end{equation}
If Eq.~(\ref{eq:KKT_threshold}) is violated, the optimal  state $\vec s$ has  collided with the boundary of the $N$-dimensional facet (i.e., $s_{\mathrm{min}} < 0$), indicating that the current $N$-subspace ansatz is geometrically illegal. In such cases, the active dimension must be reduced ($N \to N-1$) by dropping the smallest component from the active set $\mathcal{A}$. To test that our construction is sound we have built an SDP restricted to the same no-leakage assumption, and we have shown that the success probabilities are exactly those predicted by Eq.~\eqref{eq:P_suc_mu_sm}. We included this code in  Ref.~\cite{zamora2026code}. 

Finally, although Eq.~\eqref{eq:P_suc_mu_sm} constitutes a lower bound to the success probability  when leakage is allowed, we show in Fig.~\ref{figure:Psucc_mu} it is a tight bound, saturating the SDP maximum whenever the states under consideration result in a $\vec\Delta$ aligned with the symmetry axis of the stabilizer octahedron.  

To understand this, consider a difference vector $\vec{\Delta}$ aligned with an $N$-dimensional symmetry axis of the stabilizer octahedron (for example, the $T$-state direction where $\Delta_1 = \Delta_2 = \Delta_3$, or an edge bisector where $\Delta_1 = \Delta_2$). By symmetry, there exists an optimal closest stabilizer state $\vec{s}$ with the same symmetry, distributing its weight equally among the active dimensions. Consequently, all active components satisfy $s_i>0$, so the admissibility condition in Eq.~(\ref{eq:KKT_threshold}) remains satisfied throughout the optimization. As $\mu$ increases, $\vec{s}$ moves through the interior of the corresponding face without reaching its boundary, and the active set $\mathcal{A}$ therefore remains unchanged for all $\mu \in [0,\mu_{\max}]$. Since the active set never changes, allowing the measurement vector to acquire components outside $\mathcal{A}$ cannot improve the objective, and the no-leakage ansatz remains optimal. Hence, the $N$-subspace solution coincides with the unrestricted SDP optimum. 
\end{proof}

\subsection{Proof of Result~\ref{result: QRAC bound}} \label{appendix: QRAC bound}

\textit{\textbf{Result}.~\ref{result: QRAC bound}~
     Let us split Alice's bitstrings in two disjoint sets: $\mathbb X_0^{(j)} =  \{{x\in\{0,1\}^n} |x_j =0\}$ and $\mathbb X^{(j)}_1 =  \{{x\in\{0,1\}^n} |x_j =1\}$. {When restricted to only stabilizer measurements the maximum value possible of $P_g$, denoted $P_{g}^{\mathrm{STAB}}$,} is given by 
     \begin{equation}
         P_{g}^{\mathrm{STAB}} = \frac{1}{2} +\frac{1}{4n}\sum_{j\in [n]}\|\vec{\Delta}^{(j)}\|_\infty.
     \end{equation}
     where $\Delta^{(j)} = \sigma^{(j)}_0- \sigma^{(j)}_1=\frac{1}{2}\vec{\Delta}^{(j)}\cdot\vec{\sigma}$ with $\vec{\sigma}$ a vector of Pauli matrices and $\sigma^{(j)}_{i} = \frac{1}{2^{n-1}}\sum_{x\in\mathbb{X}^{(j)}_i}\rho_x$ effective states.}
\begin{proof}
In a quantum random access code, the task for Bob is to guess the $x_j$ bit of one of the $2^n$ Alice's bitstrings $\mathbf{x} = (x_1,\dots,x_n)$, by performing the measurement $M_j$ for some  $j\in[n]=\{1,\dots,n\} $ on the encoded state $\rho_\mathbf{x}$.  The average success probability  of measuring outcome $b=x_j$ given the state $\rho_\mathbf{x}$ and measurements $\{M_{b|j}\}_b$ can be written as~\cite{ambainis2009}
\begin{align}
    P_{g} &= \frac{1}{2^nn}\sum_{j\in[n]}\sum_{x\in\{0,1\}^n}p(b=x_j|x,j)\nonumber\\
    &= \frac{1}{2^nn}\sum_{j\in[n]}\sum_{x\in\{0,1\}^n}\text{Tr}(\rho_xM_{x_j|j}).
\end{align}
Note that we can split the bitstrings into two disjoint sets, $\mathbb X_0^{(j)} =  \{{x\in\{0,1\}^n} |x_j =0\}$ and $\mathbb X^{(j)}_1 =  \{{x\in\{0,1\}^n} |x_j =1\}$. Since the sum of a function over a set is equivalent to the sum over its disjoint subsets, we have:
\begin{equation}
    P_{g} = \frac{1}{2^nn}\sum_{j\in[n]}\big(\sum_{x\in \mathbb{X}^{(j)}_0}\text{Tr}(\rho_xM_{0|j})+\sum_{x\in \mathbb{X}^{(j)}_1}\text{Tr}(\rho_xM_{1|j})\big),
\end{equation}
and by defining the effective states 
\begin{equation}
    \sigma^{(j)}_{i} = \frac{1}{2^{n-1}}\sum_{x\in\mathbb{X}^{(j)}_i}\rho_x,
\end{equation}
we have 
\begin{equation}
     P_{g} = \frac{1}{2n}\sum_{j\in [n]}\text{Tr}(\sigma^{(j)}_0M_{0|j} + \sigma^{(j)}_1M_{1|j}),
\end{equation}
which can be interpreted as a collection of $n$ binary quantum state  discrimination problems, one for each measurement $j$. Furthermore, since $M_{0|j} + M_{1|j} =\mathbb{I}$ for each $j$ and $\textrm{Tr}(\sigma^{(j)}_1)=1$, then 
\begin{align}
     P_{g} &= \frac{1}{2} +\frac{1}{2n}\sum_{j\in [n]}\text{Tr}([\sigma^{(j)}_0- \sigma^{(j)}_1] M_{0|j} )\nonumber\\
     &= \frac{1}{2} +\frac{1}{2n}\sum_{j\in [n]}\text{Tr}(\Delta^{(j)} M_{0|j} ),
\end{align}
where $\Delta^{(j)} = \sigma^{(j)}_0- \sigma^{(j)}_1$. Let now $\Vec{\Delta} \coloneq \vec{n}_{\sigma^{(j)}_0}- \vec{n}_{\sigma^{(j)}_1}$ where $\vec{n}_{\sigma^{(j)}_0}$ and $\vec{n}_{\sigma^{(j)}_1}$ are the Bloch vectors of $\sigma^{(j)}_0$ and $\sigma^{(j)}_1$ respectively.  If we restrict to only stabilizer measurements, following the proof for Lemma~\ref{lemma: qubit fixed stab measurement no ancilla}, then for each $j$ the maximum of the trace  is
\begin{equation}
    \max_{M_{0|j}\in \textrm{STAB}}\text{Tr}(\Delta^{(j)} M_{0|j}) = \frac{1}{2}\|\vec{\Delta}^{(j)}\|_\infty,
\end{equation}
Therefore the maximum value of an $n\to1$ QRAC restricted to stabilizer measurements is 
\begin{equation}
   P_{g}^\textrm{STAB} = \frac{1}{2} +\frac{1}{4n}\sum_{j\in [n]}\|\vec{\Delta}^{(j)}\|_\infty.
\end{equation}
\end{proof}

\subsection{Proof of Result~\ref{result: simulation fidelity}}\label{appendix: result simulation fidelity}

\textit{\textbf{Result}~\ref{result: simulation fidelity} Let $V,U$ be qubit unitaries where \hbox{$U = C_1 T C_2 : C_1, C_2 \in \mathcal{C}_1$}. Then \
     \begin{equation}
         F(U,V) \leq P^{\rm stab, a}_{\rm suc}(\Delta_V, \ketbra{H}) 
     \end{equation}
     where $\Delta_V = V^\dagger(\ketbra{0}-\ketbra{1})V$. }
\begin{proof}
    Firstly, let 
    \begin{equation}
        U V^\dagger = \begin{bmatrix}
            w_{00} & w_{01} \\
            w_{10} & w_{11}
        \end{bmatrix}.
    \end{equation}
    Then
    \begin{align}
        F(U,V) &= \frac{1}{4} \vert \textrm{tr}\big[ U V^\dagger \big] \vert^2, \\
        &= \frac{1}{4} \vert w_{00} + w_{11} \vert^2 \\
        &\leq \frac{1}{4} \bigg[ \vert w_{00} \vert + \vert w_{11} \vert \bigg]^2 \\
        &= \frac{1}{4} \bigg[ \vert w_{00} \vert^2 + \vert w_{11} \vert^2 + 2 \vert w_{00} \vert~ \vert w_{11} \vert \bigg] \\
        &\leq \frac{1}{2}  \bigg[ \vert w_{00} \vert^2 + \vert w_{11} \vert^2 \bigg]  ~ ~ {\rm as} ~ ~ 2 \vert  a \vert~ \vert b \vert \leq \vert a \vert^2 + \vert b \vert^2.  ~ 
    \end{align}
    Now, consider performing QSD for $\Delta_V = V^\dagger \ketbra{0} V - V^\dagger \ketbra{1} V$ using the POVM $\big\{ U^\dagger \ketbra{0} U, U^\dagger \ketbra{1} U \big\}$. 
    This POVM can be physically implemented by applying a unitary $U$ before measuring in the computational basis. 
    The success probability here is then  
    \begin{align}
        P_{\rm suc}(U, V) &= \frac{1}{2}\textrm{tr}\big[ \big(U^\dagger \ketbra{0} U \big) \big(V^\dagger \ketbra{0} V) \big] + \frac{1}{2}\textrm{tr}\big[ \big(U^\dagger \ketbra{1} U \big) \big(V^\dagger \ketbra{1} V) \big] \\
        &= \frac{1}{2} \big[ \vert \bra{0} U V^\dagger \ket{0} \vert^2 + \vert \bra{1} U V^\dagger \ket{1} \vert^2 \big] \\
        &=  \frac{1}{2}  \bigg[ \vert w_{00} \vert^2 + \vert w_{11} \vert^2 \bigg],
    \end{align}
    meaning that $F(U,V) \leq P_{\rm suc}(U,V)$. Note, this is independent of any measurement restrictions and includes a potentially suboptimal assignment of measurement outcomes to states.  
    
    Now, assume that a QSD player is restricted to perform only stabilizer circuits, but has access to a single $H$-type magic state. Using adaptive stabilizer circuits, this $H$-type magic states can be converted into a single $T$ gate (magic state injection), where 
    \begin{equation}
        T = \begin{bmatrix}
            1 & 0 \\
            0 & e^{\frac{i \pi}{4}}
        \end{bmatrix}
    \end{equation}
    The players is therefore able to measure any POVM of the form $\big\{ U \ketbra{0} U^\dagger, U \ketbra{1} U^\dagger \big\}$ where $U \in \mathcal{C}_1$ or $U=C_1 T C_2$, where $C_1, C_2 \in \mathcal{C}_1$ i.e., $U$ is a unitary that contains at most one $T$ gate in its decomposition into the universal gate set Clifford + $T$. Therefore, as this is a viable measurement strategy, and Result~\ref{result: appending any qubit state with feedfoward} optimises over all possible measurement strategies, it holds that when performing QSD on $\Delta_V = V^\dagger \ketbra{0} V - V^\dagger \ketbra{1} V $ that  $P(U, V) \leq P^{\rm stab, a}_{\rm suc}(\Delta_V, \ketbra{H})$. It therefore follows that $F(U,V) \leq P^{\rm stab, a}_{\rm suc}(\Delta_V, \ketbra{H})$. The only assumption about $U$ is that it contained at most one $T$ gate in its decomposition into Clifford + $T$. Moreover, the right hand side of the inequality is independent of $U$. Hence, it holds that the fidelity of a unitary $U$ --- that contains a single $T$ state --- and any arbitrary unitary $V$, is upper bounded by ones ability to discriminate between $V^\dagger \ketbra{0} V$ and $V^\dagger \ketbra{1} V$ when using adaptive stabilizer circuits with access to a single $H$-type magic state. Note, this bound is tight if $V$, and therefore $V^\dagger$, contains only a single $T$ gate in its decomposition into Clifford + $T$. 
\end{proof}

\subsection{Proof of Lemma~\ref{lemma: general state discrimination}}\label{sm: proof of general state discrimination}
Consider performing QSD on the states $\rho, \sigma \in \mathcal{D}(\mathbb{C}^d)$ using the POVM $\{ M_i \}_{i=1}^{N}$, such that $M_i \geq 0~\forall~i$ and $\sum_{i=1}^N M_i = \mathbb{I}$. The following Lemma allows the success probability for perform QSD under the optimal coarse graining to be calculated. 
\\
\\
{\em {\bf Lemma}~\ref{lemma: general state discrimination} If using a POVM $\{M_i \}_{i=1}^N$ for perform QSD, the maximum success probability under all possible coarse grainings is
    \begin{equation}
        P_{\rm suc}(\Delta) = \frac{1}{2} + \frac{1}{4} \sum_{i=1}^N \vert \textnormal{tr}\big[ M_i \Delta \big] \vert.
    \end{equation}
}
\begin{proof}
    Each POVM element in $\{M_i\}_{i=1}^N$ will be assigned to either $\rho$ or $\sigma$ such that upon receiving the output associated to that POVM element the player will say they have the state $\rho$ or $\sigma$ respectively. The POVM element $M_j \in \{M_i\}_{i=1}^N$ will be assigned to the output $\rho$ if 
    \begin{equation}
        \textrm{tr}\big[M_j \rho \big] \geq \textrm{tr}\big[M_j \sigma \big],
    \end{equation}
    meaning that the probability of getting the outcome associated to $M_j$ is greater for $\rho$ then $\sigma$. This can equivalently be written as 
    \begin{equation}
        \textrm{tr}\big[ M_i \Delta \big] \geq 0,
    \end{equation}
    where $\Delta \coloneq \rho - \sigma$. The $N$ element POVM can then be coarse grained into two POVM elements:
    \begin{equation}
        M_\rho = \sum_i M_i : \textrm{tr}\big[ M_i \Delta \big] \geq 0 ~\forall~i
    \end{equation}
    and 
    \begin{equation}
        M_\sigma = \mathbb{I} - M_\rho,
    \end{equation}
    where the player outputs $\rho (\sigma)$ if they get the outcome associated to $M_\rho (M_\sigma)$. The POVM element $M_\rho$ can be summarised as 
    \begin{equation}
        M_\rho = \sum_i M_i S_i, \label{eq: coarse grained POVM}
    \end{equation}
    where 
    \begin{equation}
        S_i = \begin{cases}
            1 &\textrm{if}~ \textrm{tr}\big[ M_i \Delta \big] \geq 0 \\
            0 &\textrm{if}~ \textrm{tr}\big[ M_i \Delta \big] < 0. \label{eq: s equation}
        \end{cases}
    \end{equation}
    Now, using the two outcome POVM $\{ M_\rho, \mathbb{I}-M_\rho \}$, the success probability for QSD using $\{ M_i \}_{i=1}^N$ is given by 
    \begin{equation}
        \begin{split}
            P_{\rm suc}(\Delta) &= \frac{1}{2} \textrm{tr}\big[ M_\rho \rho \big] + \frac{1}{2} \textrm{tr} \big[ (\mathbb{I}-M_\rho) \sigma \big] \\
            &=\frac{1}{2} + \frac{1}{2} \textrm{tr}\big[M_\rho \Delta \big].
        \end{split}
    \end{equation}
    By inputting Eq.\eqref{eq: coarse grained POVM}, this becomes
    \begin{equation}
        P_{\rm suc}(\Delta) = \frac{1}{2} + \frac{1}{2} \sum_j \textrm{tr}\big[M_i\Delta \big]S_i.
    \end{equation}
    Now, using Eq.\eqref{eq: s equation}, it can be seen that 
    \begin{equation}
        \textrm{tr}\big[M_j \Delta\big] S_j = \max \big\{ \textrm{tr}\big[ M_j \Delta \big], 0 \big\}.
    \end{equation}
    Then, using the fact that 
    \begin{equation}
        \max \{ a, b \} = \frac{ a + b + \vert a - b \vert}{2},
    \end{equation}
    the success probability can be written as
    \begin{align}
        P_{\rm suc}(\Delta) &= \frac{1}{2} + \frac{1}{2}~\sum_i \frac{\vert \textrm{tr}\big[ M_i \Delta \big] \vert + \textrm{tr}\big[ M_i \Delta]}{2} \\
        &= \frac{1}{2} + \frac{1}{4}~ \sum_i \vert \textrm{tr}\big[ M_i \Delta \big] \vert + \frac{1}{4} \sum_i \textrm{tr}\big[ M_i \Delta \big] \\
        &=  \frac{1}{2} + \frac{1}{4}~ \sum_i \vert \textrm{tr}\big[ M_i \Delta \big] \vert + \frac{1}{4} \textrm{tr}\bigg[ \big( \sum_i M_i \big) \Delta \bigg] \\
        &= \frac{1}{2} + \frac{1}{4}~ \sum_i \vert \textrm{tr}\big[ M_i \Delta \big]  \vert + \frac{1}{4} \textrm{tr}\big[ \Delta \big] \\
        &= \frac{1}{2} + \frac{1}{4}~  \sum_i \vert \textrm{tr}\big[ M_i \Delta \big] \vert,
    \end{align}
    where we have used the fact that $\sum_i M_i = \mathbb{I}$ and $\textrm{tr}\big[ \Delta \big] = 0$. This completes the proof.
\end{proof}

\section{QSD under Adaptive Stabilizer Circuits}\label{SM: counter example}
Here, we show an example of a QSD tasks with adaptive stabilizer circuits in which a stabilizer state in an ancilla allows the minimum error success probability to be improved as compared to fixed stabilizer circuits. 

Consider that the referee gives one of the following two two-qubit states with equal probability to the player: 
\begin{equation}
    \begin{split}
        \rho &= \frac{1}{2}(\ketbra{00}_{A_1A_2} + \ketbra{1+}_{A_1A_2}) \\
        \sigma &= \frac{1}{2}(\ketbra{01}_{A_1A_2} + \ketbra{1-}_{A_1A_2}),
    \end{split}
\end{equation}
where $\rho, \sigma \in \mathcal{D}(\mathcal{H}^{A_1}_1 \otimes \mathcal{H}^{A_2}_1)$. Hence,

\begin{equation}
    \begin{split}
        \Delta &= \rho - \sigma \\
        &= \frac{1}{2} \bigg( \ketbra{0} \otimes Z + \ketbra{1} \otimes X \bigg) \\
        &= \frac{1}{4} \bigg( \mathbb{I}_{A_1}Z_{A_2} + {Z}_{A_1}Z_{A_2} + \mathbb{I}_{A_1}X_{A_2} - {Z}_{A_1}X_{A_2} \bigg).
    \end{split}
\end{equation}
Now, consider appending $\ketbra{0} \in {\rm STAB}(1)$ in a space $\mathcal{H}_1^B$, and then performing $C_{A_1}\mathbb{I}_{A_2}X_B$, i.e., controlled-X between $A_1$ and $B$, before measuring $B$ in the computational basis. If the outcome of this computational basis measurement is $0$, then apply $\mathbb{I}_{A_1A_2}$; if the outcome is $1$, apply $\mathbb{I}_{A_1}H_{A_2}$. Finally, measure $A_1$ and $A_2$ in the computational basis. 

The effective POVM elements of such an adaptive circuit are of the form 
\begin{equation}
    \big\{ T^{U, \omega, \{V\}}_{a,b} = \textrm{tr}_B\big[ U^\dagger \big( V_b^\dagger \ketbra{a} V_b \otimes \ketbra{b} \big) U (\mathbb{I}_A \otimes \omega_B) \big] \big\}_{a,b}~,
\end{equation}
which in this case are specifically 
\begin{equation}
    \begin{split}
        T^{C_{A_1}\mathbb{I}_{A_2}X_B, ~\ketbra{0}, ~\{\mathbb{I}_{A_1A_2}, \mathbb{I}_{A_1}H_{A_2}\}}_{00,0} &= \ketbra{00} \\
        T^{C_{A_1}\mathbb{I}_{A_2}X_B, ~\ketbra{0}, ~\{\mathbb{I}_{A_1A_2}, \mathbb{I}_{A_1}H_{A_2}\}}_{01,0} &= \ketbra{01} \\
        T^{C_{A_1}\mathbb{I}_{A_2}X_B, ~\ketbra{0}, ~\{\mathbb{I}_{A_1A_2}, \mathbb{I}_{A_1}H_{A_2}\}}_{10,0} &= 0 \\
        T^{C_{A_1}\mathbb{I}_{A_2}X_B, ~\ketbra{0}, ~\{\mathbb{I}_{A_1A_2}, \mathbb{I}_{A_1}H_{A_2}\}}_{11,0} &= 0 \\
        T^{C_{A_1}\mathbb{I}_{A_2}X_B, ~\ketbra{0}, ~\{\mathbb{I}_{A_1A_2}, \mathbb{I}_{A_1}H_{A_2}\}}_{00,1} &= 0 \\
        T^{C_{A_1}\mathbb{I}_{A_2}X_B, ~\ketbra{0}, ~\{\mathbb{I}_{A_1A_2}, \mathbb{I}_{A_1}H_{A_2}\}}_{01,1} &= 0 \\
        T^{C_{A_1}\mathbb{I}_{A_2}X_B, ~\ketbra{0}, ~\{\mathbb{I}_{A_1A_2}, \mathbb{I}_{A_1}H_{A_2}\}}_{10,1} &= \ketbra{1+} \\
        T^{C_{A_1}\mathbb{I}_{A_2}X_B, ~\ketbra{0}, ~\{\mathbb{I}_{A_1A_2}, \mathbb{I}_{A_1}H_{A_2}\}}_{11,1} &= \ketbra{1-}, \\
    \end{split}
\end{equation}
meaning that this circuit will lead to a total effective POVM of the form
\begin{equation}
    \big\{ \ketbra{00}, \ketbra{01}, \ketbra{1+}, \ketbra{1-} \big\},
\end{equation}
being measured. This is a basis of stabilizer states, but is not a stabilizer basis. Employing Lemma~\ref{lemma: general state discrimination}, this POVM can then be seen to be able to perfectly distinguish between $\rho$ and $\sigma$, such that $P^{\rm stab, a}_{\rm suc}(\Delta, \ketbra{0})=1$. 

Now, employing Lemma~\ref{lemma: appendix : two qubit min error sucess prob} and using the Pauli decomposition of $\Delta$ given above, it can be seen that 
\begin{equation}
    \begin{split}
        P^{\rm stab}_{\rm suc}(\Delta) &= \frac{1}{2} + \frac{1}{2} \times \bigg(2 \times \frac{1}{4}\bigg), \\
        &= \frac{3}{4}.
    \end{split}
\end{equation}
Hence, $P^{\rm stab}_{\rm suc}(\Delta) < P^{\rm stab, a}_{\rm suc}(\Delta, \ketbra{0})$, meaning that there exists an example where $P^{\rm stab}_{\rm suc}(\Delta) < P^{\rm stab, a}_{\rm suc}(\Delta, \omega)$ with $\omega \in {\rm STAB}$. Therefore, under adaptive stabilizer circuits even access to stabilizer ancilla can improve the minimum error success probability, meaning that Result~\ref{theorem: STABs don't increase sucess prob} does not in general extend to adaptive stabilizer circuits.

As a final point, we note that if one measured $Z$ on the first qubit, and then measured $Z$ on the second if the output from the first measurement was $0$, and $X$ if the output from the first measurement was $1$, then they would achieve unity in this QSD task. However, we here considered all adaptivity to be in the ancillary system, and not within the primary system. Hence, this falls outside our considered framework, and is left for future work.

\end{document}